\newtheoremstyle{standard}
{16pt} 
{16pt} 
{} 
{} 
{\bfseries}
{} 
{ } 
{{\thmname{#1~}}{\thmnumber{#2.}}\thmnote{~(#3)}} 
\newtheoremstyle{kursiv}
{16pt} 
{16pt} 
{\itshape} 
{} 
{\bfseries}
{} 
{ } 
{{\thmname{#1~}}{\thmnumber{#2.}}\thmnote{~(#3)}} 
\theoremstyle{standard}
\newtheorem{defn}{Definition}[section]
\newtheorem{rem}[defn]{Remark}
\newtheorem{setup}[defn]{}
\theoremstyle{kursiv}
\newtheorem{thm}[defn]{Theorem}
\newtheorem{prop}[defn]{Proposition}
\newtheorem{cor}[defn]{Corollary}
\newtheorem{lem}[defn]{Lemma}
\newtheorem{conj}[defn]{Conjecture}
\newcommand{\Lf}{\ensuremath{\mathbf{L}}}
\newcommand{\bC}{\ensuremath{\mathbb{C}}}
\newcommand{\eC}{\ensuremath{\hat{\mathbb{C}}}}
\newcommand{\R}{\ensuremath{\mathbb{R}}}
\newcommand{\N}{\ensuremath{\mathbb{N}}}
\newcommand{\Z}{\ensuremath{\mathbb{Z}}}
\newcommand{\SSS}{\ensuremath{\mathbb{S}}}
\newcommand{\coloneq}{\colonequals}
\newcommand{\ev}{\ensuremath{\operatorname{ev}}}
\DeclareMathOperator{\one}{\mathbf{1}}
\DeclareMathOperator{\Diff}{Diff}
\DeclareMathOperator{\Evol}{Evol}
\DeclareMathOperator{\id}{id}
\DeclareMathOperator{\Lor}{\mathrm{SO}^+ (3,1)}
\DeclareMathOperator{\Ncvs}{\mathcal{N}_0^{\mathrm{vs}}}
\DeclareMathOperator{\NUovs}{NU_0^{\mathrm{vs}}}
\DeclareMathOperator{\Aut}{Aut}
\newcommand{\Frechet}{Fr\'{e}chet}
\newcommand{\LB}[1][\cdot \hspace{1pt} , \cdot]{\left[\hspace{1pt} #1 \hspace{1pt} \right]}
\DeclareMathOperator{\evol}{\mathrm{evol}}
\DeclareMathOperator{\BMS}{BMS}
\DeclareMathOperator{\NU}{NU}
\newcommand{\scri}{\mathscr{I}}
\newcommand{\conext}{\widehat{M}}
\newcommand{\conmet}{\hat{g}}
\newcommand{\z}{\mathbf{z}}
\begin{document}
\title{Lie Theory for Asymptotic Symmetries in General Relativity: The NU Group}
\author{David Prinz\footnote{Humboldt-Universität zu Berlin, Germany: \href{mailto:prinz@math.hu-berlin.de}{prinz@math.hu-berlin.de}}\hspace{2mm} and Alexander Schmeding\footnote{Nord universitetet i Levanger, Norway: \href{mailto:alexander.schmeding@nord.no}{alexander.schmeding@nord.no}}}
\date{June 29, 2022}
{\let\newpage\relax\maketitle}

\begin{abstract}
	We study the Newman--Unti (NU) group from the viewpoint of infinite-dimensional geometry. The NU group is a topological group in a natural coarse topology, but it does not become a manifold and hence a Lie group in this topology. To obtain a manifold structure we consider a finer Whitney-type topology. This turns the unit component of the NU group into an infinite-dimensional Lie group. We then study the Lie theoretic properties of this group. Surprisingly, the group operations of the full NU group become discontinuous, whence the NU group does not support a Lie group structure. The NU group contains the Bondi--Metzner--Sachs (BMS) group as a subgroup, whose Lie group structure was constructed in a previous article. It is well known that the NU Lie algebra splits into a direct sum of Lie ideals of the Lie algebras of the BMS group and conformal rescalings of scri. However, the lack of a Lie group structure on the NU group implies that the BMS group cannot be embedded as a Lie subgroup therein.
\end{abstract}


\textbf{Keywords:} Bondi--Metzner--Sachs group, Newman--Unti group, asymptotically flat spacetime, infinite-dimensional Lie group, analytic Lie group, smooth representation, Trotter product formula

\medskip

\textbf{MSC2020:}
22E66 (primary mathematics), 
22E65 (secondary mathematics); 
83C30 (primary physics), 
83C35 (secondary physics) 

\tableofcontents

\section{Introduction and statement of results} 

In \cite{Prinz_Schmeding_1}, we studied the Lie group structure of the Bondi--Metzner--Sachs (BMS) group \cite{Bondi_VdBurg_Metzner,Sachs} from the viewpoint of infinite-dimensional geometry. The BMS group is defined as the subgroup of spacetime-diffeomorphisms that preserve asymptotic flatness in the sense of Bondi, Van der Burg, Metzner and Sachs. Asymptotically flat spacetimes are ideally suited to study gravitational waves, which were first experimentally verified at LIGO in 2016 \cite{Abbott_et-al}, and their corresponding symmetry groups provide useful insight into the gravitational \(S\)-matrix via soft-scattering theorems for the corresponding Feynman rules \cite{Weinberg,Strominger,He_et-al}, cf.\ \cite{Prinz} and the references therein. In this article we extend our investigations to the Newman--Unti  (NU) group \cite{Newman_Unti}, which turns out to be much more involved. On the level of Lie algebras it is a well-known fact that the corresponding Lie algebras of the BMS and NU groups are related via the following direct sum decomposition \cite{Barnich:2009se,Barnich:2010eb}
\begin{equation}
	\mathfrak{nu}_4 \cong \mathfrak{bms}_4 \oplus \mathfrak{conf}_\scri \, , \label{eqn:decomposition_nu-bms-conf}
\end{equation}
where \(\mathfrak{nu}_4\) and \(\mathfrak{bms}_4\) denote the respective Lie algebras for the NU and BMS groups in 4 dimensions of spacetime and \(\mathfrak{conf}_\scri\) denotes the abelian Lie algebra of conformal rescalings on scri \(\scri\), the corresponding null boundary of the spacetime. Interestingly, this decomposition is not compatible with the smooth structure and does therefore not carry over to the global Lie group level.  More precisely, there is a topological obstruction which is similar to the situation for diffeomorphism groups of non-compact manifolds. There, the Lie algebra of the diffeomorphism group needs to be defined as the Lie algebra of compactly supported vector fields and is not equal to the Lie algebra of all vector fields. Unfortunately, this restriction is not compatible with the decomposition of equation~\eqref{eqn:decomposition_nu-bms-conf}. In the following, we will briefly compare the BMS and NU groups from a physical and from a mathematical perspective. Furthermore, we refer to our previous article \cite{Prinz_Schmeding_1} for a general introduction to asymptotic symmetry groups in General Relativity and infinite-dimensional Lie groups.\smallskip

\textbf{Acknowledgments:} The authors wish to thank G.~Barnich and H.~Gl\"{o}ckner for helpful discussions on the subject of this work. Furthermore, we thank H.~Gl\"{o}ckner for access to his manuscript \cite{hgsemidirect}. Additionally, we wish to thank the referees for their careful proofreading and useful feedback. DP is additionally supported by the University of Potsdam.

\subsection{Physical motivation}

Asymptotic symmetry groups capture the symmetries of physical models in General Relativity with a finite matter content. Such models are given e.g.\ by isolated stars, but also complete universes with a finite matter distribution. Therefore, asymptotically flat spacetimes are ideally suited to study gravitational waves - a topic that has regained a lot of interest after the experimental verification of gravitational waves at LIGO in 2016 \cite{Abbott_et-al}. However, there are different approaches to asymptotically flat spacetimes, each of which leading to different asymptotic symmetry groups, see e.g.\ \cite{Friedrich:2017cjg,Ruzz20} for overviews and \cite{Prinz_Schmeding_1} for a general introduction. Specifically, there are coordinate based approaches, such as the original BMS and NU approach \cite{Bondi_VdBurg_Metzner,Sachs,Newman_Unti}, as well as the geometric approach by Penrose \cite{Penrose_1,Penrose_2,Penrose_3,Penrose_4}. We refer to \cite{SaSaW75} for a characterization of the BMS and NU groups in terms of the geometric approach of Penrose. Physically, the BMS group and the NU group differ in the definition of the corresponding `supertranslations', given for 4-dimensional spacetimes with vanishing cosmological constant via the function spaces \(\mathcal{S} \coloneq C^\infty (\SSS^2)\) for the BMS group and \(\mathcal{N} \subset C^\infty (\mathbb{R} \times \SSS^2)\) for the NU group, where the subset is characterized in \Cref{defn:mathcalN}. Geometrically, the difference lies in the action of the corresponding groups on scri \(\scri \coloneq \mathbb{R} \times \SSS^2\) (cf.\ \cite[Subsection 2.1]{Prinz_Schmeding_1} and Subsection~\ref{ssec:Penrose_conformal_extention} of the present article), the null-boundary of the aforementioned spacetimes, and its induced (degenerate) metric. More precisely, the NU group allows for conformal rescalings of the induced metric, whereas the BMS group is additionally subjected to the invariance of a specific (2,2)-tensor, defining a `strong conformal geometry', \cite{Penrose1974,SaSaW75}.

\subsection{Mathematical motivation}

The NU group is known to be an ``infinite-dimensional group''. This directly motivates the question: Is it an infinite-dimensional Lie group? The answer to this depends on the definition of infinite-dimensional Lie group. The present articles concept of an infinite-dimensional Lie group is a group with a manifold structure modeled on a locally convex space, such that the group operations are smooth. As ordinary calculus breaks down beyond the realm of Banach spaces (and the NU group can not be modeled on a Banach space), smoothness here refers to smoothness in the sense of Bastiani calculus (see \cite{S2021} for more information).
Furthermore we refer to \cite[Appendix A]{Prinz_Schmeding_1} for a concise overview on infinite-dimensional Lie groups. Specifically, we recalled that while the Lie group exponential provides a local diffeomorphism (exponential coordinates) between finite-dimensional Lie groups and their Lie algebras, this correspondence breaks down in general in the infinite-dimensional case. As an example, the Lie group exponential of the diffeomorphism group fails to be locally surjective. In this article, we encounter another complication: For the BMS group, it was sufficient to consider function spaces on compact manifolds. The NU group is modeled on spaces of functions on non-compact manifolds. 
This might seem like a minor issue, but it leads to major complications stemming from the function space topologies. Again this complication is well known for the diffeomorphism group of a non-compact manifold \cite{Sch15}. To illustrate this, recall that for a compact manifold \(K\) the Lie algebra of the Lie group $\Diff (K)$ is the Lie algebra of vector fields on $K$ with the negative of the usual bracket of vector fields. The natural topology on this Lie algebra is the compact open $C^\infty$-topology. This topology allows one to simultaneously control a function and up to finitely many of its derivatives on any given compact set.  
The Lie group exponential of $\Diff (K)$ is the map which sends a vector field to its time $1$-flow and it is not difficult to see that it is smooth with respect to the compact open $C^\infty$-topology.

Switching now to a non-compact manifold \(N\), consider the Lie algebra of vector fields on $N$ with the compact-open $C^\infty$-topology. Since $N$ is non-compact, this topology does not control the behavior of vector fields at infinity. Indeed, it is too coarse in the non-compact case. Extrapolating from the compact case, also the Lie group exponential is problematic: On a non-compact manifold there are vector fields whose flow explodes in arbitrarily short time, so the flow map is simply not available on the algebra of all vector fields. 
All of these problems can be solved by switching to the Lie algebra of compactly supported vector fields on $N$ with a finer topology. This topology, called the fine-very strong topology, yields enough control to work with the restricted algebra. Moreover, flows exist now and give rise to a smooth Lie group exponential turning the group $\Diff (N)$ into a Lie group.

Returning to the NU group, the problems we face stem from the well-known difficulties for diffeomorphism groups which were just outlined. The BMS group is a product of an infinite-dimensional abelian group (indeed a vector space of functions on a compact manifold) with a finite-dimensional Lie group. For the NU group, the abelian part is replaced by a parameterized version of the diffeomorphism group of a non-compact manifold. So one should expect that all problems associated to diffeomorphism groups in the passage from compact to non-compact domains are again present.

\subsection{Original results of the present article}

In the present article, we investigate whether the NU group can be made an infinite-dimensional Lie groups in the sense of Milnor \cite{Milnor}. Specifically, we use the so-called \emph{Bastiani calculus}, an introduction to which can be found e.g.\ in \cite[Appendix B]{Prinz_Schmeding_1} and the book \cite{S2021}. Let us first recall the algebraic structure of the NU group. We denote by $\Lor$ the orthochronous Lorentz group and note that it is a finite-dimensional Lie group which acts on the function space $C^\infty (\R \times \SSS^2)$. This action will be recalled later; for now, we only mention that the action restricts to a certain (non-commutative) group $\mathcal{N} \subseteq C^\infty (\R \times \SSS^2)$ and the NU group is then the semidirect product
\begin{equation}\label{eq:NUsemidirect}
\NU = \mathcal{N} \rtimes \Lor  \, .
\end{equation}
There is a canonical identification $I \colon \BMS \rightarrow \NU$ identifying the BMS group as a subgroup of the NU group. Note that the semidirect product \eqref{eq:NUsemidirect} is topologically much more involved compared to the BMS case. The reason is that the functions in $\mathcal{N}$ are defined on a non-compact manifold, while functions in $\mathcal{S}$ are defined on a compact domain. 
The non-compactness of \(\R \times \SSS^2\) implies that $C^\infty (\R \times \SSS^2)$ admits at least three qualitatively different choices of function space topologies (the compact open $C^\infty$-topology, the Whitney topologies and the fine very strong topology). Only two of these topologies provide additionally either a topological vector space structure (compact open $C^\infty$-topology) or a manifold structure (fine very strong topology) on $C^\infty (\R \times \SSS^2)$.

As our first result, we prove that $\mathcal{N}$ and also the NU group becomes a topological group in the compact open $C^\infty$-topology. The inclusion of the BMS group becomes then a morphism of topological groups. The compact open $C^\infty$-topology is too coarse to construct a manifold structure on $\mathcal{N}$, whence there is no Lie group structure to be gained here.

Passing to the fine very strong topology, we construct a manifold structure on $\mathcal{N}$ which turns the group $\mathcal{N}$ into a Lie group. Note that the model space of $\mathcal{N}$ is neither a Banach nor a \Frechet\ space, whence Bastiani calculus is needed to make sense of this. We then identify the Lie algebra of $\mathcal{N}$ and establish that $\mathcal{N}$ is a regular Lie group. Recall that a Lie group is regular if a certain kind of ordinary differential equations can be solved on the Lie group and the solution depends smoothly on parameters, cf.\ e.g.\ the introduction in \cite{Prinz_Schmeding_1}. Regularity is a prerequisite for advanced tools in (infinite-dimensional) Lie theory.
Surprisingly, the action of the Lorentz group on $\mathcal{N}$ is discontinuous with respect to the fine very strong topology, so the NU group cannot be turned into a Lie group if $\mathcal{N}$ carries this topology. However, endowing the function space $\mathcal{N}$ with the fine very strong topology, the action of $\Lor$ on the identity component of $\mathcal{N}$ becomes smooth. To distinguish the identity component in the fine very strong topology from the (larger) identity component in the topological group $\mathcal{N}$ with the compact open $C^\infty$-topology, we will write $\Ncvs$. We prove that $\Ncvs$ forms a Lie group. Then we consider the NU group as a manifold (where the function space part $\mathcal{N}$ is again endowed with the fine very strong topology). In this topology the connected component of identity $\NUovs$ (not to be confused with the larger connected component of the identity in the topological group $\NU$ with the compact open $C^\infty$-topology) becomes a Lie group. 
As a consequence of the semidirect product structure, $\NUovs$ is a regular Lie group. Moreover, we establish for $\Ncvs$ and $\NUovs$ that they are not real analytic Lie groups (whence the Baker--Campbell--Hausdorff series does not yield a model for the group structure). However, our analysis shows that the strong Trotter and the strong commutator properties hold for both groups. 

\subsection{A recap of the Bondi--Metzner--Sachs group}

We have established in \cite{Prinz_Schmeding_1} that the BMS group is an infinite-dimensional Lie group modeled on a \Frechet~space. This follows directly from its semidirect product form \cite{Sachs,McCar72a}
\begin{equation}
\BMS = \mathcal{S} \rtimes \Lor \, ,
\end{equation}
where $\mathcal{S} =C^\infty (\SSS^2)$ is a function space (viewed as an abelian Lie group) and  \(\Lor\) denotes the Lorentz group (which is in particular a finite-dimensional Lie group) together with the smoothness of the group action on the infinite-dimensional space of supertranslations $\mathcal{S}$ \cite{Prinz_Schmeding_1}. We found that the BMS group is regular in the sense of Milnor \cite{Milnor}. As a consequence, we obtain the validity of the Trotter product formula and the commutator formula on the BMS group. 
We remark that these formulae are important tools in the representation theory of Lie groups. Additionally, we found a more surprising results of our investigation: The BMS group is \textbf{not} an analytic Lie group. In a nutshell, the reason for this is that the group product incorporates function evaluations of smooth (but not necessarily analytic) mappings. In particular, this entails that
\begin{itemize}
 \item the well known Baker--Campbell--Hausdorff series does not provide a local model for the Lie group multiplication, and
 \item there can not be a complexification of the BMS group which continues the real BMS group multiplication as a complex (infinite-dimensional) Lie group. (See also \cite{McCar92} on complexifications of the BMS group.)
\end{itemize}
Moreover, we remark that due to this defect either the Lie group exponential does not provide a local diffeomorphism onto an identity neighborhood or the BCH-series can not converge on any neighborhood of $0$ in the Lie algebra.

Finally, we have also discuss the case of the generalized BMS group (or gBMS), which can be identified with the semidirect product $\mathcal{S} \rtimes \Diff (\SSS^2)$, and behaves quite similarly (when it comes to the Lie theory) to the BMS group. 

\section{Asymptotically flat spacetimes}

We start this article with an introduction to different definitions of asymptotic flatness. This was started in \cite[Section 2]{Prinz_Schmeding_1} with a particular emphasis on the coordinate-wise approach of Bondi et al.\ and Sachs. In this article, we briefly recall the geometrical construction due to Penrose and then discuss the coordinate-wise construction of Newman and Unti. We refer to \cite{Ashtekar,Friedrich:2017cjg} for excellent overview articles.

\subsection{Penrose's conformal extension} \label{ssec:Penrose_conformal_extention}

In \cite[Subsection 2.1]{Prinz_Schmeding_1} we have reviewed \emph{Penrose's conformal extension}, which we now briefly recall: In this geometrical approach the `physical spacetime' \((M,g)\) gets embedded into its so-called `conformal extension' \(\big ( \conext,\conmet \big )\), that is the spacetime \((M,g)\) together with a boundary \(\scri\), called scri, that represents the points `at infinity', i.e.\
\begin{equation}
	\conext \cong M \sqcup \scri \, ,
\end{equation}
such that the embedding \(\iota \colon M \to \conext\) is a conformal diffeomorphism. Scri \(\scri\) consists of three-dimensional components representing lightlike infinity \(\scri_\pm\) as well as three points, representing timelike and spacelike infinity. In particular, the two metrics \(g\) and \(\conmet\) are conformally related via \(\iota\), i.e.\
\begin{equation}
	\iota_* g \equiv \varsigma^2 \conmet \, ,
\end{equation}
where \(\varsigma \in C^\infty \big ( \conext \big )\) is a smooth function on the conformal extension. If this construction is possible, the spacetime \((M,g)\) is called asymptotically simple. Furthermore, if in addition the Ricci tensor vanishes in a neighborhood of \(\scri\), the spacetime is called asymptotically empty. Using this construction, the BMS and NU groups can be seen as diffeomorphisms acting on \(\scri\), cf.\ \cite{SaSaW75}. Moreover, we recall two important classical results that characterize the geometry of asymptotically simple spacetimes:
\begin{itemize}
	\item If \((M,g)\) is additionally asymptotically empty, then it is globally hyperbolic \cite[Proposition 6.9.2]{Hawking_Ellis} (and thus parallelizable in 4 dimensions of spacetime, cf.\ \cite[Proposition 2.3]{Prinz_Schmeding_1}).
	\item If the cosmological constant is vanishing (\(\Lambda = 0\)), then the two components representing lightlike infinity \(\scri_\pm \subset \scri\) are both homeomorphic to \(\mathbb{R} \times \mathbb{S}^2\) \cite[Corollary 2]{Newman}.
\end{itemize}
Finally, we also mention the viewpoint from symplectic geometry to the above concepts via the Hamiltonian formalism of General Relativity \cite{Ashtekar_Streubel}.

\subsection{Newman and Unti's coordinate-wise definition} \label{ssec:NU_coortinate_definition}

The original approach to this subject was due to the pioneering works of Bondi, van der Burg and Metzner \cite{Bondi_VdBurg_Metzner} and Sachs \cite{Sachs} via a coordinate-wise definition, which are described in \cite{Prinz_Schmeding_1}. In this article, we focus on the coordinate approach of Newman and Unti \cite{Newman_Unti} in the form of \cite{Barnich:2011ty}, where also an explicit relation between the BMS and NU gauges is discussed. As turned out later using Penrose's conformal extension, asymptotically flat spacetimes are parallelizable. Thus, the following coordinate functions can be defined globally, modulo possible singularities. Furthermore, we remark that they are constructed for spherically symmetric situations, which motivates the use of spherical coordinates for the spatial submanifold:

\begin{defn}[NU coordinate functions] \label{defn:bms_coordinates}
	Let \((M,g)\) be an asymptotically simple spacetime with globally defined coordinate functions \(x^\alpha \colon M \to \mathbb{R}^4\), denoted via \(x^\alpha \equiv (t,x,y,z)\). Then we introduce the so-called NU coordinate functions \(y^\alpha \colon M \to \mathbb{R} \times \mathcal{I} \times \mathbb{S}^2\), where \(\mathcal{I} \subseteq \mathbb{R}\) is an open subinterval, denoted via \(y^\alpha \equiv (u,\varrho,\vartheta,\varphi)\), as follows: The first coordinate \(u\) is fixed and serves as a label for null surfaces, the second coordinate \(\varrho (u)\) is an affine parameter for null geodesics and the two remaining ones \(z^a \equiv (\vartheta,\varphi)\) are angular coordinates. Given these coordinates, the metric in the Newman and Unti approach can be expressed as follows:
	\begin{subequations}
	\begin{align}
	g_{\mu \nu} \dif x^\mu \otimes \dif x^\nu & \equiv W \dif u \otimes \dif u - \left ( \dif u \otimes \dif \varrho + \dif \varrho \otimes \dif u \right ) \\ & \phantom{\equiv} + \varrho^2 h_{a b} ( \dif z^a - U^a \dif u ) \otimes ( \dif z^b - U^b \dif u ) \, ,
	\intertext{where \(h_{a b}\) is the metric on the (deformed) unit sphere, which we decompose as follows}
	\begin{split}
	h_{a b} \dif z^a \otimes \dif z^b & \equiv h^{(1)}_{a b} \dif z^a \otimes \dif z^b + \frac{1}{\varrho} h^{(2)}_{a b} \dif z^a \otimes \dif z^b + \scriptstyle \mathcal{O} \displaystyle (\varrho^{-1}) \, , \label{eqn:asymptotics_metric_unit-sphere}
	\end{split}
	\end{align}
	\end{subequations}
	where \(h^{(1)}_{a b}\) is conformally flat and \(h^{(2)}_{a b}\) is traceless with respect to \(h^{(1)}_{a b}\), i.e.\ \({h^{(1)}}^{a b} h^{(2)}_{a b} = 0\). Here, we have expressed the metric degrees of freedom via a real function on the spacetime \(W \in C^\infty ( M, \mathbb{R} )\), a vector field on the unit sphere \(U \in \mathfrak{X} ( \mathbb{S}^2 )\) and a metric on the unit sphere \(h \in \operatorname{Met} (\mathbb{S}^2 )\).
\end{defn}

\begin{defn}[NU asymptotic flatness] \label{defn:nu_asymptotic_flatness}
	Given the coordinate functions from \defnref{defn:bms_coordinates} with the angular functions \(z^a \equiv (\vartheta,\varphi)\) transformed into stereographic coordinates \(\zeta \coloneq \cot (\vartheta / 2) \exp (\mathrm{i} \varphi)\) and \(\overline{\zeta} \coloneq \cot (\vartheta / 2) \exp (- \mathrm{i} \varphi)\), the spacetime \((M,g)\) is called asymptotically flat in the sense of NU, if the following relations are satisfied \cite{Barnich:2011ty}:
	\begin{align} \label{eqns:fall-off_properties}
	W = - 2 \varrho \partial_u k + 4 \exp (- 2 k) \partial_\zeta \partial_{\overline{\zeta}} k + \mathcal{O} (\varrho^{-1}) \, , \quad U^a = \mathcal{O} (\varrho^{-2})
	 \end{align}
	 and Equation~\eqref{eqn:asymptotics_metric_unit-sphere}, where \(k(u, \vartheta, \varphi)\) is the conformal factor such that \(h^{(1)}_{a b} \dif z^a \otimes \dif z^b \equiv \exp (2 k) \dif \zeta \otimes \dif \overline{\zeta}\).
\end{defn}

\begin{rem}
	We emphasize that an explicit relation between the BMS gauge and the NU gauge has been worked out in \cite[Section 4]{Barnich:2011ty}.
\end{rem}

\section{Lie theory for the NU group}

Asymptotic symmetry groups are subgroups of the diffeomorphism group that preserve the chosen boundary condition and gauge fixing. We focus in this article on the NU group. Our aim is to establish (infinite-dimensional) Lie group structures on the NU group . For readers who are not familiar with calculus beyond Banach spaces, we have compiled the basic definitions in \cite[Appendix A]{Prinz_Schmeding_1} (and we suggest to review them before continuing). Furthermore, as the NU group is an extension of the (perhaps more well known) Bondi--Metzner--Sachs (BMS) group, let us recall some facts and notation on the BMS group from \cite{Prinz_Schmeding_1}.

Let  $\mathcal{S} \coloneq C^\infty (\SSS^2) \coloneq C^\infty (\SSS^2,\R)$ be the abelian group of supertranslations and $\Lor$ the orthochronous Lorentz group $\Lor$. Then $\Lor$ acts by conformal transformations on the sphere $\SSS^2$. Furthermore, we can identify elements in $\Lor$ with M\"{o}bius transformations. Recall that a M\"{o}bius transformation admits a matrix representation $\Lambda_f = \begin{bmatrix} a & b \\ c&d\end{bmatrix}$ for $f \in \Lor$ which acts conformally on the Riemann sphere $\eC \cong \SSS^2$. The conformal factor of the transformation $f$ can thus be expressed as
\begin{equation}\label{conffact}
 K \colon \Lor \times \eC \rightarrow ]0,\infty[ ,\quad K_f(\zeta) \coloneq \frac{1+\lVert \zeta\rVert^2}{\lVert a\zeta + b\rVert^2+\lVert c\zeta + d\rVert^2}, \Lambda_f = \begin{bmatrix} a & b \\ c & d\end{bmatrix}.
\end{equation}
Identifying the sphere $\SSS^2$ with the Riemann sphere, we obtain a smooth group action
$$\sigma \colon \mathcal{S} \times \Lor \rightarrow \mathcal{S} , (f,\alpha) \mapsto K_f(\cdot)^{-1} \cdot \alpha \circ f$$
and the BMS group is the semidirect product $\BMS = \mathcal{S} \rtimes \Lor$ with respect to this action. To spell it out explicitly, the group product of the BMS group is 
  $$(F, \phi) (G,\psi) = (F + \sigma(G,\phi), \phi \circ \psi) = (F + K_\phi^{-1} \cdot G \circ \phi, \phi \circ \psi).$$
In \cite{Prinz_Schmeding_1} it was shown that this structure turns the BMS group into an infinite-dimensional regular Lie group.

\subsection{General constructions}

Before we begin, let us recall several general constructions which will be used throughout the following sections. 
We will encounter spaces of differentiable mappings as infinite-dimensional manifolds. Let us repeat some important definitions and properties of these manifolds. In the following $M,N$ will always denote smooth paracompact manifolds and  $C^k(M,N)$ the set of $k$-times continuously differentiable mappings from $M$ to $N$ where $k \in \N_0 \cup \{\infty\}$.

\begin{setup}[Compact open $C^k$-topology]\label{prelim:ck-top}
 If nothing else is said, we topologize $C^k (M,N)$ with the compact open $C^k$-topology. This is the topology turning the mapping
 $$C^k (M,N) \rightarrow \prod_{\ell \in \N_0 , \ell \leq k} C (T^\ell M, T^\ell N) ,\quad f \mapsto (T^\ell f)_{0 \leq \ell \leq k}$$
 into a topological embedding. Here the sets on the right hand side are topologized with the compact open topology and $T^\ell$ denotes the $\ell$-fold iterated tangent functor.
 Recall from \cite[Appendix A]{AaGaS20} that this topology turns $C^k (M,N)$ into a Banach (resp.~\Frechet) manifold for $k \in \N_0$ (resp.~$k=\infty$) if $M$ is compact and $N$ is a finite dimensional manifold. 
 Moreover, one can prove that if $N$ is a locally convex topological vector space, then also $C^k(M,N)$ is a locally convex  topological vector space with the pointwise operations.
\end{setup}

If $M$ is non-compact, the compact open $C^k$-topology does not control the behavior of mappings at infinity. Moreover, the compact open $C^k$-topology does not turn $C^k (M,N)$ into a manifold if $N$ is not a vector space.
For this reason one introduces the so called fine very strong topology on $C^k (M,N)$. We recall its definition now and refer to \cite{HaS17} for more information.

\begin{setup}[The fine very strong topology]\label{setup: topo:open}
 We now endow $C^\infty (M,N)$ with the so called $\mathcal{FD}$-topology or \emph{fine very strong topology} and write $C^\infty_{\text{fS}} (M,N)$ for the space endowed with this topology.
 This is a Whitney type topology controlling functions and their derivatives on locally finite families of compact sets. Before we describe a basis of the fine very strong topology, we have to construct a basis for the strong topology which we will then refine. To this end, we recall the construction of the so called basic neighborhoods (see \cite{HaS17}). Consider $f$ smooth, $A$ compact, $\varepsilon >0$ together with a pair of charts $(U, \psi)$ and $(V,\varphi)$ such that $A \subseteq V$ and $\psi \circ f \circ \varphi^{-1}$ makes sense.
 Then we use standard multiindex notation to define an \emph{elementary $f$-neighborhood}
 $$\mathcal{N}^{r} \left( f; A , \varphi,\psi,\epsilon \right) \coloneq \left\{\substack{\displaystyle g \in C^\infty (M,N), \quad \psi \circ g|_A \quad\text{ makes sense,}\\ \displaystyle\sup_{\alpha \in \N_0^d, |\alpha|< r }\sup_{x \in \varphi (A)}\lVert \partial^\alpha \psi \circ f \circ \varphi^{-1}(x) - \partial^\alpha\psi \circ g \circ \varphi^{-1}(x)\rVert < \varepsilon}\right\}.$$ 
 A basic neighborhood of $f$ arises now as the intersection of (possibly countably many) elementary neighborhoods $\mathcal{N}^{r} \left( f; A_i , \varphi_i,\psi_i,\epsilon_i \right)$ where the family $(V_i,\varphi_i)_{i\in I}$ is locally finite. We remark that basic neighborhoods form the basis of the very strong topology. To obtain the fine very strong topology, one declares the sets  
 \begin{equation}\label{eq: cpt:nbhd}
 \{g \in C^\infty (M,N) \mid \exists K \subseteq M \text{ compact such that } \forall x \in M\setminus K,\ g(x) =f(x) \} \tag{$\star$}
 \end{equation}
 to be open and constructs a subbase of the fine very strong topology as the collection of sets $\eqref{eq: cpt:nbhd}$ (where $f \in C^\infty (M,N)$) and the basic neighborhoods of the very strong topology.  If $M$ is compact, the fine very strong topology coincides with the compact open $C^\infty$-topology.
 \end{setup}

 \begin{setup}\label{setup:exp:law}
 The fine very strong topology turns $C^\infty (M,N)$ into an infinite-dimensional manifold (cf.\ \cite{Michor80} and \cite{HaS17}).
 If $N = F$ is a locally convex space, the pointwise operations turn $C^\infty_{\text{fS}} (M,\R^n)$ into a vector space. However, for non-compact $M$, $C^\infty_{\text{fS}} (M,F)$ is disconnected, whence it is a manifold but not a locally convex space. The largest locally convex space contained in $C^\infty_{\text{fS}} (M,F)$ is the space of compactly supported maps
 $$C^\infty_c (M,F) \coloneq \{f \in C^\infty (M,F) \mid \exists K \subseteq M \text{ compact, s.t.}\ f|_{M\setminus K} \equiv 0\}$$
 We shall always topologize $C^\infty_c (M,F)$ with the fine very strong topology.
 
 Recall from \cite[Lemma A.10]{AaGaS20} that for $M$ compact and $N$ a finite dimensional manifold the manifold $C^k(M,N)$ is canonical, i.e.~a mapping 
 $$h \colon A \rightarrow C^k (M,N) \text{ for any smooth manifold } A$$
 is of class $C^\ell$ if and only if the adjoint map $h^\wedge \colon A \times M \rightarrow N, (a,m)\mapsto h(a)(m)$ is a $C^{\ell,k}$-map.
 This means that $h^\wedge$ is $\ell$-times continuously differentiable with respect to the $A$-component of the product and each of these differentials is then $k$-times differentiable with respect to the $M$-component. This is an extremely useful property of the compact open $C^k$-topology. We warn the reader that the corresponding statement is false for the fine very strong topology if $M$ is non-compact.
 \end{setup}

Finally, we recall from \cite[Lemma 2.2.3]{HaN12} the concept of a semidirect product (all asymptotic symmetry groups in this article will turn out to be semidirect products).

\begin{setup}[Semidirect product of groups]
 Let $N$ and $H$ be groups and $\Aut(N)$ the group of automorphisms of $N$. Assume that $\delta \colon H \rightarrow \Aut (N)$ is a group homomorphism. Then we define a multiplication on $N \times H$ by
 \begin{align}\label{semidirect}
  (n,h) (m,g) := (n\delta(h)(m), hg).
 \end{align}
 This multiplication turns $N \times H$ into a group denoted by $N \rtimes_\delta H$, where $N \cong N \times \{e\}$ is a normal subgroup and $H \cong \{e\}\times H$ is a subgroup. Furthermore, each element $x \in N \rtimes_\delta H$ has a unique representation as $x = nh, n \in N, h \in H$.
 
 If $H,N$ are Lie groups (or analytic Lie groups) and $\delta^\wedge \colon H \times N \rightarrow H, (h,n) \mapsto \delta (h)(n)$ is smooth (analytic)\footnote{If $N$ is finite dimensional, it suffices to require that $\delta$ is a Lie group morphism. In general, there is no Lie group structure on $\Aut (N)$ which guarantees smoothness of the group operation in $N \rtimes_\delta H$.}, then $N \rtimes_\delta H$ is a Lie group (analytic Lie group). Its Lie algebra is a semi-direct product of Lie algebras.
\end{setup}

\subsubsection*{Almost local mappings}
To establish smoothness of certain mappings with respect to the function space topologies just defined we need Gl\"ockner's concept of almost local mappings, see \cite{Glo05}. We present here a version of this technique which allows for parameter dependent almost local mappings. These results were communicated to us by H.~Gl\"{o}ckner and will appear in \cite{hgsemidirect}. We remark here that the proofs for these results are variants of the proofs for the statements without parameter in \cite{Glo05}.

\begin{setup}\label{setup:almostlocal}
 Let $M,N$ be finite-dimensional smooth manifolds and $E,F$ be locally convex spaces. Fix an open set $\Omega \subseteq C^\infty_c (M,E)$. Furthermore, consider a smooth (possibly infinite-dimensional) manifold $P$ and a map 
 $$f \colon P \times \Omega \rightarrow C^\infty_c (N,F).$$
 The map $f$ is called an \emph{almost local} if for each $p \in P$ there exist an open $p$-neighborhood $Q \subseteq P$, a locally finite cover $(U_n)_{n\in A}$ of $M$ by relatively compact open subsets and a locally finite cover $(V_n)_{n\in A}$ of $N$ by open relatively compact subsets (where both families are indexed by the same set $A$) such that the following condition holds
 $$\forall n\in A, \forall q\in Q, \forall \sigma,\tau \in \Omega,\quad \sigma|_{U_n} = \tau|_{U_n} \Rightarrow f(q,\sigma)|_{V_n} = f(q,\tau)|_{V_n}.$$
 If each $(p,\sigma) \in P \times \Omega$ has an open neighborhood $P_0\times \Omega_0$ such that $f|_{P_0\times \Omega_0}$ is almost local, $f$ is called \emph{locally almost local}.
 \end{setup}
 
This notion is relevant due to Gl\"{o}ckner's smoothness proposition for locally almost local maps (which is a parameter-dependent version of \cite[Theorem 3.2]{Glo05}). 

\begin{prop}\label{Glo:paramsmooth}
 In the situation of \Cref{setup:almostlocal}, assume that $P$ is a finite dimensional manifold and the map 
 $$f \colon P\times \Omega \rightarrow C^\infty_c (N,F)$$
 has the following properties:
 \begin{enumerate}
  \item The restriction of $f$ to a mapping 
  $$P \times (\Omega \cap C^\infty_K (M,E)) \rightarrow C^\infty_c (N,F)$$
  is smooth for each compact subset $K \subseteq M$; and 
  \item $f$ is an almost local map (or locally almost local).
 \end{enumerate}
Then $f$ is smooth.
\end{prop}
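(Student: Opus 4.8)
The plan is to reduce everything, via the locality built into the hypotheses, to the well-understood smoothness of maps into \emph{Fréchet} spaces of the form $C^\infty_L(N,F)$ with $L \subseteq N$ a fixed compact set. Since smoothness is a local property on $P \times \Omega$ and $f$ is locally almost local, I would first shrink to an open neighborhood of a fixed point $(p_0,\sigma_0)$ on which $f$ is genuinely almost local, with locally finite covers $(U_n)_{n\in A}$ of $M$ and $(V_n)_{n\in A}$ of $N$ by relatively compact open sets and a parameter neighborhood $Q$ as in \Cref{setup:almostlocal}. From here the argument splits into a ``fixed target support'' step and a ``local pieces'' step.

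The key observation is that the target support stays bounded: basic neighborhoods of $\sigma_0$ in the fine very strong topology consist of maps agreeing with $\sigma_0$ outside a fixed compact set $C \subseteq M$ (this is exactly the content of the open sets $(\star)$ in \Cref{setup: topo:open}). By local finiteness only finitely many $U_n$ meet $C$; for every other index $n$ one has $\sigma|_{U_n} = \sigma_0|_{U_n}$, so almost locality forces $f(q,\sigma)|_{V_n} = f(q,\sigma_0)|_{V_n}$. Consequently $f(q,\sigma)$ and $f(q,\sigma_0)$ differ only on $\bigcup\{V_n : U_n \cap C \neq \emptyset\}$, a finite union of relatively compact sets contained in a fixed compact $L \subseteq N$. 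Since $\sigma_0$ is itself supported in some fixed compact set, hypothesis (1) applies on that slice and in particular yields continuity into the fine very strong topology, which controls the support in the $q$-direction after a further harmless shrinking of $Q$. I may therefore assume that $f$ maps the chosen neighborhood into $f(p_0,\sigma_0) + C^\infty_L(N,F)$, on which the fine very strong topology restricts to the ordinary Fréchet topology of $C^\infty_L(N,F)$.

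It then suffices to prove that $(q,\sigma) \mapsto f(q,\sigma)|_{V_n}$ is smooth into the Fréchet space $C^\infty(V_n,F)$ for each of the finitely many relevant indices $n$, since $L$ is covered by these finitely many $V_n$ and smoothness into the closed subspace $C^\infty_L(N,F)$ can be tested on this finite cover. Here I would use almost locality again: $f(q,\sigma)|_{V_n}$ depends only on $\sigma|_{U_n}$. Choosing a cut-off $\chi_n \in C^\infty_c(M)$ that equals $1$ on a neighborhood of $\overline{U_n}$ and is supported in a fixed compact $K_n$, the value $f(q,\sigma)|_{V_n}$ is unchanged if $\sigma$ is replaced by $\chi_n \cdot \sigma$, which lies in $\Omega \cap C^\infty_{K_n}(M,E)$ for $\sigma$ near $\sigma_0$. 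Thus the local piece factors as the restriction of $f$ to $P \times (\Omega \cap C^\infty_{K_n}(M,E))$ --- smooth by hypothesis (1) --- precomposed with the smooth linear cut-off $\sigma \mapsto \chi_n \sigma$ and postcomposed with the smooth linear restriction to $V_n$; hence each local piece is smooth.

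The main obstacle is the transition between the fine very strong topology on $C^\infty_c(N,F)$ and the Fréchet topologies of the fixed-support subspaces: direct-limit-type topologies behave badly under differentiation, and one must ensure that genuine Bastiani smoothness (all iterated directional derivatives, not merely continuity) is transported through the identifications above. I expect to handle this by induction on the differentiability order. Having shown $f$ is $C^1$ by the fixed-support reduction, one checks that its differential $\dif f$ is again of the same type: it maps into $C^\infty_c(N,F)$, its tangent directions in $C^\infty_c(M,E)$ are themselves compactly supported, and it is almost local with respect to the \emph{same} covers $(U_n),(V_n)$ (the defining implication $\sigma|_{U_n}=\tau|_{U_n}\Rightarrow f(q,\sigma)|_{V_n}=f(q,\tau)|_{V_n}$ is preserved under differentiation in the slot directions), while hypothesis (1) holds for $\dif f$ because restriction to a fixed-support slice is smooth. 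Thus $\dif f$ fits the framework of \Cref{setup:almostlocal}, and the inductive hypothesis gives smoothness of $\dif f$, whence $f \in C^\infty$. This mirrors the parameter-free argument of \cite[Theorem 3.2]{Glo05}, the only genuinely new point being the smooth dependence on $p \in P$, which is carried along formally throughout.
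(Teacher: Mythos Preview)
The paper does not actually prove this proposition. It is quoted from Gl\"{o}ckner's preprint \cite{hgsemidirect}, with the remark that ``the proofs for these results are variants of the proofs for the statements without parameter in \cite{Glo05}''. So there is no proof in the paper to compare against; your proposal is in effect a sketch of the argument the paper defers to the literature, and its overall architecture (localise, trap the image in a fixed-support Fr\'echet slice, factor through cut-offs, invoke hypothesis~(1)) is the expected one from \cite{Glo05}.

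That said, there is one genuine slip in your cut-off step. You replace $\sigma$ by $\chi_n\sigma$ and assert that $\chi_n\sigma \in \Omega \cap C^\infty_{K_n}(M,E)$ for $\sigma$ near $\sigma_0$. But $\Omega$ is an \emph{arbitrary} open subset of $C^\infty_c(M,E)$; there is no reason for $\chi_n\sigma_0$ to lie in $\Omega$, and hence no reason for $\chi_n\sigma$ to lie in $\Omega$ for $\sigma$ close to $\sigma_0$. The standard repair is to use instead
\[
\sigma \longmapsto \sigma_0 + \chi_n(\sigma-\sigma_0),
\]
which still agrees with $\sigma$ on $U_n$ (so almost locality applies), lands in a fixed $C^\infty_{K'}(M,E)$ with $K' = \operatorname{supp}\sigma_0 \cup K_n$, and converges to $\sigma_0$ as $\sigma\to\sigma_0$, hence lies in $\Omega$ after shrinking the neighborhood. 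With this correction your factorisation goes through.

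A second, more stylistic point: the induction on differentiability order is not needed. Once you have confined $f$ to take values in an affine slice $f(p_0,\sigma_0)+C^\infty_{L'}(N,F)$ and covered $L'$ by finitely many $V_{n_1},\ldots,V_{n_k}$, the joint restriction $C^\infty_{L'}(N,F)\hookrightarrow \prod_i C^\infty(V_{n_i},F)$ is a linear topological embedding of Fr\'echet spaces. Each component $f(\cdot,\cdot)|_{V_{n_i}}$ is smooth by the (corrected) cut-off factorisation and hypothesis~(1), and smoothness into a closed subspace of a product follows. This gives $C^\infty$ directly, without bootstrapping through $\dif f$.
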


\begin{rem}
 The statement of \Cref{Glo:paramsmooth} simplifies several assumptions and the result obtained. We mention that \cite{hgsemidirect} establishes in particular a version for finite orders of differentiability and works not only with the spaces $C^\infty_c (M,E)$ but more generally with $C^k$-sections of locally convex vector bundles.
\end{rem}

\subsection{The Newman--Unti group}
Let us first recall the definition of the Newman--Unti group as a certain semidirect product. Let $\Diff^+ (\R)$ be the group of all smooth orientation-preserving diffeomorphisms of $\R$ (recall that a diffeomorphism of $\R$ is orientation-preserving if it has positive derivative everywhere).

\begin{defn} \label{defn:mathcalN}
 Define $\mathcal{N} \coloneq \left\{F \in C^\infty (\R \times \SSS^2) \mid F(\cdot, \z) \in \Diff^+(\R) , \quad \forall \z \in \SSS^2\right\}$. Then $\mathcal{N}$ becomes a group with respect to the product
 $$F \cdot G (u,\z) \coloneq F(G(u,\z),\z).$$
 The unit of the product is the map $p \colon \R \times \SSS^2 \rightarrow \R, (t,\z) \mapsto t$ and the inverse $F^{-1}\colon \R \times \SSS^2\rightarrow \R$ is for $\z \in \SSS^2$ given by $F(\cdot, \z)^{-1}$, where the inverse is computed in $\Diff^+(\R)$. Note that the inverse is the unique smooth map\footnote{Note that smoothness of the inverse in all variables is guaranteed by the implicit function theorem.} satisfying the implicit equation 
 \begin{align}\label{NU:imp:eq}
  t = F(F^{-1}(t,\z),\z) \qquad (t,\z) \in \R \times \SSS^2.
 \end{align}
 Furthermore, using the conformal factor from \eqref{conffact}, the mapping 
 $$\tau \colon  \mathcal{N} \times  \Lor \rightarrow \mathcal{N} , (F, \phi ) \mapsto \left( (t,\z) \mapsto K_\phi^{-1}(\z) F(K_\phi(\z) t,\phi(\z))\right),$$
 makes sense. A quick calculation shows that it is a group action which induces a group morphism $\hat{\tau} \colon \Lor \rightarrow \text{Aut} (\mathcal{S}), \phi \mapsto \tau (\cdot, \phi)$ and we can form the semidirect product of groups. The \emph{Newman--Unti (NU) group} is now the semidirect product
 $$\NU \coloneq \mathcal{N} \rtimes_\tau \Lor.$$
\end{defn}
Explicitly, the multiplication $(F,\phi) \cdot (G,\psi)$ of the NU group is given by 
$$(F(\tau(G,\phi)(t,\z),\z),\phi \circ \psi (\z)) = (F(K_\phi^{-1} (\z)G(K_\phi(\z)t,\phi(\z)),\z),\phi(\psi(\z))).$$
By construction of the group structures we obtain an injective group morphism:
\begin{align}\label{grp:injection}
 I \colon \BMS \rightarrow \NU,\quad (F , \phi) \mapsto \left( (t,\z) \mapsto t + F(\z), \phi\right)
\end{align}
\begin{rem}
 Our definition of the NU group might look odd to the reader used to the usual presentations in the physics literature. 
 In \cite[Definition 5.1]{Alessio:2017lps} the definition of the space $\mathcal{N}$ seems only to require that $F \in C^\infty (\R \times \SSS^2)$ satisfies 
 \begin{align}\label{eq:tooweak}
  \frac{\partial F}{\partial t} (t,\z) > 0 \qquad \forall (t,\z) \in \R \times \SSS^2
 \end{align}
 (an impression one could also have from the formulation in \cite{SaSaW75}, though a second glance shows that the maps need to induce diffeomorphisms of $\scri$); but we note that this definition would \textbf{not} lead to a group structure on $\mathcal{N}$. As an example, we could consider the mapping $F \colon \R \times \SSS^2 \rightarrow \R , F(t,\z) \coloneq \arctan (t)$ which would satisfy \eqref{eq:tooweak} but its inverse $\tan$ can not be extended to a smooth map on $\R$. 
\end{rem}

Compared to the BMS group, one replaces the abelian group of supertranslations by the non-abelian group $\mathcal{N}$. Beyond the non-abelian structure there is another significant difference between the two groups. Whereas the supertranslations are smooth mappings on a compact domain, the group $\mathcal{N} \subseteq C^\infty (\R \times \SSS^2)$ consists of smooth mappings on a non-compact domain. Thus, for $\mathcal{S}$, there is in principle only one choice for the function space topology, while for $\mathcal{N}$ the different topologies do not coincide.\footnote{The space $C^\infty (\R \times \SSS^2)$ supports a variety of inequivalent function space topologies beyond the compact open $C^\infty$ and the fine very strong topology. For example one can define several Whitney-type topologies, see \cite[Section 4]{Michor80}. As these additional topologies do not turn $\mathcal{N}$ into a manifold we shall ignore them.} Before we investigate this behavior, let us note for later use that elements in $\mathcal{N}$ are proper maps:

\begin{lem}\label{lem:properness}
 If $F \in \mathcal{N}$, then for every $K \subseteq \R$ compact, $F^{-1}(K)$ is compact, i.e.~$F$ is a proper map.
\end{lem}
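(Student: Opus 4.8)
The plan is to reduce properness to the compactness of $\SSS^2$ by ``straightening out'' $F$ into a fibre-preserving homeomorphism of $\R \times \SSS^2$. Concretely, I would introduce the map
$$g \colon \R \times \SSS^2 \rightarrow \R \times \SSS^2, \quad g(t,\z) \coloneq (F(t,\z),\z),$$
which is smooth since $F$ is. The first task is to verify that $g$ is a homeomorphism. It is a bijection because for each fixed $\z \in \SSS^2$ the map $F(\cdot,\z)$ is a bijection of $\R$ (being an element of $\Diff^+(\R)$), and its set-theoretic inverse is given fibrewise by $g^{-1}(s,\z) = (F^{-1}(s,\z),\z)$, where $F^{-1}$ is the map from \Cref{defn:mathcalN}. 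Continuity of $g^{-1}$ then follows from the smoothness of $F^{-1}$, which is guaranteed by the implicit function theorem as recorded after \eqref{NU:imp:eq}.

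With $g$ a homeomorphism, properness is immediate. Writing $\pi \colon \R \times \SSS^2 \to \R$ for the projection onto the first factor, we have $F = \pi \circ g$, so for any $K \subseteq \R$
$$F^{-1}(K) = g^{-1}(\pi^{-1}(K)) = g^{-1}(K \times \SSS^2).$$
If $K$ is compact, then $K \times \SSS^2$ is compact (as $\SSS^2$ is compact), and its image under the continuous map $g^{-1}$ is compact. Hence $F^{-1}(K)$ is compact, which is the claim.

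The only nontrivial point is the homeomorphism property of $g$, and within that the continuous dependence of the fibrewise inverse $F^{-1}(\cdot,\z)$ on the parameter $\z$; this is the one place where more than purely fibrewise information is used, and it is exactly where the smoothness of the global inverse $F^{-1}$ enters. If one prefers to avoid invoking $g^{-1}$ directly, an equivalent route is to bound $F^{-1}(K)$ in the $\R$-direction: for $K \subseteq [a,b]$ the fibrewise monotonicity of $F(\cdot,\z)$ yields $F^{-1}(K) \subseteq \{(t,\z) : F^{-1}(a,\z) \le t \le F^{-1}(b,\z)\}$, and since $\z \mapsto F^{-1}(a,\z)$ and $\z \mapsto F^{-1}(b,\z)$ are continuous on the compact manifold $\SSS^2$ they are bounded, say by $M$; then $F^{-1}(K)$ is a closed subset of the compact set $[-M,M] \times \SSS^2$ and hence compact. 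Either way the essential ingredients are the fibrewise bijectivity and monotonicity supplied by membership in $\Diff^+(\R)$ together with the compactness of $\SSS^2$.
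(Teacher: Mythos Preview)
Your argument is correct and is genuinely different from the paper's. The paper proves properness by a direct sequential compactness argument: given a sequence $(r_n,\z_n)$ in $F^{-1}(K)$, it passes to subsequences so that $F(r_n,\z_n)\to z$ and $\z_n\to\z_\ast$, sets $r_\ast \coloneq F(\cdot,\z_\ast)^{-1}(z)$, and then argues (via the auxiliary diffeomorphisms $G_n \coloneq F(\cdot,\z_\ast)^{-1}\circ F(\cdot,\z_n)$) that $r_n\to r_\ast$. Your approach instead packages $F$ as the first factor of the global diffeomorphism $g(t,\z)=(F(t,\z),\z)$ and reads off $F^{-1}(K)=g^{-1}(K\times\SSS^2)$; compactness then falls out immediately from the compactness of $\SSS^2$ and the continuity of $g^{-1}$. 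The key input you use that the paper does not invoke explicitly is the smoothness of the global inverse $F^{-1}$ (already recorded in \Cref{defn:mathcalN} via the implicit function theorem), whereas the paper's argument works purely with the fibrewise inverses $F(\cdot,\z)^{-1}$. Your route is shorter and more transparent; the paper's route is slightly more self-contained in that it avoids appealing to the joint smoothness of $F^{-1}$, at the cost of a more delicate convergence argument. Your alternative monotonicity/boundedness variant is equally valid and is perhaps the most elementary of the three.
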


\begin{proof}
 Let $K \subseteq \R$ be compact and let us show that the preimage $F^{-1}(K)$ is compact. To this end, pick a sequence $(r_n,\z_n) \in F^{-1}(K), n \in \N$. We need to show that this sequence has a convergent subsequence. Due to the compactness of $K$, we see that (by passing to a subsequence) we may assume that $F(r_n,\z_n)$ converges in $K$ towards a limit, say $z$. Now $\SSS^2$ is compact, whence we can pass again to a subsequence and assume that $\z_n$ converges towards some $\z_\ast \in \SSS^2$. Now $F(\cdot,\z)$ is a diffeomorphism, whence $r_\ast \coloneq F(\cdot,\z_\ast)^{-1}(z)$ exists. We claim that now $r_n \rightarrow r_\ast$. To see this, consider the diffeomorphisms
 $$G_n \coloneq F(\cdot,\z_\ast)^{-1} \circ F(\cdot , \z_n), \quad n \in \N.$$
 By construction this sequence of diffeomorphisms converges (pointwise) towards the identity and satisfies $G_n (r_n) \rightarrow r_\ast$ as $n \rightarrow \infty$. We see that $G_n^{-1}(r_\ast) \rightarrow r_\ast$ and $r_n - G_n(r_\ast) \rightarrow 0$, whence also $r_n$ converges towards $r_\ast$. We conclude that $(r_n,\z_n)$ has a convergent subsequence and thus $F^{-1}(K)$ is compact. Since $K$ was arbitrary, $F$ is a proper map.  
\end{proof}

We now topologize the Newman--Unti group with respect to the compact open $C^\infty$ topology and consider whether it becomes a topological or Lie group in that way. This is the coarsest function space topology and the NU group with this topology can be analyzed similarly to the BMS group.

\begin{prop}\label{prop:NUtopgp}
 Endow $\mathcal{N}$ with the subspace topology induced by the compact open $C^\infty$-topology on $C^\infty (\R\times \SSS^2)$.
 Then the NU group becomes a topological group.
\end{prop}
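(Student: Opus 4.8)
The plan is to reduce the claim to three continuity statements and then invoke the semidirect-product criterion. Since $\NU = \mathcal{N} \rtimes_\tau \Lor$ carries the product topology and its operations are $(F,\phi)(G,\psi) = (F\cdot\tau(G,\phi), \phi\circ\psi)$ and $(F,\phi)^{-1} = (\tau(F^{-1},\phi^{-1}),\phi^{-1})$, it suffices to establish that (i) multiplication $\mathcal{N}\times\mathcal{N}\to\mathcal{N}$ is continuous, (ii) inversion $\mathcal{N}\to\mathcal{N}$ is continuous, and (iii) the action $\tau\colon\mathcal{N}\times\Lor\to\mathcal{N}$ is jointly continuous; continuity of the operations of the finite-dimensional Lie group $\Lor$ is clear. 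These three assemble the NU operations out of continuous building blocks.

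For (i), I would write $\widetilde G\coloneq(G,\mathrm{pr}_{\SSS^2})\colon\R\times\SSS^2\to\R\times\SSS^2$, so that $F\cdot G = F\circ\widetilde G$. The assignment $G\mapsto\widetilde G$ is continuous into $C^\infty(\R\times\SSS^2,\R\times\SSS^2)$, and composition $C^\infty(\R\times\SSS^2,\R)\times C^\infty(\R\times\SSS^2,\R\times\SSS^2)\to C^\infty(\R\times\SSS^2,\R)$ is continuous for the compact open $C^\infty$-topology: over a compact $K$ one controls $F\circ\widetilde G$ by controlling $\widetilde G$ on $K$ and $F$ on a compact neighbourhood of the compact set $\widetilde G(K)$, the chain rule reducing the higher derivatives to this data. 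Hence multiplication is continuous.

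The hard part is (ii). Here non-compactness of $\R$ bites: to estimate $F^{-1}$ in $C^r$ over $[a,b]\times\SSS^2$ one needs $F$ on $F^{-1}([a,b]\times\SSS^2)$, and along a convergent sequence $F_n\to F$ these preimages could a priori escape to infinity. The key observation --- a uniform form of the properness recorded in \Cref{lem:properness} --- is that each $F_n(\cdot,\z)$ is an increasing diffeomorphism of $\R$, so the preimages can be trapped by two barrier slices. Using compactness of $\SSS^2$ and joint continuity of $F$, I would choose $s^-<s^+$ with $F(s^+,\z)>b+\tfrac12$ and $F(s^-,\z)<a-\tfrac12$ for all $\z$; since $F_n\to F$ uniformly on the compact sets $\{s^\pm\}\times\SSS^2$, for large $n$ one gets $F_n(s^+,\z)>b$ and $F_n(s^-,\z)<a$, whence monotonicity forces $F_n^{-1}([a,b]\times\SSS^2)\subseteq[s^-,s^+]\times\SSS^2=:L$. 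On the fixed compact slab $L$ one has $F_n\to F$ in every $C^r$ and $\partial_t F_n$ bounded away from $0$; the inverse-function-theorem formulas, expressing $\partial^\alpha F^{-1}$ through the $\partial^\beta F\circ F^{-1}$ and $(\partial_t F\circ F^{-1})^{-1}$, then upgrade the resulting $C^0$-convergence $F_n^{-1}\to F^{-1}$ to convergence in every $C^r$ over $[a,b]\times\SSS^2$. As every compact subset of $\R\times\SSS^2$ sits inside such a slab, inversion is continuous.

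Finally, for (iii), the conformal factor $K_\phi(\z)$ and the M\"obius evaluation $\phi(\z)$ from \eqref{conffact} depend smoothly, hence continuously, on $(\phi,\z)\in\Lor\times\SSS^2$. Therefore $(F,\phi)\mapsto\big((t,\z)\mapsto(K_\phi(\z)t,\phi(\z))\big)$ is continuous into $C^\infty(\R\times\SSS^2,\R\times\SSS^2)$; composing with $F$ is continuous as in (i), and multiplying by the parameter-smooth factor $K_\phi^{-1}(\z)$ is continuous, so $\tau$ is jointly continuous. Assembling (i)--(iii) through the semidirect-product formulas shows that the product and inversion of $\NU$ are continuous, i.e.\ that $\NU$ is a topological group; as a by-product the inclusion $I$ of \eqref{grp:injection} is continuous, being built from the same operations.
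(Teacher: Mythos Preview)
Your argument is correct, but it follows a genuinely different path from the paper's proof. The paper does not break the problem into (i)--(iii) and does not argue sequentially; instead it exploits the canonical property of the compact open $C^\infty$-topology recorded in \Cref{setup:exp:law}: a map into $C^\infty(\R\times\SSS^2)$ is continuous (resp.\ smooth) if and only if its adjoint is a $C^{0,\infty}$-map (resp.\ smooth). For the product the paper simply observes that the adjoint $((F,\phi,G),(t,\z))\mapsto F(K_\phi^{-1}(\z)G(K_\phi(\z)t,\phi(\z)),\z)$ is smooth because evaluation is smooth, yielding in one stroke what you obtain via the joint continuity of composition. For inversion the paper again passes to the adjoint and then invokes an implicit function theorem with parameters in a locally convex space (applied to $t=F(F^{-1}(t,\z),\z)$ with $F$ as parameter) to conclude that $(F,(t,\z))\mapsto F^{-1}(t,\z)$ is $C^{0,\infty}$, whence inversion is continuous.

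Your approach trades this machinery for an elementary barrier argument: you use that $C^\infty(\R\times\SSS^2)$ is metrizable (so sequential continuity suffices), trap the preimages $F_n^{-1}([a,b]\times\SSS^2)$ in a fixed slab via monotonicity of $F_n(\cdot,\z)$, and then push $C^0$-convergence up to $C^r$ by the explicit Fa\`a di Bruno--type formulas for derivatives of inverses. This is more hands-on and self-contained, and it makes transparent exactly how properness of elements of $\mathcal{N}$ enters (cf.\ \Cref{lem:properness}). The paper's route is shorter and more systematic---the exponential law and the parameter-dependent implicit function theorem are off-the-shelf tools that also deliver slightly more (smoothness of the extended multiplication on all of $C^\infty(\R\times\SSS^2)$ rather than mere continuity)---but it relies on results from \cite{AS15,hg06} that your argument avoids. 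One small remark: you should state explicitly that the compact open $C^\infty$-topology on $C^\infty(\R\times\SSS^2)$ is \Frechet, hence metrizable, to justify the passage to sequences in step (ii).
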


\begin{proof}
 From \Cref{setup:exp:law} we see that a map $f \colon M \rightarrow C^\infty (\R \times \SSS^2)$ is smooth if and only if the associated map 
 $f^\wedge \colon M \times (\R \times \SSS^2) \rightarrow \R, (m,(t,\z)) \mapsto f(m)(t,\z) \text{ is smooth.}$
 This entails that the evaluation map $\ev \colon C^\infty (\R \times \SSS^2) \times (\R \times \SSS^2) \rightarrow \R, (f,t,\z)\mapsto f(t,\z)$ is smooth, as it is the associated map to the identity $C^\infty (\R \times \SSS^2) \rightarrow C^\infty (\R \times \SSS^2)$. In the following, we prove directly that group product and inversion in the NU group are continuous (it will be clear from the proof that this entails that also $\mathcal{N}$ is a topological group with the subspace topology and the group action of $\Lor$ on $\mathcal{N}$ is continuous).
 
 \textbf{Step 1:} \emph{The group product of the NU group is continuous}.
 Let us first extend the group product to the locally convex space $C^\infty (\R \times \SSS^2)$ containing $\mathcal{N}$. To write it in terms of composition of mappings, consider $\text{pr}_2 \colon \R \times \SSS^2 \rightarrow \SSS^2, (t,\z) \mapsto \z$. Then we can write the product as: 
 \begin{align}
  P \colon ( C^\infty (\R\times \SSS^2) \times \Lor (\bC))^2 &\rightarrow  C^\infty (\R\times \SSS^2) \times \Lor, \notag\\ 
  ((F,\phi),(G,\psi)) &\mapsto (F \circ (\hat{\tau} (\phi) (G),\text{pr}_2), \phi \circ \psi). \label{extmult}
 \end{align}
 As $\Lor$ is a Lie group, it suffices to establish continuity of the first component of $P$. Note that $\psi$ is irrelevant to the first component of $P$, whence due to the exponential law it suffices to establish smoothness of the mapping 
 \begin{align*}
   C^\infty (\R\times \SSS^2) \times \Lor \times C^\infty (\R \times \SSS^2) \times \R \times \SSS^2 &\rightarrow \R \\
   (F,\phi,G,t,\z) &\mapsto F(K_\phi^{-1} (\z) G(K_\phi (\z) \cdot t , \phi (\z)),\z).
 \end{align*}
However, this mapping is easily seen to be smooth as it can be written as a composition of the smooth mappings $K$, $\delta$ and the smooth action of $\Lor$ on $\SSS^2$ and the smooth evaluation map $\ev$. We deduce from the exponential law that the map $P$ on $C^\infty (\R \times \SSS^2)$ is smooth. In particular, it is continuous and restricts to the group product of the NU group on $\mathcal{N}$. As $\mathcal{N}$ carries the subspace topology, this establishes continuity of the group product.
 
 \textbf{Step 2:} \emph{Inversion in the NU group is continuous}. 
 Recall that in a semidirect product the inversion is given by the formula
 $$\iota \colon \mathcal{N} \times \Lor \rightarrow \mathcal{N} \times \Lor , (F, \Lambda) \mapsto (\tau (F^{-1},\Lambda^{-1}),\Lambda^{-1}))$$
 where $F^{-1}$ is the inverse in the group $\mathcal{N}$. Since $\Lor$ is a Lie group, inversion is smooth and we may suppress it in the following computations. Having endowed $\mathcal{N}$ with the subspace topology of $C^\infty (\R \times \SSS^2)$, continuity of $\iota$ will follow if we can establish the continuity of the mapping 
 $$i_1 \colon \mathcal{N} \times \Lor \rightarrow C^\infty (\R \times \SSS^2), (F,\phi) \mapsto \tau (F^{-1},\phi).$$
 We apply now the exponential law \cite[Theorem B]{AS15} in its strong form: The map $i_1$ is continuous if and only if the associated map 
 $$i_1^\wedge \colon (\mathcal{N} \times \Lor) \times (\R \times \SSS^2) \rightarrow \R, ((F,\phi), (t,\z)) \mapsto K_\phi^{-1}(\z)F^{-1}(t, \phi (\z))$$
 is a $C^{0,\infty}$-mapping. Recall that a mapping is of class $C^{0,\infty}$ if it is continuous and infinitely often continuously differentiable with respect to the second component (i.e.~in the present case, with respect to $(t,\z)$). Rewriting the latter formula we see that 
 $$i_1^\wedge(F,\phi)(t,\z) = \ev (\delta (\phi), \z) \cdot \ev (F^{-1}, (t, \phi (\z)).$$
 Now $\delta$ and the evaluation maps are smooth and we know that $\Lor$ acts smoothly on $\SSS^2$. From the chain rule for $C^{r,s}$-mappings \cite[Lemma 3.17 and 3.18]{AS15} we deduce that $i_1^\wedge$ is a $C^{0,\infty}$-map if the map $H \colon \mathcal{N} \times (\R \times \SSS^2) \rightarrow \R, (F , (t,\z)) \mapsto F^{-1} (t,\z)$ is a $C^{0,\infty}$-map.
 By construction, $H (F, \cdot)$ solves the implicit equation \eqref{NU:imp:eq} for the function $F \in \mathcal{N}$. Hence we can treat the whole equation as an implicit equation with parameter $F$. 
 It is well known that the implicit function theorem with (continuous) parameter yields a smooth solution which depends continuously on the parameter. In other words the map $H$ will be of class $C^{0,\infty}$ as a result of the implicit function theorem. Thus we can deduce continuity of the inversion in the NU group from an implicit function theorem with parameter in a locally convex space (a suitable version of the theorem is recorded in \cite[Proposition 2.1.]{hg06}). 
 \end{proof}
 
 \begin{lem}\label{lem:BMS_into_NU_topological}
  The canonical inclusion $I(F , \phi) = ((t,\z) \mapsto (t + F (\z), \phi)$, \eqref{grp:injection} of the BMS group into the NU group is a topological group morphism.
 \end{lem}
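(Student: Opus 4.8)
The group-homomorphism property of $I$ is already recorded at \eqref{grp:injection}, so the only thing left to establish is continuity. The plan is to exploit the product-topology description of both semidirect products. Writing $I(F,\phi) = (j(F),\phi)$ with
$$ j \colon \mathcal{S} = C^\infty(\SSS^2) \longrightarrow \mathcal{N} \subseteq C^\infty(\R \times \SSS^2), \quad F \mapsto \big( (t,\z) \mapsto t + F(\z) \big), $$
the map $I$ is the product map $j \times \id_{\Lor}$ followed by the inclusion $\mathcal{N} \times \Lor \hookrightarrow \NU$. Since $\BMS$ and $\NU$ both carry the product topology on their underlying sets $\mathcal{S} \times \Lor$ and $\mathcal{N} \times \Lor$, and the $\Lor$-component of $I$ is the identity, continuity of $I$ reduces to continuity of $j$ for the compact open $C^\infty$-topologies (with $\mathcal{N}$ carrying the subspace topology from $C^\infty(\R \times \SSS^2)$, exactly as in \Cref{prop:NUtopgp}).

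To prove that $j$ is continuous, I would re-use the tool from the proof of \Cref{prop:NUtopgp}, namely the exponential law \cite[Theorem B]{AS15} in its strong form: the map $j$ into $C^\infty(\R \times \SSS^2)$ is continuous if and only if its adjoint
$$ j^\wedge \colon C^\infty(\SSS^2) \times (\R \times \SSS^2) \longrightarrow \R, \quad (F,(t,\z)) \mapsto t + F(\z) $$
is of class $C^{0,\infty}$, i.e.\ continuous and smooth in the $(t,\z)$-slot. Here I would decompose $j^\wedge = \operatorname{pr}_\R + \ev \circ (\id \times \operatorname{pr}_2)$, where $\operatorname{pr}_\R(F,t,\z) = t$, $\operatorname{pr}_2(t,\z) = \z$, and $\ev \colon C^\infty(\SSS^2) \times \SSS^2 \to \R$ is the evaluation map. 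Because $\SSS^2$ is compact, $C^\infty(\SSS^2)$ is a \Frechet~manifold which is canonical in the sense of \Cref{setup:exp:law}, so $\ev$ is smooth; the projection $\operatorname{pr}_\R$ onto the $\R$-coordinate is smooth as well. Thus $j^\wedge$ is a sum of smooth maps and is itself smooth, in particular $C^{0,\infty}$, whence $j$ is continuous.

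Finally I would observe that $j$ indeed lands in $\mathcal{N}$: for each $\z$ the map $t \mapsto t + F(\z)$ is a translation of $\R$, hence an orientation-preserving diffeomorphism, so $j(F) \in \mathcal{N}$. Since $\mathcal{N}$ carries the subspace topology, continuity of $j$ into the ambient space $C^\infty(\R \times \SSS^2)$ gives continuity into $\mathcal{N}$, and therefore $I = (j \times \id_{\Lor})$ is continuous. Together with the homomorphism property this shows that $I$ is a morphism of topological groups. I do not expect any serious obstacle here; the only points requiring care are that the whole argument lives in the compact open $C^\infty$-topology (not the fine very strong topology), and that the crucial smoothness of $\ev$ is available precisely because the domain $\SSS^2$ of the supertranslations is compact — so none of the non-compactness pathologies of $\R \times \SSS^2$ enter the estimate.
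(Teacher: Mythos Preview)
Your proof is correct and follows essentially the same route as the paper: reduce to continuity of the first component, apply the exponential law \cite[Theorem B]{AS15} to pass to the adjoint map $(F,(t,\z))\mapsto t+F(\z)$, and verify its smoothness via smoothness of the evaluation on $C^\infty(\SSS^2)$. The only cosmetic differences are that you spell out the decomposition $j^\wedge=\operatorname{pr}_\R+\ev\circ(\id\times\operatorname{pr}_2)$ and check explicitly that $j$ lands in $\mathcal{N}$, while the paper cites \cite[Proposition 3.20]{AS15} directly for the smoothness of $\ev$.
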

 
 \begin{proof}
 From the definition of $I$ it is clear that it will be continuous if its first component 
 $$\mathcal{S} \rightarrow \mathcal{N},\quad F\mapsto p+F \circ \text{pr}_2$$
 is continuous, where $\text{pr}_2 \colon \R \times \SSS^2 \rightarrow \SSS^2$ denotes the projection onto the second component. We exploit that $\mathcal{N}$ carries the subspace topology of $C^\infty (\R \times \SSS^2)$ and the exponential law \cite[Theorem B]{AS15}. It suffices to prove that the mapping
 \begin{align}\label{adj:formula}
  C^\infty (\SSS^2) \times (\R \times \SSS^2) \rightarrow \R,\quad (F,(t,\z)) \mapsto t + F (\z)
 \end{align}
 is a $C^{0,\infty}$-map. Indeed this map is even smooth as \cite[Proposition 3.20]{AS15} shows that the evaluation map $\ev$ on $C^\infty (\SSS^2)$ is smooth. Hence $(t,\z) \mapsto t + \ev(F,\z)$ and thus also $I$ are smooth.
\end{proof} 
 
\begin{rem}
Note that, for the compact open $C^\infty$-topology, there does not seem to be any sensible way to turn $\mathcal{N}$ into a submanifold. Indeed in the compact open $C^\infty$-topology $\Diff^+ (\R)$ is not a submanifold of $C^\infty (\R)$ in any suitable sense. Thus, the compact open $C^\infty$-topology does not seem to provide a suitable structure to turn $\mathcal{N}$ into an infinite-dimensional Lie group.
\end{rem}
 While the compact open $C^\infty$-topology is too coarse to turn $\mathcal{N}$ into a manifold, switching to a finer topology this problem can be remedied. We will investigate the resulting structure in the next section.
 
 \subsection{The group $\mathcal{N}$ in the fine very strong topology}\label{sect:normalsubgrp}
 In this section we endow $C^\infty (\R \times \SSS^2)$ with the fine very strong topology. We will see that $\mathcal{N}$ is an open subset of $C^\infty_{\mathrm{fS}} (\R \times \SSS^2)$ and this endows $\mathcal{N}$ with a manifold structure as $C^\infty (\R \times \SSS^2)$ is a manifold (cf.\ \cite{Michor80}). The following result seems to be new (and while the related set $\Diff^+ (\R ) \subseteq C^\infty_{\mathrm{fS}} (\R)$ is open (cf.~\cite[Section 10]{Michor80}) this is not immediately useful to establish the following). 

\begin{prop}\label{prop:tech_NUgroup}
 The set $\mathcal{N}$ is an open subset of $C^\infty_{\mathrm{fS}} (\R \times \SSS^2)$.
\end{prop}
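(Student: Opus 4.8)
The plan is to verify openness pointwise: for each $F \in \mathcal{N}$ I would exhibit a fine-very-strong neighborhood of $F$ contained in $\mathcal{N}$. The condition $F(\cdot,\z) \in \Diff^+(\R)$ for all $\z$ decomposes into a \emph{local} part, namely $\partial_t F(t,\z) > 0$ everywhere, and a \emph{global} part, namely that each $F(\cdot,\z)$ is surjective, i.e.\ $\lim_{t \to \pm\infty} F(t,\z) = \pm \infty$. The very strong (Whitney) part of the topology will handle the local part, while the fine part (the sets $(\star)$) is exactly what is needed to control the behaviour at infinity; the counterexample $F(t,\z) = \arctan(t)$ from the preceding remark shows that positivity of the derivative alone cannot suffice and that the global part genuinely requires the fine topology.

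For the local part I would choose a locally finite cover $(A_i)_{i \in I}$ of $\R \times \SSS^2$ by compact sets, each contained in the domain of a product chart adapted to the $\R \times \SSS^2$ splitting (e.g.\ products of intervals $[n,n+1]$ with finitely many charts of $\SSS^2$), so that the first chart coordinate is literally $t$ and the chart derivative in that slot coincides with $\partial_t$. On each $A_i$ the continuous function $\partial_t F$ attains a positive minimum $m_i > 0$. Taking the basic very strong neighborhood $U$ obtained by intersecting the elementary neighborhoods $\mathcal{N}^2(F; A_i, \varphi_i, \psi_i, m_i/2)$ over this locally finite family, every $g \in U$ satisfies $|\partial_t g - \partial_t F| < m_i/2$ on $A_i$, hence $\partial_t g > m_i/2 > 0$ on each $A_i$ and therefore $\partial_t g > 0$ on all of $\R \times \SSS^2$. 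Thus each $g \in U$ has $g(\cdot,\z)$ strictly increasing for every $\z$.

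For the global part I would intersect $U$ with the fine-open set $V \coloneq \{g \mid \exists K \subseteq \R\times\SSS^2 \text{ compact with } g = F \text{ on } (\R\times\SSS^2)\setminus K\}$, which is of the form $(\star)$ with $f = F$ and contains $F$. Since $\SSS^2$ is compact, the $\R$-projection of any such $K$ lies in some $[-R,R]$, so any $g \in V$ agrees with $F$ on $\{|t| > R\} \times \SSS^2$. Hence for each fixed $\z$ and $|t| > R$ we have $g(t,\z) = F(t,\z)$, and since $F(\cdot,\z) \in \Diff^+(\R)$ forces $F(t,\z) \to \pm\infty$ as $t \to \pm\infty$, the same limits hold for $g(\cdot,\z)$. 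A smooth, strictly increasing map $\R \to \R$ with these limits is an orientation-preserving diffeomorphism, so $g(\cdot,\z) \in \Diff^+(\R)$ for every $\z$, i.e.\ $g \in \mathcal{N}$. Therefore $U \cap V$ is a fine-very-strong neighborhood of $F$ contained in $\mathcal{N}$, which proves openness.

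The main obstacle, and the conceptual heart of the statement, is the global surjectivity part: positivity of $\partial_t$ is a purely local and hence Whitney-controllable condition, whereas surjectivity is a condition at infinity that no Whitney-type (or compact-open) neighborhood can enforce on its own. The fine part of the topology resolves this by forcing eventual agreement with $F$ outside a compact set, and the compactness of $\SSS^2$ is precisely what lets this compact set be enlarged to a slab $[-R,R]\times\SSS^2$, thereby reducing the $\z$-parametrized global condition to a uniform statement in the $\R$-direction. This is also why the known openness of $\Diff^+(\R) \subseteq C^\infty_{\mathrm{fS}}(\R)$ does not transfer mechanically: one must check that the fine condition, applied on the non-compact factor $\R$, is compatible with the parametrization over the compact factor $\SSS^2$, which is the step that the above slab argument supplies.
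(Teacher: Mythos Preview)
Your proof is correct, but it takes a genuinely different and considerably shorter route than the paper's argument in Appendix~B.

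The paper decomposes membership in $\Diff^+(\R)$ as \emph{positive derivative} $+$ \emph{proper} $+$ \emph{injective}, and then constructs around $F$ a neighborhood in which all three survive. Positivity of $\partial_t$ is handled as you do (their Lemma~\ref{lem:step2}); properness is open by the classical Hirsch result; injectivity is the hard part and occupies most of the appendix: a parametrised version of Hirsch's embedding lemma (Lemma~\ref{para:emb}) is proved and then combined with a delicate covering argument separating $F(K_i,\z)$ from $F(\R\setminus U_i,\z)$ uniformly in $\z$. All of the open sets produced there are intersections of elementary neighborhoods over locally finite families of compacta, so the paper's argument in fact yields openness already in the (coarser) very strong topology, although this is not stated.

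You instead replace the proper $+$ injective part by the single observation that the fine set $V$ of type~$(\star)$ forces $g$ to agree with $F$ outside a compact set, which by compactness of $\SSS^2$ can be taken to be a slab $[-R,R]\times\SSS^2$; surjectivity of each $g(\cdot,\z)$ is then inherited from $F(\cdot,\z)$, and together with $\partial_t g>0$ this gives $g(\cdot,\z)\in\Diff^+(\R)$. This is entirely valid for the proposition as stated and avoids the Hirsch-type machinery. The trade-off is that your argument genuinely needs the fine refinement and does not show openness in the very strong topology, whereas the paper's longer argument does (implicitly). For the purposes of the paper---where only the fine very strong manifold structure is used downstream---your approach suffices and is the more economical one.
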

The proof of \Cref{prop:tech_NUgroup} is postponed to \Cref{app:aux}. Having now a manifold structure at our disposal, the group $\mathcal{N}$ turns out to be a Lie group.

\begin{prop}\label{prop:NisLie}
 The submanifold structure turns $\mathcal{N} \subseteq C^\infty_{\mathrm{fS}} (\R \times \SSS^2)$ into a Lie group.
\end{prop}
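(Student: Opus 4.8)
The manifold structure on $\mathcal{N}$ is already provided by \Cref{prop:tech_NUgroup}, so the task reduces to proving that the group multiplication $\mu(F,G)=F\cdot G$, with $(F\cdot G)(u,\z)=F(G(u,\z),\z)$, and the inversion $\iota(F)=F^{-1}$ are smooth for the fine very strong topology. Recall that near a point $h\in C^\infty_{\mathrm{fS}}(\R\times\SSS^2)$ the canonical chart identifies a neighbourhood with an open subset of the locally convex model space $C^\infty_c(\R\times\SSS^2)$ via $g\mapsto g-h$. Since smoothness is local, the plan is to fix base points $F_0,G_0\in\mathcal{N}$, set $H_0:=F_0\cdot G_0$, and express $\mu$ in these charts as $(\phi,\psi)\mapsto (F_0+\phi)\cdot(G_0+\psi)-H_0$ and $\iota$ as $\phi\mapsto (F_0+\phi)^{-1}-F_0^{-1}$, with $\phi,\psi$ ranging over open subsets of $C^\infty_c(\R\times\SSS^2)$. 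I would then verify the hypotheses of Gl\"ockner's smoothness criterion \Cref{Glo:paramsmooth} for each local representative, treating one argument as the parameter manifold $P$ and the other as the $C^\infty_c$-variable.

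First I would check that both operations really take values in the model space $C^\infty_c(\R\times\SSS^2)$, i.e.\ that the chart representatives are compactly supported; this is the geometric heart of the argument and is where \Cref{lem:properness} enters. For multiplication, writing $F=F_0+\phi$, $G=G_0+\psi$, one splits
\[
(F\cdot G)-H_0=\big(F(G,\mathrm{pr}_2)-F(G_0,\mathrm{pr}_2)\big)+\big(\phi\circ\Theta_{G_0}\big),
\]
where $\Theta_{G_0}(u,\z)=(G_0(u,\z),\z)$ is the fibre-preserving diffeomorphism induced by $G_0$. The first summand is supported in $\supp\psi$, while the second satisfies $\supp(\phi\circ\Theta_{G_0})\subseteq\Theta_{G_0}^{-1}(\supp\phi)$, which is compact because $\Theta_{G_0}$ is a diffeomorphism, and hence proper, by \Cref{lem:properness}. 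For inversion one argues along each fibre: since $F(\cdot,\z)$ is injective, $F^{-1}(t,\z)=F_0^{-1}(t,\z)$ whenever $F$ agrees with $F_0$ near $(F_0^{-1}(t,\z),\z)$, so $F^{-1}-F_0^{-1}$ is supported in the compact set $\Theta_{F_0}(\supp\phi)$.

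For multiplication I would take $P$ to be the chart neighbourhood of $F_0$ (an open subset of the affine manifold $F_0+C^\infty_c$) as parameter and $\psi$ as the $C^\infty_c$-variable. The decisive simplification is that the output at $(u,\z)$ depends on $\psi$ only through the value $\psi(u,\z)$, so the local representative is in fact \emph{local} (not merely almost local) in $\psi$: condition~(2) of \Cref{Glo:paramsmooth} then holds with $V_n=U_n$ for any locally finite cover by relatively compact sets. For condition~(1), with $\psi$ restricted to a fixed $C^\infty_K$, I would use the splitting above: the term $\phi\circ\Theta_{G_0}$ is a continuous \emph{linear} map of $\phi$ (pull-back by the fixed diffeomorphism $\Theta_{G_0}$) and hence smooth, while the remaining term is supported in the genuinely compact set $K$ and is smooth by the exponential law and the canonical-manifold property over the compact domain $K$ \cite{AS15}. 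Gl\"ockner's criterion then yields smoothness of $\mu$.

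For inversion I would apply \Cref{Glo:paramsmooth} with trivial parameter, so that $\iota$ becomes the non-parametric map $\phi\mapsto (F_0+\phi)^{-1}-F_0^{-1}$ treated as in \cite{Glo05}. Condition~(1) — smoothness on each $C^\infty_K$ — follows from the implicit function theorem with parameters in locally convex spaces (exactly as in the proof of \Cref{prop:NUtopgp}, now read in the fine topology and combined with the compact-support bound above; cf.\ \cite{hg06}) together with the canonical structure over $K$. The almost-local property (condition~(2)) expresses the fibre-wise locality of the inverse: $F^{-1}(t,\z)$ is determined by $F$ near $(F^{-1}(t,\z),\z)$, so choosing the covers with $F^{-1}(V_n)\subseteq U_n$ for all $F$ in a small neighbourhood of $F_0$ gives the required implication. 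I expect this last point to be the main obstacle, since it requires uniform control, over a whole neighbourhood of $F_0$, of where the varying inverse sends the target cover, and this uniformity must again be extracted from properness (\Cref{lem:properness}). Once both $\mu$ and $\iota$ are shown to be smooth, the manifold $\mathcal{N}$ equipped with these operations is by definition an infinite-dimensional Lie group.
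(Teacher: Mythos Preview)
Your overall strategy is sound, and both arguments can be completed along the lines you indicate; but the paper takes a considerably shorter route in each case, and it is worth seeing where.

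For multiplication, the paper does not pass to charts or invoke the almost-local criterion at all. Instead it rewrites the product globally as
\[
F\cdot G=(\id_\R\times\Delta)^*\bigl(\mathrm{Comp}(F,\,G\times\id_{\SSS^2})\bigr),
\]
where $\Delta$ is the diagonal of $\SSS^2$, and then cites Michor's results that (i) the pullback $(\id_\R\times\Delta)^*$ is smooth, (ii) composition $\mathrm{Comp}$ is smooth on $C^\infty_{\mathrm{fS}}$ provided the inner map is proper, and (iii) $G\mapsto G\times\id_{\SSS^2}$ is smooth. Properness of $G\times\id_{\SSS^2}$ is precisely what \Cref{lem:properness} supplies. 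This dispatches multiplication in one line, whereas your chart computation, while correct, requires you to verify condition~(1) of \Cref{Glo:paramsmooth} with the parameter $\phi$ still ranging over all of $C^\infty_c$, and your appeal to ``the canonical-manifold property over the compact domain $K$'' is not quite the right citation: what you actually need is smoothness of the evaluation on $C^\infty_{\mathrm{fS}}$ together with the exponential law for maps into the Fr\'echet space $C^\infty_K$.

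For inversion, your plan is essentially the paper's, but you make it harder than necessary by working near an arbitrary base point $F_0$. The paper first observes that, once multiplication is smooth, the anti-morphism identity $I(F)=G^{-1}\cdot I(F\cdot G)$ reduces the problem to smoothness of $I$ on a single neighbourhood of the identity $p$. In the chart $\Phi(F)=F-p$ the local inversion $\gamma\mapsto\gamma^\ast$ is then analysed exactly as you propose: smoothness on each $C^\infty_L$ by the implicit function theorem with parameters, and almost-locality on the sets $O_R=\{\gamma:\sup|\gamma|<R\}$. The point you flag as ``the main obstacle'' becomes completely explicit here: with $V_n=B_2(n)\times\SSS^2$ and $U_n=B_{2+R}(n)\times\SSS^2$ one checks directly from the implicit equation that $\gamma|_{U_n}=\eta|_{U_n}$ forces $\gamma^\ast|_{V_n}=\eta^\ast|_{V_n}$. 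Working near a general $F_0$ would force you to build covers adapted to $F_0$, and the uniformity you worry about then genuinely needs an extra argument; reducing to $p$ sidesteps this entirely.
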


\begin{proof}
 Recall from \cite{HaS17} that in the fine very strong topology $C^\infty_{\text{prop}}(\R \times \SSS^2)$, the subset of proper mappings, is an open subset. By \Cref{lem:properness} and \Cref{prop:tech_NUgroup}, the group $\mathcal{N}$ is an open subset of $C^\infty_{\text{prop}} (\R \times \SSS^2)$ whence it is an open submanifold of $C^\infty_{\text{fS}} (\R\times \SSS^2)$ (see \cite[Theorem 10.4]{Michor80} for a construction of the manifold structure on the function space).
 
 \textbf{Step 1:} \emph{The multiplication is smooth}. We rewrite the group product of $\mathcal{N}$ as follows
 \begin{align}
  F\cdot G = F \circ (G \times \id_{\SSS^2}) \circ (\id_\R \times \Delta) = (\id_\R \times \Delta)^* (\text{Comp} (F,G \times \id_{\SSS^2})) , \label{eq:comp:NU}
 \end{align}

 where $\Delta \colon \SSS^2 \rightarrow \SSS^2\times \SSS^2 , \z \mapsto (\z,\z)$ is the diagonal map, and we have 
 \begin{align*}
  (\id_\R \times \Delta)^* &\colon C^\infty (\R \times \SSS^2, \R \times \SSS^2) \rightarrow C^\infty (\R \times \SSS^2 \times \SSS^2,\R \times \SSS^2), \phi \mapsto \phi \circ (\id_\R \times \Delta)\\
  \text{Comp} &\colon C^\infty (\R \times \SSS^2) \times C^\infty_{\text{prop}} (\R \times \SSS^2 \times \SSS^2, \R \times \SSS^2), (F,\phi) \mapsto F \circ \phi
 \end{align*}
 (where the subscript prop denotes the open subset of proper mappings). From \cite[Corollary 10.14 and Theorem 11.4]{Michor80} we deduce that $(\id_\R \times \Delta)^*$ and $\text{Comp}$ are smooth. Furthermore, the map $C^\infty (\R\times \SSS^2) \rightarrow C^\infty (\R \times \SSS^2 \times \SSS^2,\R \times \SSS^2), G \mapsto  G\times \id_{\SSS^2}$ is smooth by compactness of $\SSS^2$ and \cite[Corollary 11.10 1.]{Michor80}. We deduce from \eqref{eq:comp:NU} that the group product is smooth as a composition of smooth mappings.

 \textbf{Step 2:} \emph{$\mathcal{N}$ is a Lie group}. Since we already know that the group product is smooth and inversion $I \colon \mathcal{N} \rightarrow \mathcal{N}$ is a group anti-morphism, the formula $I(F) = G^{-1} \cdot I(F\cdot G)$ shows that it suffices to prove that $I$ is smooth in an open neighborhood of the identity element $p \colon \R \times \SSS^2 \rightarrow \R, (x,\z) \mapsto x$. In \Cref{localsetup} we construct a local model $\iota$ for the pointwise inversion $I$ on $\Phi (G_0) \subseteq C_c^\infty (\R \times \SSS^2)$ where $\Phi$ is a chart and $G_0$ a suitable $p$-neighborhood. We will now use the auxiliary results from \Cref{app:aux} to establish smoothness of $\iota$.
 
 Then \Cref{lem:res:smooth} shows that $\iota$ restricts to a smooth map on the closed subspaces $C^\infty_L (\R \times K)$. Define now for $R> 0$ the set $O_r \coloneq \{\gamma \in \Phi(G_0) \mid \sup_{(x,k)} |\gamma(x,k)|<R\}$. By construction $O_R$ is an open subset of $\Phi (G_0) \subseteq C^\infty_c (\R \times K)$ and we see that $\Phi(G_0) = \bigcup_{R>0} O_R$. Moreover, \Cref{lem:almostloc} implies that for every $R>0$, the restriction of $\iota$ to $O_R \cap \Phi(G_0)$ is almost local. Hence, Gl\"{o}ckner's smoothness theorem \cite[Theorem 3.2]{Glo05} shows that $\iota$ is smooth, whence $I$ is smooth on $G_0$.
 
 In conclusion, the manifold structure turns the group operations into smooth maps, whence $\mathcal{N}$ is an infinite-dimensional Lie group.
\end{proof}

In essence, the group $\mathcal{N}$ is a version of the group $\Diff^+ (\R)$ with an added parameter which is not directly visible in the composition of the group. Hence the Lie algebra should be given by a Lie algebra of (compactly supported) vector fields where every vector fields depends on a parameter which is not relevant for the Lie bracket. The next proposition shows that this is indeed the case. To formulate it, we recall from \Cref{setup:subspaces} that the set of all compactly supported mappings $C^\infty_c (\R \times \SSS^2)$ becomes a locally convex space if we endow it with the subspace topology induced by the strong very fine topology.

\begin{prop}\label{prop:LieAlg-N}
 The Lie algebra of $\mathcal{N}$ can be identified as $\Lf (\mathcal{N}) = C^\infty_c (\R\times \SSS^2)$ with the Lie bracket given by 
 $$\LB[X,Y] (u,\theta) \coloneq -\LB[X(\cdot,\theta),Y(\cdot,\theta)](u),\quad (u,\theta)\in \R \times \SSS^2$$
 and the Lie bracket on the right is the usual Lie bracket of vector fields (on $\R$).
\end{prop}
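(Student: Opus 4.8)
The plan is to identify the underlying locally convex space of $\Lf(\mathcal{N})$ with the tangent space $T_p\mathcal{N}$ at the identity $p$, and then to read off the bracket from the second-order term of the group multiplication expressed in a chart centred at $p$.

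First the vector space. By \Cref{prop:NisLie} the group $\mathcal{N}$ is an open submanifold of $C^\infty_{\mathrm{fS}}(\R \times \SSS^2)$. Since the target $\R$ is a vector space, the component of $p$ carries the global chart $\Phi \colon F \mapsto F - p$, whose image is an open $0$-neighbourhood in the model space $C^\infty_c(\R \times \SSS^2)$ (recall from \Cref{setup:exp:law} and \Cref{setup:subspaces} that this is exactly the largest locally convex space sitting inside $C^\infty_{\mathrm{fS}}(\R \times \SSS^2)$, and that the component of $p$ equals $p + C^\infty_c(\R \times \SSS^2)$). Using $T_p\Phi$ to trivialise the tangent space, we obtain as a locally convex space $\Lf(\mathcal{N}) = T_p\mathcal{N} \cong C^\infty_c(\R \times \SSS^2)$.

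Next the bracket. I would use the standard description of the Lie bracket of an infinite-dimensional Lie group in terms of the local multiplication (cf.\ \cite{Milnor} and the parallel computation for $\BMS$ in \cite{Prinz_Schmeding_1}): writing the multiplication $m$ in the chart $\Phi$, so that $m(0,0)=0$ and $d_1 m(0,0)=d_2 m(0,0)=\id$, the bracket of $\xi,\eta \in C^\infty_c(\R \times \SSS^2)$ is given by
$$\LB[\xi,\eta] = \beta(\xi,\eta) - \beta(\eta,\xi), \qquad \beta(\xi,\eta) \coloneq \left.\frac{\partial^2}{\partial t\, \partial s}\right|_{t=s=0} m(t\xi, s\eta),$$
where $\beta$ is the continuous bilinear second-order term of $m$, which exists because $m$ is smooth by \Cref{prop:NisLie}. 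Substituting $F = p+\xi$ and $G = p+\eta$ into $F\cdot G(u,\z) = F(G(u,\z),\z)$ gives the local multiplication
$$m(\xi,\eta)(u,\z) = \eta(u,\z) + \xi\big(u + \eta(u,\z),\z\big).$$
Differentiating once in $s$ at $s=0$ and once in $t$ at $t=0$ leaves $\beta(\xi,\eta)(u,\z) = \partial_u \xi(u,\z)\,\eta(u,\z)$ (a product of a compactly supported function with a derivative of one, hence again in $C^\infty_c(\R \times \SSS^2)$), so that
$$\LB[\xi,\eta](u,\z) = \partial_u \xi(u,\z)\,\eta(u,\z) - \partial_u \eta(u,\z)\,\xi(u,\z).$$
For each fixed $\z$ this is precisely the negative of the usual bracket of the vector fields $\xi(\cdot,\z)\partial_u$ and $\eta(\cdot,\z)\partial_u$ on $\R$, which is the claimed formula.

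The only genuinely technical point is to justify that the formal differentiations above really compute the Bastiani second derivative of the smooth map $m$, and that the abstract bracket formula applies verbatim in this locally convex setting. I expect this to be the main obstacle, but it is routine once smoothness of $m$ has been secured in \Cref{prop:NisLie}; the parameter $\z \in \SSS^2$ enters only as a spectator in the composition, which is exactly why no angular derivatives appear and the result is the fibrewise $\R$-vector-field bracket. The overall negative sign is the expected artefact of the composition convention, in complete analogy with the Lie algebra of $\Diff^+(\R)$.
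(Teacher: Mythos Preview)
Your argument is correct and yields exactly the statement, but it follows a genuinely different route from the paper. The paper identifies $T_p\mathcal{N}$ via \cite[10.12]{Michor80}, extends $X$ to the right-invariant vector field $R^X(F)=X\circ(F,\mathrm{pr}_2)$, and then establishes a relatedness of $R^X\times 0$ and the vector field $X_0=(X,0)$ under the natural left action $\alpha\colon \mathcal{N}\times(\R\times\SSS^2)\to\R\times\SSS^2$; the bracket is then read off from the relatedness of the bracket of vector fields, with the overall minus sign appearing because right-invariant fields were used. In contrast, you work entirely in the chart $\Phi$, write down the local multiplication $m(\xi,\eta)=\eta+\xi\circ((p+\eta)\times\id_{\SSS^2})$, and extract the bracket as the antisymmetrisation of the bilinear second-order term. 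Your approach is more elementary in that it avoids the action/relatedness machinery and any appeal to the tangent description from \cite{Michor80}; the paper's approach, on the other hand, makes the analogy with the classical computation for $\Diff(M)$ completely transparent and explains structurally \emph{why} the $\SSS^2$-variable is a spectator (it is inert under $\alpha$). Both arguments rely on the smoothness established in \Cref{prop:NisLie}, and the technical point you flag---that the pointwise second derivative of $m$ agrees with its Bastiani second differential---is indeed routine, since evaluation is smooth and directional derivatives of Bastiani-smooth maps may be computed pointwise.
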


\begin{proof}
 The unit in $\mathcal{N}$ is the projection $p (u,\theta)=u$ and we use \cite[10.12]{Michor80} to identify 
 $$T_p \mathcal{N} = \{(p,X) \in C^\infty (\R\times \SSS^2,T\R^2)\mid X \in C^\infty_c (\R\times\SSS^2)\}.$$
 This identifies the Lie algebra as a locally convex space. Now we need to compute the Lie bracket. 
 Again the proof is a variant of the classical argument by which the Lie algebra of the diffeomorphism group can be identified, \cite[6.]{Milnor}. 
 Take $X \in \Lf (\mathcal{N}) = C_c^\infty (\R\times \SSS^2)$. Since multiplication in $\mathcal{N}$ is given by composition in the $u$-component, we deduce from \cite[Corollary 11.6]{Michor80} that the extension of $X$ to a right invariant vector field $R_x$ on $\mathcal{N}$ is given by the formula
 $$R^X (F) = X \circ (F \times \text{pr}_2), \quad F \in \mathcal{N}$$
 where $\text{pr}_2 \colon \R \times \SSS^2 \rightarrow \SSS^2$ is the projection onto the second component. Consider the vector field $R^X \times 0 \in \mathcal{V} (\mathcal{N} \times (\R \times \SSS^2))$. We shall now prove that $R^X \times 0$ is related to $X_0 \colon \R \times \SSS^2 \rightarrow \R \times T\SSS^2, (u,\z) \mapsto (X(u,z),0(\z))$, where $0(\z)$ is the zero-vector field on $\SSS^2$.
 To this end, consider the left action 
 $$\alpha \colon \mathcal{N} \times (\R \times \SSS^2) \rightarrow \R \times \SSS^2,\quad (F,(u,\z)) \mapsto (F(u,\z),\z).$$
 As the evaluation $\ev \colon C^\infty_{fS} (\R \times \SSS^2) \times (\R \times \SSS^2) \rightarrow \R, (F,(u,\z)) \mapsto F(u,\z)$ is smooth by \cite[Corollary 11.6]{Michor80}, it is easy to see that $\alpha$ is a Lie group action. Moreover, \cite[Corollary 11.6]{Michor80} implies that $T_{(F,(u,\z))}\ev (X,Y) = X(u,\z) + dF(u,\z;Y)$. Plugging the vector field $R^X \times 0$ into this formula, we deduce that 
 $$T \alpha \circ (R^X \times 0)(F,(u,\z)) = (X(F(u,\z)) ,0_{\z}) = X_0 (\alpha (F,(u,\z))).$$
 As relatedness is inherited by the Lie bracket of vector fields, we find
 \begin{align*}
 \left(\LB[R^X,R^Y](p)(u,\z), 0(\z)\right) &= T\alpha \left(\LB[R^X\times 0 ,R^Y \times 0]\right)(p,(u,\z)) \\ &= \left(\LB[X(\cdot,\z),Y(\cdot,\z)](u), 0 (\z)\right),  
 \end{align*}
 where the bracket on the right hand side is the usual bracket of vector fields. By definition, the Lie bracket on $\Lf (\mathcal{N})$ is given as $\LB[X,Y] = -\LB[R^X,R^Y](p)$ with the sign shift arising due to the computation with right invariant fields. This proves the claim.
\end{proof}

Having identified the Lie algebra, the next step to develop the Lie theory of $\mathcal{N}$ is to establish the regularity of this Lie group. Let us briefly recall the concept of a regular Lie group.  
 A Lie group $G$ is $C^r$-semiregular, $r\in \N_0 \cup \{\infty\}$, if for every $C^r$-curve $\gamma \colon [0,1] \rightarrow \Lf (G))$ the initial value problem 
  \begin{equation} \label{eq:regularity}\begin{cases}
     \eta'(t) =  T_{\one} \rho_{\eta (t)} (\gamma(t)) \qquad \rho_{g}(h):= h g\\
     \eta (0) = \one
    \end{cases}
 \end{equation}
 has a unique $C^{r+1}$-solution $\Evol (\gamma) := \eta \colon [0,1] \rightarrow G$. If moreover, the evolution map 
 $\evol \colon C^r ([0,1],\Lf (G))  \rightarrow G, \gamma \mapsto \Evol (\gamma)(1)$ is smooth, then $G$ is said to be $C^r$-regular. If $G$ is $C^\infty$-regular (the weakest of the regularity conditions), $G$ is called \emph{regular (in the sense of Milnor)}. To employ advanced techniques in infinite-dimensional Lie theory, one needs to require regularity of the Lie groups involved, cf.\ \cite{hg2015}. Note that for a constant curve $\eta(t) \equiv v\in \Lf(G)$, we simply recover the Lie group exponential $\evol (\eta (t)) = \exp (v)$. Thus every regular infinite-dimensional Lie group admits a Lie group exponential.

 Again, since the regularity of the group $\Diff^+ (\R)$ is a well-known fact (see e.g.\ \cite{KaM97,Sch15} for proofs in the convenient and in the Bastiani setting), it is not hard to imagine that these proofs carry over to $\mathcal{N}$ (as we have in principle just added another parameter to the construction). As the modification is again not trivial, we supply the necessary details now.
 
 Let us first take a look at the differential equation we need to solve. Consider first the initial condition given by a $C^k$-curve $\gamma \colon [0,1] \rightarrow \Lf (\mathcal{N})$. The Lie algebra $\Lf (\mathcal{N})$ has been identified as $C^\infty_c (\R \times \SSS^2)$, so we can think of elements in the Lie algebra as parameter-dependent vector fields on $\R$ (with smooth dependence on a parameter $\z \in \SSS^2$). To understand the derivative of the right translation $\rho_G(F) = F\cdot G$ where the product is the group product of $\mathcal{N}$, we apply \cite[Corollary 11.6]{Michor80} twice to \eqref{eq:comp:NU}. This shows that if we identify $X \in \Lf (\mathcal{N}) = C^\infty_c (\R \times \SSS^2)$ then 
 $$T\rho_F (X) = X \circ (F,\text{pr}_2) \text{, i.e. } T\rho_F (X)(t,\z) = X (F(t,\z),\z).$$
 In other words a $C^{k+1}$-curve $\eta \colon [0,1] \rightarrow \mathcal{N}$ solves the differential equation \eqref{eq:regularity} if the associated map $\eta^\wedge \colon [0,1] \times \R \times \SSS^2 \rightarrow \SSS^2$ solves the time and parameter dependent flow equation
 \begin{equation}\label{parameter:CRSODE}
  \frac{\partial}{\partial s} \eta^\wedge (s,t,\z) = \gamma^\wedge (s,\eta^\wedge (s,t,\z),\z).
 \end{equation}
 We recall now from \cite[Proposition 3.20]{AS15} that the evaluation map $$\ev_k \colon C^k([0,1],C^\infty_c (\R \times \SSS^2)) \times [0,1] \rightarrow C^\infty_c (\R \times \SSS^2)$$ is a $C^{\infty,k}$-map. As the evaluation $\ev$ of smooth functions is a smooth map, the chain rule \cite[Lemma 3.18]{AS15} for $C^{r,s}$-maps implies that  
 \begin{align*}
H \colon [0,1] \times \left(C^k([0,1] ,C^\infty_c (\R \times \SSS^2) \times (\R \times \SSS^2))\right) &\rightarrow \R,\\
(s,(\gamma,(u,\z))) &\mapsto \ev(\ev_k (\gamma,s),(u,\z))=\gamma(s)(u,\z)
   \end{align*} is a mapping of class $C^{k,\infty}$. Hence the right hand side of \eqref{parameter:CRSODE} is a $C^{k,\infty}$-mapping in its entries and we can apply the solution theory for differential equation whose right-hand side is of class $C^{r,s}$.
 
 \begin{lem}\label{solution:ODE}
  Let $k \in \N_0$. For every $\gamma \in C^k([0,1],C^\infty (\R \times \SSS^2))$ the differential equation \eqref{parameter:CRSODE} admits a unique solution $\eta^\wedge_\gamma \colon [0,1] \times (\R \times \SSS^2) \rightarrow \R$ which is of class $C^{k+1,\infty}$. Moreover, we obtain a map of class $C^{r+1,\infty}$ via 
  $$[0,1]\times \left((\R \times \SSS^2) \times C^k([0,1],C^\infty (\R \times \SSS^2))\right) \rightarrow \R , (s,(u,\z),\gamma) \mapsto \eta^\wedge_\gamma (s,u,\z).$$
 \end{lem}

 \begin{proof}
  Instead of the differential equation \eqref{parameter:CRSODE} we consider directly the differential equation
  \begin{equation}\label{more:general}
   \frac{\partial}{\partial s} \eta (s,u,\z) = H(s,\gamma,\eta(s,u,\z),\z)=\gamma(s)(\eta(s,u,\z),\z)
  \end{equation}
 where we regard $\gamma$ and $\z$ as parameters on which the right hand side $H$ depends smoothly. We have already seen that $H$ is a $C^{k,\infty}$-map (where the $k$ is with respect to the time variable $s$).
 We can now apply \cite[Theorem 5.6]{AS15} which ensures that for every parameter $\gamma,\z$ there exists a unique solution $\eta_\gamma^\wedge (s,u,\z)$ on some time interval (a priori depending on the parameters) around $0$. Note that this is exactly the differentiable dependence on parameters and time we claimed for the solution in the statement of the lemma. Since $\gamma(s)(\cdot,\z)$ is a compactly supported time-dependent vector field on $\R$ for every choice of the parameters $\gamma, \z$, the usual argument (see e.g.~\cite[Theorem 9.16]{Lee13}) shows that the solution exists on all of $[0,1]$.  
 \end{proof}
 
 We have now constructed candidates for the solution of the regularity problem for $\mathcal{N}$. Now these candidates need to be identified with smooth mappings taking values in the manifold of mappings.
 
 \begin{prop}\label{Nisregular}
  The Lie group $\mathcal{N}$ is $C^r$-regular for all $r \in \N_0$.
 \end{prop}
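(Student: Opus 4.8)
The plan is to combine the solution theory of \Cref{solution:ODE} with the manifold structure established in \Cref{prop:NisLie}, always keeping track of supports. First I would fix a $C^r$-curve $\gamma \colon [0,1] \to \Lf(\mathcal{N}) = C^\infty_c(\R \times \SSS^2)$. Since $\gamma([0,1])$ is compact in the fine very strong topology, a standard feature of these Whitney-type topologies on compactly supported function spaces is that there is a \emph{single} compact set $K \subseteq \R \times \SSS^2$ with $\supp \gamma(s) \subseteq K$ for all $s \in [0,1]$; equivalently, the time- and parameter-dependent vector fields $\gamma(s)(\cdot, \z)$ are supported in a fixed compact subset of $\R$, uniformly in $s$ and $\z$. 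Recording this uniform support is the first step, since it is exactly what confines the whole construction to the locally convex space $C^\infty_c(\R \times \SSS^2)$ and to a single chart domain of $\mathcal{N}$.

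Next I would feed $\gamma$ into \Cref{solution:ODE} to obtain the solution $\eta^\wedge_\gamma$ of the flow equation \eqref{parameter:CRSODE} with initial value $\eta^\wedge_\gamma(0, \cdot, \cdot) = p$; this map is of class $C^{r+1,\infty}$ and depends smoothly on $\gamma$. I then verify that $\eta_\gamma(s) \coloneq \eta^\wedge_\gamma(s, \cdot, \cdot)$ genuinely lands in $\mathcal{N}$: for each fixed $s$ and $\z$ the function $\eta^\wedge_\gamma(s, \cdot, \z)$ is the time-$s$ flow of a compactly supported vector field on $\R$ started at the identity, hence an orientation preserving diffeomorphism of $\R$, and by the uniform support it differs from $p$ only over $K$. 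Thus $\eta_\gamma(s) \in \mathcal{N}$ and, in the chart of $\mathcal{N}$ centered at $p$, it is represented by $\eta_\gamma(s) - p \in C^\infty_c(\R \times \SSS^2)$ with support contained in $K$. In particular $\eta_\gamma(s)$ lies in the same connected component of $C^\infty_{\mathrm{fS}}(\R \times \SSS^2)$ as $p$.

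The crucial — and most delicate — step is to promote the pointwise curve $\eta_\gamma$ to a $C^{r+1}$-curve into the manifold $\mathcal{N} \subseteq C^\infty_{\mathrm{fS}}(\R \times \SSS^2)$, and then $\evol$ to a smooth map. Here the warning in \Cref{setup:exp:law} that the exponential law fails for the fine very strong topology on a non-compact domain is the heart of the difficulty: I cannot simply read off $C^{r+1}$-ness of the curve from $C^{r+1,\infty}$-ness of its adjoint. The uniform support on $K$ resolves this by letting me factor the chart representation through the closed subspace of functions supported in $K$, which is a genuine \Frechet~space on which the exponential law of \cite{AS15} for $C^{r,s}$-maps applies; on it the $C^{r+1,\infty}$-map $\eta^\wedge_\gamma$ corresponds to an honest $C^{r+1}$-curve, so $\eta_\gamma$ is a $C^{r+1}$-curve into $\mathcal{N}$ solving \eqref{eq:regularity}. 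For the smoothness of $\evol$ I would argue as in \Cref{prop:NisLie}: the evolution operator is locally almost local, because the flow on a relatively compact set in the $\R$-direction over a chart in $\SSS^2$ depends only on the restriction of $\gamma$ there, and its restriction to curves with support in a fixed compact set is smooth by the smooth dependence on $\gamma$ in \Cref{solution:ODE}. \Cref{Glo:paramsmooth} then upgrades these two facts to smoothness of $\evol \colon C^r([0,1], \Lf(\mathcal{N})) \to \mathcal{N}$.

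Finally I would settle uniqueness: any $C^{r+1}$-solution $\eta$ of \eqref{eq:regularity} has an adjoint solving the parameter-dependent flow equation \eqref{parameter:CRSODE} for each $\z$, so uniqueness follows from the uniqueness clause of \Cref{solution:ODE}. Together with the smoothness of $\evol$ this shows that $\mathcal{N}$ is $C^r$-regular for every $r \in \N_0$. I expect the passage between the adjoint picture and the manifold-valued curve — forcing everything through the fixed compact support $K$ and the almost-local structure — to be the main obstacle, precisely because the fine very strong topology does not obey the exponential law.
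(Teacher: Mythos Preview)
Your outline follows the paper's proof closely: uniform compact support via compact regularity of the inductive limit, the exponential law on the fixed-support \Frechet\ step to get the $C^{r+1}$-curve, and Gl\"ockner's almost-local criterion for smoothness of $\evol$. Two points need correction, one minor and one substantive.

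First, the minor one: the support of $\eta_\gamma(s)-p$ is \emph{not} contained in the original $K$. The flow moves points, so if $\gamma$ is supported in $[a,b]\times\SSS^2$ and $R=\sup_{s,u,\z}|\gamma(s)(u,\z)|$, then $\eta_\gamma(s)-p$ is supported only in the enlarged set $[a-R,b+R]\times\SSS^2$. The paper makes this estimate explicitly via $|\eta_\gamma(s)(u,\z)-u|\le\int_0^s|\gamma(r)(\eta_\gamma(r)(u,\z),\z)|\,\mathrm{d}r\le R$. Your argument survives once you replace $K$ by this larger compact set.

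The substantive gap is in your application of \Cref{Glo:paramsmooth}. That result applies to maps with domain $P\times\Omega$ where $\Omega\subseteq C^\infty_c(M,E)$, but the domain of $\evol$ is $C^r([0,1],C^\infty_c(\R\times\SSS^2))$, which is not of this shape: the compactly-supported variable sits \emph{inside} the $C^r$-space, not outside. The paper resolves this (for $r=0$, which suffices since $C^0$-regularity implies $C^r$-regularity) by invoking Mujica's theorem to swap the variables, obtaining an isomorphism $C([0,1],C^\infty_c(\R\times\SSS^2))\cong C^\infty_c(\R\times\SSS^2,C([0,1],\R))$; only after this transposition does the almost-local machinery apply. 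Relatedly, your locality claim (``the flow on a relatively compact set depends only on the restriction of $\gamma$ there'') is too optimistic for the same reason as the support issue: flows escape, so one must first localise to sets $O_R$ bounding $\sup|\gamma|$ and then choose the covering families $U_n^R\supset V_n$ with a margin of width $R$, exactly as the paper does.
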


 \begin{proof}
  Fix $r \in \N_0$ and $\gamma \in C^r([0,1],C^\infty_c (\R \times \SSS^2))$. From \Cref{solution:ODE} we obtain a solution $\eta_\gamma^\wedge$ of \eqref{parameter:CRSODE}, 
  By construction we have for fixed $s \in [0,1]$ that $\eta_\gamma^\wedge (s,\cdot) \in C^\infty_c(\R \times \SSS^2)$, whence we can define $\eta_\gamma \colon [0,1] \rightarrow C^\infty (\R \times \SSS^2), \eta_\gamma (s) \coloneq \eta_\gamma^\wedge (s,\cdot)$. 
  Note that $C^\infty_c (\R \times \SSS^2) = \lim_{\rightarrow } C^\infty_L (\R \times \SSS^2)$, where $L$ runs through the compact subsets of $\R \times \SSS^2$. The inductive limit is compactly regular by \cite[4.7.8]{Michor80}. This means that since $\gamma \colon [0,1] \rightarrow C^\infty_c (\R \times \SSS^2)$ is continuous with compact image, the image of $\gamma$ is already contained in a step of the directed system. Thus for $\gamma$ there exists a compact set $L_1 (\gamma) \subseteq \R$ such that $\gamma (s)|_{\R \times \SSS^2 \setminus L_1(\gamma) \times \SSS^2} \equiv 0$ for all $s \in [0,1]$. Note that since the initial condition for the differential equation \eqref{more:general} is $\gamma (0)(u,\z) = u = p(u)$, we see that
  $$|\eta_\gamma(s)(u,\z) - u| = \left|\int_0^s \frac{d}{ds} \eta_\gamma (s)(u,\z)\mathrm{d} s\right| \leq \int_0^s |\gamma(s)(\eta_\gamma (s)(u,\z))|\mathrm{d} s$$
  is bounded by $\sup_{s,u,\z}|\gamma(s)(u,\z)| <\infty$. So if $R \coloneq \sup_{s,u,\z}|\gamma(s)(u,\z)|$ and $L_1 = [a,b]$, then outside of the compact set $L_R \coloneq [a-R,b+R]$ the map $\eta_\gamma (s)|_{\R \times \SSS^2 \setminus (L_R\times \SSS^2)}$ coincides with $p$ for all $s\in [0,1]$.\smallskip
  
  \textbf{Step 1:} \emph{$\eta_\gamma$ is a $C^{r+1}$-curve to $C^\infty (\R \times \SSS^2)$.}
  From the preliminary considerations we see that $\eta_\gamma-p$ takes its image in $C^\infty_{L_R \times \SSS^2} (\R \times \SSS^2)$. As the mapping 
  $$C^\infty_{fS} (\R \times\SSS^2) \rightarrow C^\infty_{fS} (\R \times\SSS^2),F \mapsto  F+p$$ is smooth (see \cite[Remark 4.11]{Michor80} and note that this even restricts on the component of $p$ to a manifold chart), it suffices to prove the $C^{k+1}$-property of $\eta_\gamma-p$ as a mapping to $C^\infty_{L_R \times \SSS^2} (\R \times \SSS^2)$. However, $\eta_\gamma^\wedge \colon [0,1] \times (\R \times \SSS^2) \rightarrow \R$ is a $C^{k+1,\infty}$-mapping due to \Cref{solution:ODE}. Subtracting $p$, we can exploit the exponential law \cite[Theorem B]{AS15} to see that the mapping $\eta_\gamma -p \colon [0,1] \rightarrow C^\infty_{L_R \times \SSS^2} (\R \times \SSS^2)$ is a $C^{k+1}$-map. The crucial point here is that the space on the right hand side is endowed with the compact open $C^\infty$-topology which coincides on the subspace with the fine very strong topology (see e.g.~\cite[Remark 4.5]{HaS17}). Thus $\eta_\gamma$ is a $C^{k+1}$-map. \smallskip
  
  \textbf{Step 2:} \emph{$\eta_\gamma$ is a $C^{k+1}$-map with image in $\mathcal{N}$}. By construction we have $\eta_\gamma (0)=p$ and for every fixed $\z \in \SSS^2$ the map $\eta_\gamma^\wedge (s,\cdot,\z)$ is in $\Diff(\R)$ by the flow property. Now $p(\cdot,\z) \in \Diff^+(\R)$ and $s \mapsto \gamma_\eta (s)(\cdot,\z) \in C^\infty_c (\R)$ is continuous for every $\z$, whence also as a curve to $\Diff (\R)$. We deduce that $\gamma_\eta (s)(\cdot,\z) \in \Diff^+(\R)$ for every $s,\z$ as it is a continuous curve starting in $\Diff^+ (\R)$. \smallskip
  
  \textbf{Step 3:} \emph{$C^0$-semiregularity and the map $\evol$.}
  From our discussion of the differential equation governing regularity of $\mathcal{N}$, we see that $\eta_\gamma$ is a solution for \eqref{eq:regularity} for the initial value $\gamma$. Thus $\mathcal{N}$ is $C^{k}$-semiregular.
  To establish $C^0$-regularity of the group $\mathcal{N}$, we consider the map 
  $$\evol \colon C([0,1],C^\infty_c (\R \times \SSS^2)) \rightarrow \mathcal{N} \subseteq C^\infty (\R\times \SSS^2),\quad \gamma \mapsto ((u,\z)\mapsto \eta_\gamma^\wedge(1,u,\z)),$$
  where again $\eta_\gamma^\wedge$ solves \eqref{parameter:CRSODE}. In view of \cite[Lemma 3.1]{hg2015}, the group $\mathcal{N}$ will be $C^0$-regular if $\evol$ is smooth.\smallskip
  
  \textbf{Step 4:} \emph{$e_K \colon C([0,1],C_K^\infty (\R \times\SSS^2))\rightarrow C^\infty_c(\R \times\SSS^2),\gamma\mapsto \evol(\gamma)-p$ is smooth for every $K \subseteq \R \times \SSS^2$ compact.}
  Consider for $R >0$ the open set 
  $$O_R \coloneq \{\gamma \in C([0,1],C_L^\infty (\R \times\SSS^2)) \mid \sup_{(s,u,\z) \in [0,1] \times \R \times \SSS^2} |\gamma(s)(u,\z)| < R\}.$$
  As the $O_R$ exhaust $C([0,1],C_L^\infty (\R \times \SSS^2))$, it suffices to prove smoothness of $e_K$ on every $O_R$. To this end, we recall from \Cref{solution:ODE} that the associated map 
  $$e_K^\wedge \colon O_R \cap C([0,1],C_K^\infty (\R \times \SSS^2)) \times \R \times \SSS^2 \rightarrow \R , (\gamma,u,\z) \mapsto \eta_\gamma(1)(u,\z)$$
  is smooth. We now need to create a situation where the exponential law can be applied. As the compact set $K$ is contained in some compact set $L \times \SSS^2$ we proceed as in Step 1: For every $\gamma \in O_R$, the solution $\eta_\gamma^\wedge$ to \eqref{parameter:CRSODE} takes its image in $C^\infty_{L_R \times \SSS^2}(\R \times\SSS^2)$ for a compact set $L_R$ only depending on $R$. Hence we deduce that there is a compact subset $K_R \subseteq \R \times \SSS^2$ such that $\evol(\gamma)-p$ takes its image in $C_{K_R}^\infty (\R \times \SSS^2)$. Applying now the exponential law \cite[Theorem B]{AS15}, the smoothness of $e_K$ follows. \smallskip
  
  \textbf{Step 5:} \emph{$\mathcal{N}$ is $C^{0}$-regular.} To prove that $\evol$ is smooth, we exploit that $C^\infty_c (\R \times \SSS^2) = \lim_{\rightarrow} C^\infty_L (\R \times \SSS^2)$ is a compactly regular inductive limit. Thus Mujica's theorem \cite{Muj83} yields an isomorphism 
  $$C([0,1],C^\infty_c (\R \times \SSS^2) ) = C([0,1],\lim_{\rightarrow} C^\infty_L (\R \times \SSS^2)) \cong \lim_{\rightarrow} C([0,1],C^\infty_L (\R \times \SSS^2).$$
  On each step $C^\infty_L (\R \times \SSS^2)$ the topology coincides with the compact open $C^\infty$-topology and we can thus apply the exponential law \cite[Theorem A]{AS15} for $C^{0,\infty}$-mappings.
  Thus $C([0,1],C^\infty_L(\R \times \SSS^2)) \cong C^\infty_L (\R \times \SSS^2,C([0,1],\R))$ for every $L$. Passing to the limit, we deduce that the mapping 
  $$\Theta \colon C([0,1],C^\infty_c (\R \times \SSS^2)) \rightarrow C^\infty_c (\R \times \SSS^2,C([0,1],\R)),\quad \gamma \mapsto \left((u,\z ) \mapsto \gamma(\cdot )(u,\z )\right)$$
  is an isomorphism of locally convex spaces. We combine this with the fact that the image of $\evol$ is contained in the component of the unit $p \in \mathcal{N}$. Thus it suffices to establish smoothness of the map
  $$E \colon C^\infty_c (\R \times\SSS^2,C([0,1],\R)) \rightarrow C^\infty_c (\R \times\SSS^2),\quad h \mapsto \evol(\Theta^{-1}(h)) -p.$$
  Pick now $K \subseteq \R \times \SSS^2$ compact and consider the restriction of $E$ to $C^\infty_K (\R \times \SSS^2 , C([0,1],\R))$. As $\Theta^{-1}(C^\infty_{L}(\R\times\SSS^2,C([0,1],\R)))=C([0,1],C_L^\infty(\R\times \SSS^2))$, we have $E|_{C^\infty_K (\R \times \SSS^2 , C([0,1],\R))}= e_K \circ \Theta^{-1}|_{C^\infty_K (\R \times \SSS^2 , C([0,1],\R))}$, whence the restriction is smooth by Step 4.
  
  Let us now show that $E$ is a locally almost local map. To this end, we work locally on the open sets $\Theta (O_R)$ which by construction exhaust $C_c^\infty (\R \times \SSS^2,C([0,1],\R))$. Hence we fix $R>1$ and may assume that $\sup_s \sup_{(u,\z)}|F(u,\z)(s)|< R$ for every $F$ we consider. From the definition of $E$ we see that $E(F)|_K= E(G)|_K$ on some compact subset $K$ if and only if $\eta_{\Theta^{-1}(F)}$ and $\eta_{\Theta^{-1}(G)}$ coincide on $K$. To obtain such sets, we define for $n \in \Z$ the open relatively compact sets $U_n^R \coloneq ]n-R,n+2+R[ \times \SSS^2$ and $V_n \coloneq ]n,n+2[\times \SSS^2$. Clearly the resulting families are locally finite and cover $\R \times \SSS^2$. Assume now that $F$ and $G$ coincide on the open set $U_n^R$. By definition this implies that $\gamma \coloneq \Theta(F)$ and $\gamma' \coloneq \Theta(G)$ satisfy $\gamma(s)(u,\z) = \gamma'(s)(u,\Z)$ for all $s \in [0,1], (u,\z) \in U_n^R$. Moreover, $\eta_\gamma$ is the unique solution of the initial value problem
  \begin{align}\label{diffeq:loc}
   \begin{cases}
    \frac{d}{d s} \eta_\gamma (s)(u,\z) = \gamma (s)(\eta_\gamma(s)(u,\z),\z)& \forall s \in [0,1], (u,\z) \in \R \times \SSS^2,\\
    \eta_\gamma (0)(u,\z) =u .&
   \end{cases}
  \end{align}
  Now if $(u,\z) \in V_n$, we deduce from $\sup_{(s,u,\z)}|\gamma^\wedge (s,u,\z)|<R$ that the flow of \eqref{diffeq:loc} starting at $(u,\z)$ stays inside of $U_n^R$. The same observations hold for $\gamma'$. Hence $\eta_\gamma (s)(u,\z) = \eta_{\gamma'} (s)(u,\z)$ for all $(u,\z) \in V_n$ by uniqueness of solutions to the differential equations \eqref{diffeq:loc}. We conclude that $E$ restricts on every $\Theta(O_R)$ to an almost local map, whence is locally almost local.  
  
  In conclusion $E$ satisfies the the prerequisites of \Cref{Glo:paramsmooth} and is thus smooth. This concludes the proof.
 \end{proof}

\subsection{Lie group structure of the Newman--Unti group}
We have seen in the last section that the component $\mathcal{N}$ of the semidirect product $\mathcal{N} \rtimes_\tau \Lor$ comprising the Newman--Unti group is a Lie group with respect to the fine very strong topology.
Thus we can ask whether the Newman--Unti group can be turned into a Lie group, as it is the semidirect product of two Lie groups. The key is of course the action 
$$\tau \colon \mathcal{N} \times \Lor \rightarrow \mathcal{N}, (F,f) \mapsto (t,\z) \mapsto K_f^{-1}(\z) \cdot F (K_f(\z) , f(\z))$$ which needs to be smooth (with respect to the to the manifold structure we just constructed on $\mathcal{N}$). 
However, we obtain first the following negative result,

\begin{prop}\label{prop:discontNU}
 The group action $\tau \colon \mathcal{N} \times\Lor \rightarrow \mathcal{N}$ is \textbf{not} smooth in the sense of convenient analysis. So in particular it is not smooth in the Bastiani sense and thus the Newman--Unti group does not become a Lie group (neither in the Bastiani nor in the convenient setting) if we endow $\mathcal{N}$ with the Lie group structure from \Cref{prop:NisLie}.
\end{prop}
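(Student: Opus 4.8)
\emph{The plan} is to refute smoothness via the smooth-curve test of convenient calculus: it suffices to exhibit a smooth curve in $\mathcal{N}\times\Lor$ whose image under $\tau$ fails to be a smooth curve in $\mathcal{N}$. I will in fact arrange for the image curve not to be continuous in the manifold topology of $\mathcal{N}$. Since smooth curves are continuous, this already shows that $\tau$ is not convenient-smooth, and as every Bastiani smooth map is convenient-smooth, $\tau$ can then also not be Bastiani smooth. The structural input is the description of connected components from \Cref{setup:exp:law}: because the target $\R$ is a vector space, two elements $F,G$ lie in the same connected component of the manifold $\mathcal{N}\subseteq C^\infty_{\mathrm{fS}}(\R\times\SSS^2)$ if and only if $F-G\in C^\infty_c(\R\times\SSS^2)$. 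Hence any continuous curve into $\mathcal{N}$ issuing from $F$ must remain inside the coset $F+C^\infty_c(\R\times\SSS^2)$.

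First I would choose a point of $\mathcal{N}$ lying \emph{outside} the identity component, taken from the image of the \BMS{}-inclusion \eqref{grp:injection}: fix a constant $c\in\R\setminus\{0\}$ and set $F(t,\z)\coloneq t+c$, so that $F\in\mathcal{N}$ while $F-p\equiv c$ is not compactly supported. Next I would pick an explicit smooth family of boosts $f_s\in\Lor$ with $f_0=\id$, represented by $\Lambda_{f_s}=\operatorname{diag}(e^{s/2},e^{-s/2})$; this is a smooth curve in the finite-dimensional Lie group $\Lor$. By \eqref{conffact} its conformal factor is $K_{f_s}(\zeta)=\frac{1+\lVert\zeta\rVert^2}{e^{s}\lVert\zeta\rVert^2+e^{-s}}$, so that $K_{f_0}\equiv 1$ while for $s\neq 0$ one has $K_{f_s}^{-1}(0)=e^{-s}\neq 1$.

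A direct computation, using that $F$ acts in its first argument as $s\mapsto s+c$, then gives $\tau(F,f_s)(t,\z)=K_{f_s}^{-1}(\z)\big(K_{f_s}(\z)\,t+c\big)=t+c\,K_{f_s}^{-1}(\z)$, that is $\tau(F,f_s)=p+c\,(K_{f_s}^{-1}\circ\operatorname{pr}_2)$, which is precisely the \BMS{}-action on the supertranslation part. Consequently $\tau(F,f_s)-\tau(F,f_0)=c\,\big((K_{f_s}^{-1}-1)\circ\operatorname{pr}_2\big)$ is constant along the non-compact $\R$-direction and, since $K_{f_s}^{-1}(0)-1\neq 0$ for $s\neq 0$, it is nonzero there; a function that is constant and nonzero along $\R\times\{\z\}$ cannot be compactly supported. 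By the component criterion, $\tau(F,f_s)$ therefore lies in a connected component of $\mathcal{N}$ different from that of $\tau(F,f_0)=F$ for every $s\neq 0$. Thus $\tau$ maps the smooth curve $s\mapsto(F,f_s)$ to a curve that leaves the component of its initial value at each $s\neq 0$, hence is discontinuous at $s=0$ and not a smooth curve; so $\tau$ is not convenient-smooth, a fortiori not Bastiani smooth, and $\NU=\mathcal{N}\rtimes_\tau\Lor$ cannot carry a Lie group structure for which $\mathcal{N}$ is the Lie group of \Cref{prop:NisLie}.

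\emph{The main obstacle} I anticipate is pinning down the component criterion with sufficient care: one must know that the manifold $\mathcal{N}$ built in \Cref{prop:tech_NUgroup} really has its $p$-component modeled on $C^\infty_c(\R\times\SSS^2)$, so that distinct conformal factors genuinely label distinct \emph{connected} components rather than merely distinct fibres of some coarser relation; this is exactly the content of \Cref{setup:exp:law}. A secondary, routine point is to verify that the chosen boost family yields $K_{f_s}\not\equiv 1$ for $s\neq 0$, ruling out any accidental cancellation that would make the difference compactly supported.
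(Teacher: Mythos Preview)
Your proposal is correct and follows essentially the same approach as the paper: both choose the constant shift $F(t,\z)=t+c$ (the paper takes $c=1$) and a smooth curve $s\mapsto f_s$ in $\Lor$, then observe that $\tau(F,f_s)=p+c\,K_{f_s}^{-1}\circ\text{pr}_2$ varies with $s$ by a function that is constant along the noncompact $\R$-direction and hence not compactly supported. The only cosmetic differences are your choice of the diagonal boost family versus the paper's $\begin{bmatrix}1+s&1\\s&1\end{bmatrix}$, and your explicit invocation of the component criterion for $C^\infty_{\mathrm{fS}}$ where the paper instead cites \cite[Lemma 42.5]{KaM97} to the same effect.
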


\begin{proof}
 We shall construct a smooth curve with values in $\mathcal{N} \times\Lor$ which is mapped by the action map $\tau$ to a non-smooth curve. This not only shows that the action can not be smooth in the Bastiani setting but also in the setting of convenient calculus.
 
 Identify elements in $\Lor$ with complex $2\times 2$ matrices and consider the smooth curve: 
 $$\Lambda \colon \R \rightarrow\Lor,\quad s \mapsto \Lambda_s \coloneq \begin{bmatrix} 1 + s & 1 \\ s & 1 \end{bmatrix}.$$
 Identifying $\SSS^2$ with the extended complex plane, via the stereographical projection map $\kappa \colon \SSS^2 \rightarrow \eC$, we define $\phi_s \coloneq \kappa^{-1}\circ \Lambda (s) \circ \kappa$. This yields for $\zeta = \kappa (\z)$ the relation $K_{\phi_s}(\z) = \frac{1+\lVert \zeta\rVert^2}{\lVert (1+s) \zeta + 1\rVert^2 + \lVert s \zeta + 1\rVert^2}$. To construct the desired curve with values in the product manifold $\mathcal{N} \times \Lor$ we consider the element $n \colon \R \times \SSS^2, (t,\z) \mapsto t+1$ of $\mathcal{N}$ and define the smooth curve $c(s) \coloneq (n , \phi_s)$. We will now show that the curve
 $\tau \circ c$ with values in $\mathcal{N}$ is not smooth. A trivial computation yields $\tau \circ c (s)(t,\z) = t + K_{\phi_s}( \z)^{-1}$ and for $\z =\kappa^{-1}(1)$ we obtain 
 $$\tau \circ c (s) (t,\kappa^{-1}(1)) = t + \frac{(2+s)^2+(1+s)^2}{2}.$$
  Observe now that for every compact interval $[a,b] \subseteq \R$ containing more than one point it is impossible to find $K \subseteq \R \times \SSS^2$ compact such that $\tau \circ c (s)|_{(\R \times \SSS^2) \setminus K}$ is constant in $s$ as no compact subset of $\R \times \SSS^2$ contains the set $\R \times \{\kappa^{-1}(1)\}$. However, a curve with values in $C^\infty_{\text{fS}} (\R \times \SSS^2)$ which violates this condition can not be smooth by \cite[Lemma 42.5]{KaM97}. Since $\mathcal{N}$ is an open submanifold of $C^\infty_{\text{fS}} (\R \times \SSS^2)$ the curve $\tau \circ c$ cannot be smooth. 
\end{proof}

The problem identified in the proof of \Cref{prop:discontNU} is that the action does not respect the convergence in $\mathcal{N}$. Namely, we picked an element $n$ which is not contained in the same connected component as the identity element $p \in\mathcal{N}$ and thus did not coincide with $p$ outside of any compact subset of $\R \times \SSS^2$. We will now show that this is the only defect of this action. In other words, if we restrict from $\mathcal{N}$ to the connected component 
$$\Ncvs \coloneq \{F \in \mathcal{N} \mid \exists K \subseteq \R \times \SSS^2 \text{ compact, such that } (F-p)(u,\z)=0 , \forall (u,\z) \not \in K\},$$
then we observe the following.

\begin{lem}
 The restriction of $\tau$ to the connected component $\Ncvs \subseteq \mathcal{N}$ yields a group action 
 $$\tau_0 \colon \Ncvs \times \Lor \rightarrow \Ncvs.$$ 
\end{lem}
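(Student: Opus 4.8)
Since \Cref{defn:mathcalN} already records that $\tau$ is a group action of $\Lor$ on $\mathcal{N}$, the group action axioms are automatically inherited by any restriction; the only point requiring proof is that $\mathcal{N}_0$ is \emph{invariant}, i.e.\ that $\tau(F,\phi) \in \mathcal{N}_0$ whenever $F \in \mathcal{N}_0$ and $\phi \in \Lor$. The plan is therefore to track how the action transforms the set on which $F$ differs from the unit $p$, and to show that a compact such set is sent to a compact such set.

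The key device is the auxiliary map $\Psi_\phi \colon \R \times \SSS^2 \rightarrow \R \times \SSS^2$, $(t,\z) \mapsto (K_\phi(\z)\, t, \phi(\z))$, in terms of which the action reads $\tau(F,\phi) = K_\phi^{-1} \cdot (F \circ \Psi_\phi)$. A one-line computation using $p(t,\z) = t = K_\phi^{-1}(\z)\, p(\Psi_\phi(t,\z))$ yields the identity $\tau(F,\phi) - p = K_\phi^{-1} \cdot \big( (F - p) \circ \Psi_\phi \big)$. Since the conformal factor $K_\phi$ from \eqref{conffact} is strictly positive, this identity shows that the set $\{(t,\z) \mid \tau(F,\phi)(t,\z) \neq p(t,\z)\}$ is precisely the $\Psi_\phi$-preimage of $\{(u,\z) \mid F(u,\z) \neq p(u,\z)\}$.

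It then remains to verify that $\Psi_\phi$ is proper. First I would observe that $\Psi_\phi$ is in fact a diffeomorphism: its inverse sends $(s,\z)$ to $\big(K_\phi(\phi^{-1}(\z))^{-1}\, s, \phi^{-1}(\z)\big)$, which is smooth because $\phi$ is a conformal diffeomorphism of $\SSS^2$ and $K_\phi$ is smooth and strictly positive on the \emph{compact} sphere, hence bounded away from $0$ and $\infty$. Being a homeomorphism, $\Psi_\phi$ is proper, so $\Psi_\phi^{-1}(K)$ is compact for every compact $K \subseteq \R \times \SSS^2$. Consequently, if $F \in \mathcal{N}_0$ agrees with $p$ outside a compact set $K$, then by the identity above $\tau(F,\phi) - p$ vanishes outside the compact set $\Psi_\phi^{-1}(K)$, whence $\tau(F,\phi) \in \mathcal{N}_0$; together with the already-available action axioms this gives the asserted action $\tau_0$. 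The argument carries no genuine obstacle; the only point to handle with care is that the boundedness of $K_\phi$ away from $0$ and $\infty$ — which is exactly what makes $\Psi_\phi$ proper — relies essentially on the compactness of $\SSS^2$.
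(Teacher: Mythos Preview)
Your proof is correct and follows essentially the same idea as the paper's: both arguments boil down to the observation that compactness of $\SSS^2$ forces $K_\phi$ to be bounded away from $0$, so that the $\R$-coordinate scaling in the action cannot push the support of $F-p$ off to infinity. The paper does this by an explicit bound (setting $M=\inf_{\z}K_\phi(\z)>0$ and checking directly that $\tau(F,\phi)=p$ for $|t|>T/M$), whereas you package the same computation via the diffeomorphism $\Psi_\phi(t,\z)=(K_\phi(\z)t,\phi(\z))$ and the identity $\tau(F,\phi)-p=K_\phi^{-1}\cdot\big((F-p)\circ\Psi_\phi\big)$, then invoke properness; the two arguments are equivalent, with yours being a slightly more conceptual reformulation of the paper's explicit estimate.
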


\begin{proof}
 We have to show that $\tau_0 (F,f) \in \Ncvs$ if $F \in \Ncvs$. To this end, pick $T>0$ such that outside of the compact set $K \coloneq [-T,T] \times \SSS^2$ we have $F (t,\z) = t = p(t,\z)$ if $|t|>T$. Since $\SSS^2$ is compact, there is $M \coloneq \inf_{\z} K_f (\z) > 0$. Now if $|t|> T/M$ we have $F(K_f(\z)t,\z) = K_f(\z)t$, whence $\tau_0(F,f) (t,\z) = K_f^{-1}(\z)F(K_f(\z)t,f(\z))= t = p(t,\z)$ for all such $t$ and $\tau_0 (F,f) \in \Ncvs$.  
\end{proof}

For the restricted action we have ruled out the pathology exploited in \Cref{prop:discontNU}. We will show now that the restricted action is smooth, whence it yields a Lie group structure on the restricted semidirect product.

\begin{thm}\label{thm:NUunit_is_Lie}
 The connected component of the unit $\NUovs \coloneq \Ncvs \rtimes_{\tau_0} \Lor$ of the Newman--Unti group becomes a Lie group with respect to the submanifold structure $\Ncvs \subseteq C^\infty_{\text{fS}} (\R \times \SSS^2)$. 
\end{thm}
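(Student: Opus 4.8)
The plan is to deduce the Lie group structure on $\NU_0 = \mathcal{N}_0 \rtimes_{\tau_0} \Lor$ from the general criterion for semidirect products of Lie groups (cf.\ \cite[Lemma 2.2.3]{HaN12}): since $\Lor$ is a finite-dimensional Lie group and $\mathcal{N}_0$ is a Lie group --- being the open identity component of the Lie group $\mathcal{N}$ from \Cref{prop:NisLie}, and hence an open submanifold and subgroup --- it suffices to prove that the restricted action $\tau_0 \colon \mathcal{N}_0 \times \Lor \to \mathcal{N}_0$ is smooth in the Bastiani sense. Once this is established, the semidirect product multiplication and inversion are automatically smooth and $\NU_0$ is an infinite-dimensional Lie group (with Lie algebra the corresponding semidirect product of Lie algebras).

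To analyse $\tau_0$ I would pass to the global chart on $\mathcal{N}_0$ given by $F \mapsto F - p \in C^\infty_c(\R \times \SSS^2)$, using that $\mathcal{N}_0$ is an open subset of the affine space $p + C^\infty_c(\R \times \SSS^2)$. Substituting $F = p + \gamma$ into the defining formula and using $p(t,\z)=t$ turns the action, up to the smooth translation by $p$, into the map
\[
 \Psi \colon C^\infty_c(\R \times \SSS^2) \times \Lor \to C^\infty_c(\R \times \SSS^2), \quad (\gamma, f) \mapsto \big((t,\z) \mapsto K_f^{-1}(\z)\, \gamma\big(K_f(\z) t, f(\z)\big)\big),
\]
where $K_f$ is the conformal factor \eqref{conffact}. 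The restriction to $\mathcal{N}_0$ is exactly what guarantees that the right-hand side again has compact support --- the feature that fails for the full group, as exhibited in \Cref{prop:discontNU}. Thus everything reduces to smoothness of $\Psi$, and here I would invoke Gl\"ockner's smoothness theorem for parameter-dependent (locally) almost local maps, \Cref{Glo:paramsmooth}, treating $P = \Lor$ as the parameter manifold.

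It remains to verify the two hypotheses of \Cref{Glo:paramsmooth}. For condition (1) one restricts $\Psi$ to $\Lor \times C^\infty_K(\R \times \SSS^2)$ for fixed compact $K$; since smoothness is local and $K_f$ is bounded below on a neighbourhood of any $f_0$, the output supports stay inside a fixed compact set there, so the fine very strong topology agrees with the compact open $C^\infty$-topology and the exponential law applies, reducing matters to smoothness of $K_f$, of the $\Lor$-action on $\SSS^2$, and of evaluation --- a routine argument of the type already used in \Cref{prop:NUtopgp}. Condition (2), (local) almost locality in the sense of \Cref{setup:almostlocal}, is the heart of the matter. Fixing $f_0 \in \Lor$ and shrinking to a relatively compact neighbourhood $Q$, compactness of $\SSS^2$ together with continuity of $K$ yields uniform bounds $0 < m \leq K_f(\z) \leq M$ for all $(f,\z)\in Q\times\SSS^2$. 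Since $\Psi(\gamma,f)(t,\z)$ depends on $\gamma$ only through its value at $\big(K_f(\z)t, f(\z)\big)$, one must build locally finite, relatively compact covers $(U_n)$, $(V_n)$ of $\R \times \SSS^2$, indexed by a common set, such that the rescaling--composition map sends each $V_n$ into $U_n$ for every $f \in Q$; then the implication ``$\gamma|_{U_n} = \gamma'|_{U_n} \Rightarrow \Psi(\gamma,f)|_{V_n} = \Psi(\gamma',f)|_{V_n}$'' is immediate.

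The main obstacle is precisely this construction of covers, and it is here that the \emph{multiplicative} nature of the rescaling $t \mapsto K_f(\z)\,t$ bites. Because $f(\z)$ ranges over all of $\SSS^2$, each cover member is forced to have the product form (interval)$\times\,\SSS^2$, and a cover by intervals of uniformly bounded length fails to be locally finite after multiplication by $[m,M]$. The fix I expect to use is a bounded central block together with geometrically growing intervals marching to $\pm\infty$, for which $[m,M]\cdot V_n \subseteq U_n$ holds with only bounded (indeed $\log_2(M/m)$-controlled) overlap, restoring local finiteness. With such covers $\Psi$ is locally almost local, \Cref{Glo:paramsmooth} yields its smoothness, and hence $\tau_0$ is smooth, completing the proof that $\NU_0$ is a Lie group.
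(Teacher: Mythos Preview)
Your proposal is correct and follows the paper's approach essentially step for step: reduce to smoothness of $\tau_0$, pass to the chart $F\mapsto F-p$ to land in $C^\infty_c(\R\times\SSS^2)$, and invoke \Cref{Glo:paramsmooth} by verifying smoothness on each $C^\infty_K$-step (via the exponential law and smoothness of $K_f$, the $\Lor$-action, and evaluation) together with (local) almost locality using two-sided bounds $1/R<K_f<R$ on a neighbourhood in $\Lor$. The only cosmetic difference is the explicit cover used for almost locality --- the paper takes uniform-width $V_a=\,]a-2,a+2[\times\SSS^2$ indexed by $a\in\Z$ and builds $U_a$ by rescaling the endpoints by $R$ and $1/R$ (splitting into the cases $a-2\geq 0$, $a+2\leq 0$, and $0\in V_a$), whereas you propose a central block plus geometrically growing intervals; both constructions address the same multiplicative-rescaling obstacle, and your choice arguably makes local finiteness of the $(U_n)$ family more immediately visible.
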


\begin{proof}
 It suffices to prove that the action $\tau_0$ is smooth. Note that the map 
 $$\Phi \colon C^\infty (\R\times \SSS^2) \supseteq \Ncvs \rightarrow C^\infty_c (\R \times \SSS^2) ,\quad F \mapsto F-p$$
 is a chart when restricted to the open set $\Ncvs$. We write $\Omega \coloneq \Phi(\Ncvs)$ for the open image of $\Phi$ in $C_c^\infty (\R \times \SSS^2)$ and see that it suffices to show that  
 \begin{align*}
  \tilde{\tau}\colon \Omega \times \Lor \rightarrow C^\infty_c(\R \times \SSS^2), \quad  (\alpha,f) \mapsto \left((t,\z) \mapsto K_f^{-1}(\z) \alpha (K_f(\z)t,f(\z))\right)
 \end{align*}
is smooth. We will proceed in two steps and verify the prerequisites of \Cref{Glo:paramsmooth}. Before we proceed, it is useful to construct neighborhoods in $\Lor$ bounding the maximal conformal factor: Since $\SSS^2$ is compact, for every $f \in \Lor$ the constant $\sup_{\z \in \SSS^2} K_f (\z)$ is finite. As the conformal factor $K_f (\z) $ is continuous (even smooth) in $(f,\z) \in \Lor \times \SSS^2$, the set $O_R = \{g \in \Lor \mid \sup_{\z \in \SSS^2} K_g (\z) < R\} \subseteq \Lor$ is open. Moreover $\Lor = \bigcup_{R\geq 1} O_R$. If $L \subseteq \R \times \SSS^2$ is a compact set, we can find $T_L > 0$ such that the compact set $[-T_L,T_L] \times \SSS^2$ contains $L$. Enlarging the set even further, we define the compact set $L_R \coloneq [-RT_L,RT_L] \times \SSS^2$ which again contains $L$ for all $R\geq1$. .

 \textbf{Step 1:} \emph{$\tilde{\tau}$ restricts to a smooth map on $\Phi(\Ncvs) \cap C^\infty_L (\R \times \SSS^2) \times \Lor$.}
  Clearly it suffices to prove the claim for every open set $\Omega \cap C^\infty_L (\R \times \SSS^2) \times O_R$ where $R\geq 1$. Consider now the map  associated to the restriction $\tilde{\tau}_R$ of $\tilde{\tau}$ given by 
  $$\tilde{\tau}_R^\vee \colon (\Omega \cap C^\infty_L (\R \times \SSS^2)) \times O_R \times \R \times \SSS^2 \rightarrow \R,\quad (\alpha,f,u,\z)\mapsto K_f^{-1}(\z)\cdot \alpha (K_f (\z)u,f(\z)).$$
  By construction, $\tilde{\tau}_R^\vee$ vanishes outside of $(\Omega \cap C^\infty_L (\R \times \SSS^2)) \times O_R \times L_R$. Since $L_R$ is compact, we can apply \cite[Lemma C.3]{AaS19} to see that $\tilde{\tau}_R$ will be smooth if $\tilde{\tau}_R^\vee$ is smooth. However, $K_f (\z), K_f^{-1}(\z)$ and the canonical action $\Lor \times \SSS^2 \rightarrow \SSS^2$ are all smooth. Thus the smoothness of $\tilde{\tau}_R^\vee$ follows directly from the smoothness of the evaluation $\ev \colon C^\infty_L (\R \times \SSS^2) \times \R \times \SSS^2 \rightarrow \R$ (cf.\ \cite[Proposition 3.20]{AS15}). 
  
 \textbf{Step 2:} \emph{The mapping $\tilde{\tau}$ is almost local.}
 Let $R \in \N$ and define the open set $Q_R \coloneq \{f \in O_R \mid \forall \z \in \SSS^2, 1/R < K_f (\z) < R\}$. Then $\Lor = \bigcup_{R \in \N} Q_R$. We fix $R \in \N$ and define families of relatively compact open sets as follows: Let $A$ be the set of all integers and define $V_a \coloneq ]a-2,a+2[ \times \SSS^2$ for $a \in A$. Then the $V_a$ form a locally finite family of relatively compact sets covering $\R \times \SSS^2$. To define the sets $U_a$ we need to distinguish several cases:
 For the first case, assume that $(0,\z) \in V_a$. We define $U_s \coloneq ]\min (R(a-2),-2R),\max(R(a+2),2R)[\times \SSS^2$. By construction, we have that if $(u,\z) \in V_a$, for every $f\in Q_R$ we have $|K_f (\z) u| \leq R|u|$, whence $(K_f(\z)u,f(\z)) \in U_a$ for all $f \in Q_R$. 
 For the other cases, assume that $(0,\z) \not \in V_a$. Define 
 $$U_a \coloneq \begin{cases}
                 ](a-2)/R,R(a+2)[\times \SSS^2 & a-2\geq 0\\
                 ]R(a-2), (a+2)/R[ \times \SSS^2 & a+2 \leq0
                \end{cases}
$$
Thus for $a-2\geq 0$, $(u,\z) \in V_a$ and $f \in Q_R$ we obtain the inequalities 
$$\frac{a-2}{R} <\frac{u}{R} \leq K_f (\z)u \leq R u < R(a+2)$$
and thus $(K_f (\z)u,f(\z)) \in U_a$ by construction. Similarly, we obtain the same in the case $a+2\leq 0$. In conclusion, we have constructed a locally finite family $(U_a)_a$ of relatively compact sets which covers $\R \times \SSS^2$. 
If now $f \in Q_R$ and $\alpha|_{U_a} \equiv \beta|_{U_a}$ then we deduce from the construction of the locally finite families that for $(u,\z) \in V_a$ we have 
\begin{align*}
 \tilde{\tau}(\alpha,f)(u,\z) = K_f^{-1} (\z)\alpha(K_f(\z)u,\z)=K_f^{-1} (\z)\beta(K_f(\z)u,\z)=\tilde{\tau}(\beta,f)(u,\z).a
\end{align*}
This shows that $\tilde{\tau}$ is almost local.
 
 Combining Steps 1 and 2, the smoothness of $\tilde{\tau}$ follows from \Cref{Glo:paramsmooth}.
\end{proof} 

While this indeed constitutes a Lie group structure on the unit component of the NU group (seen as a submanifold of $C^\infty_{fS}(\R\times \SSS^2)\times \Lor$), we note that this subgroup does not accommodate the image of the canonical inclusion of the BMS group. Recall that the inclusion was given by
$$I \colon \BMS \rightarrow \NU,\quad (F,\phi) \mapsto (p+F\circ \text{pr}_2,\phi).$$
Then the formula for the first component shows that the only element which gets mapped by $I$ into the subgroup $\Ncvs \rtimes \Lor$ is the identity supertranslation $F \equiv 0$.

Note that since $\Ncvs$ is the unit component of the Lie group $\mathcal{N}$ we have $\Lf (\Ncvs) = \Lf (\mathcal{N}) = C_c^\infty (\R \times \SSS^2)$ with the bracket computed in \Cref{prop:LieAlg-N}. 
Now $\NUovs$ is a semidirect product of $\Ncvs$ and $\Lor$, whence its Lie algebra is given as the semidirect product
$$\Lf (\NUovs) = C_c^\infty (\R \times \SSS^2) \rtimes_{d\tau} \Lf(\Lor),$$
where $d\tau$ is the derived action of $\tau$. Moreover, as $\mathcal{N}$ is $C^0$-regular by \Cref{Nisregular}, $\Lor$ is $C^0$-regular as a finite-dimensional Lie group and $C^0$-regularity is an extension property we obtain:

\begin{cor}\label{cor:NUunit_is_regular}
 The Lie group $\NUovs$ is $C^0$-regular. 
\end{cor}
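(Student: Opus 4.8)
The plan is to realize $\NU_0$ as an extension of Lie groups and to invoke the permanence of $C^0$-regularity under extensions. Concretely, by \Cref{thm:NUunit_is_Lie} the smoothness of $\tau_0$ makes the semidirect product $\NU_0 = \mathcal{N}_0 \rtimes_{\tau_0} \Lor$ a genuine Lie group, and as such it fits into the short exact sequence of Lie groups
$$\one \to \mathcal{N}_0 \to \NU_0 \to \Lor \to \one,$$
with $\mathcal{N}_0$ the normal subgroup and $\Lor$ the quotient.

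First I would record that the kernel $\mathcal{N}_0$ is $C^0$-regular. By \Cref{Nisregular} the full group $\mathcal{N}$ is $C^0$-regular, and $\mathcal{N}_0$ is its unit component, hence an open subgroup carrying the same Lie algebra $C^\infty_c(\R \times \SSS^2)$ computed in \Cref{prop:LieAlg-N}. Since the evolution of any continuous curve in the Lie algebra starts at the identity $p$ and therefore stays in the unit component, the $C^0$-regularity of $\mathcal{N}$ restricts to $C^0$-regularity of $\mathcal{N}_0$. The quotient $\Lor$ is a finite-dimensional Lie group and thus $C^0$-regular, as every finite-dimensional Lie group is (even analytically) regular.

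Next I would apply the theorem that $C^0$-regularity is an extension property: whenever in a short exact sequence of Lie groups both the kernel and the quotient are $C^0$-regular, so is the total group, cf.\ \cite{hg2015}. The hypotheses are supplied by the two previous paragraphs, so $\NU_0$ is $C^0$-regular, which is the claim.

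I do not expect a serious obstacle here, since all the analytic work has already been done: the solvability and smooth parameter-dependence of the defining flow equation \eqref{parameter:CRSODE} on $\mathcal{N}$ is \Cref{Nisregular}, and the smoothness of $\tau_0$ (needed both to have a Lie group structure and to make the sequence an extension of Lie groups) is \Cref{thm:NUunit_is_Lie}. The only point deserving a little care is to confirm that the extension theorem one cites is formulated without restrictions on the model spaces, so that it applies to our infinite-dimensional, non-Banach kernel $\mathcal{N}_0$; this is the case for the version in \cite{hg2015}. If one prefers a self-contained argument, one can instead solve the regularity ODE first in the finite-dimensional factor $\Lor$ and then feed the resulting (smooth) $\Lor$-curve as a parameter into a twisted evolution equation on $\mathcal{N}_0$, whose solvability and smooth dependence on that parameter is exactly what \Cref{Nisregular} provides; this mirrors the argument already used for the \BMS{} group in \cite{Prinz_Schmeding_1}.
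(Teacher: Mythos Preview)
Your proposal is correct and follows essentially the same approach as the paper: the paper's justification, given in the sentence immediately preceding the corollary, is precisely that $\mathcal{N}$ is $C^0$-regular by \Cref{Nisregular}, $\Lor$ is $C^0$-regular as a finite-dimensional Lie group, and $C^0$-regularity is an extension property. Your write-up simply makes the short exact sequence explicit and spells out why $\mathcal{N}_0$ inherits regularity from $\mathcal{N}$, which is a welcome clarification but not a different argument.
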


As a consequence of $C^0$-regularity we obtain the following properties of $\NUovs$:

\begin{lem}
 The strong Trotter and the strong commutator property hold for the Lie groups $\mathcal{N}$ and $\NUovs$.
\end{lem}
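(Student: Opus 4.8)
The plan is to reduce both assertions to the $C^0$-regularity already established, invoking the general theory that links regularity to the Trotter and commutator formulae. Recall that the \emph{strong Trotter property} strengthens the Trotter product formula $\exp(X+Y)=\lim_{n\to\infty}\big(\exp(\tfrac1n X)\exp(\tfrac1n Y)\big)^n$, and the \emph{strong commutator property} strengthens the commutator formula $\exp([X,Y])=\lim_{n\to\infty}\big(\exp(\tfrac1n X)\exp(\tfrac1n Y)\exp(-\tfrac1n X)\exp(-\tfrac1n Y)\big)^{n^2}$, by requiring the respective convergences to hold uniformly on compact subsets of the Lie algebra and, more precisely, at the level of continuous curves into the Lie algebra. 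By a theorem of Hanusch (as recalled and used for the BMS group in \cite{Prinz_Schmeding_1}), every $C^0$-regular Lie group modeled on a locally convex space possesses both of these properties.

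Granting this, the proof reduces to verifying the hypotheses of that theorem for the two groups at hand. First I would invoke \Cref{Nisregular}, which shows that $\mathcal{N}$ is $C^r$-regular for every $r\in\N_0$ and hence in particular $C^0$-regular; since its model space $C^\infty_c(\R\times\SSS^2)$ is locally convex (\Cref{setup:subspaces}), the cited theorem applies and yields the strong Trotter and strong commutator property for $\mathcal{N}$. Second, $\NU_0$ is $C^0$-regular by \Cref{cor:NUunit_is_regular} and is modeled on the locally convex space $C^\infty_c(\R\times\SSS^2)\times\Lf(\Lor)$, so the same theorem applies verbatim and delivers both properties for $\NU_0$.

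I do not expect a substantial obstacle here: the only point requiring care is to match the precise formulation of the cited result, namely to confirm that $C^0$-regularity — rather than a stronger regularity or an additional completeness assumption on the model space — is exactly what the strong versions of the two formulae require. The reason there is no real difficulty is that the analytic substance underlying the uniform convergence in these formulae, namely the smooth parameter-dependent solvability of the evolution equation \eqref{parameter:CRSODE} and the attendant smoothness of $\evol$, was already carried out in \Cref{solution:ODE} and \Cref{Nisregular}. Were one to bypass the black-box theorem, this is precisely the input one would use directly: the $C^0$-continuity of $\evol$ together with the smooth dependence on the initial curve is what forces the iterated-product approximations to converge uniformly, which is the content of the strong properties.
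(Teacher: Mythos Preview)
Your proposal is correct and follows essentially the same route as the paper: reduce to the $C^0$-regularity of $\mathcal{N}$ (\Cref{Nisregular}) and $\NU_0$ (\Cref{cor:NUunit_is_regular}) and invoke general results. The only minor discrepancy is in the attribution: the paper cites Hanusch's theorem \cite[Theorem 1]{Han20} for the implication ``$C^0$-regular $\Rightarrow$ strong Trotter'' and then a separate result of Gl\"ockner \cite[Theorem H]{glo15} for ``strong Trotter $\Rightarrow$ strong commutator'', whereas you bundle both into a single citation of Hanusch; mathematically this makes no difference.
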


\begin{proof}
 For any $C^0$-regular Lie group the strong Trotter property holds due to \cite[Theorem 1]{Han20}. Further, it is known that the strong Trotter property implies the (strong) commutator property \cite[Theorem H]{glo15}.
\end{proof}

On the other hand we have the following negative results:

\begin{prop}\label{prop:negativeLie}
 The Lie groups $\mathcal{N}$ and $\NUovs$ are not analytic Lie groups. 
\end{prop}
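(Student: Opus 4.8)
The plan is to mirror the strategy used for the non-analyticity of the BMS group in \cite{Prinz_Schmeding_1}: exhibit an analytic curve whose image under the group multiplication fails to be analytic, exploiting that the product of \(\mathcal{N}\) incorporates evaluations of smooth (but not necessarily analytic) functions. Since \(\mathcal{N}_0\) is the identity component of \(\mathcal{N}\) and analyticity is a local property at the identity (the chart \(\Phi\) of \Cref{thm:NUunit_is_Lie} is a chart for both \(\mathcal{N}_0\) and \(\mathcal{N}\) near \(p\)), the obstruction produced inside \(\mathcal{N}_0\) rules out analyticity of \(\mathcal{N}\) as well. For \(\NU_0 = \mathcal{N}_0 \rtimes_{\tau_0}\Lor\) I would observe that \(\mathcal{N}_0 \cong \mathcal{N}_0\times\{e\}\) is a closed analytic subgroup on whose submanifold the product restricts to the multiplication of \(\mathcal{N}_0\); hence analyticity of \(\NU_0\) would force analyticity of \(\mathcal{N}_0\), and it is enough to treat \(\mathcal{N}_0\).

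First I would pass to the chart \(\Phi\colon \mathcal{N}_0 \to \Omega \subseteq C^\infty_c(\R\times\SSS^2)\), \(F \mapsto F-p\), in which the group product becomes
\[
\mu(\alpha,\beta)(u,\z) = \beta(u,\z) + \alpha\big(u+\beta(u,\z),\z\big).
\]
If \(\mathcal{N}_0\) were an analytic Lie group, then \(\mu\colon \Omega\times\Omega \to C^\infty_c(\R\times\SSS^2)\) would be real-analytic in the Bastiani sense. Next I would feed in an explicit analytic curve. Fix a point \((u_0,\z_0)\), choose \(\chi \in C^\infty_c(\R\times\SSS^2)\) with \(\chi\equiv 1\) on a neighbourhood of \((u_0,\z_0)\), and a function \(\rho \in C^\infty_c(\R)\) that is smooth but not real-analytic at \(u_0\) (e.g.\ a suitably cut-off \(u\mapsto \exp(-1/(u-u_0)^2)\)); after rescaling we may assume \(p+\rho \in \mathcal{N}_0\). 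Set \(\alpha(u,\z)\coloneq \rho(u)\), so \(\alpha\in\Omega\), and consider the affine (hence analytic) curve \(s \mapsto \beta_s \coloneq s\chi\), which lies in \(\Omega\) for \(|s|\) small since \(\Omega\) is open and \(1+s\partial_u\chi>0\) there.

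The curve \(s \mapsto (\alpha,\beta_s)\) is analytic, so under the assumption \(s\mapsto \mu(\alpha,\beta_s)\) would be an analytic curve into \(C^\infty_c(\R\times\SSS^2)\). Composing with the continuous linear (hence analytic) point evaluation \(\mathrm{ev}_{(u_0,\z_0)}\) and using \(\beta_s(u_0,\z_0)=s\) gives the real function
\[
s \longmapsto \mathrm{ev}_{(u_0,\z_0)}\big(\mu(\alpha,\beta_s)\big) = s + \rho(u_0+s),
\]
which would then be real-analytic near \(s=0\). Since \(s\mapsto \rho(u_0+s)\) is not analytic at \(s=0\) by the choice of \(\rho\), this is a contradiction; hence \(\mu\), and therefore \(\mathcal{N}_0\), \(\mathcal{N}\) and \(\NU_0\), cannot be analytic.

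The main obstacle is not the computation but the careful bookkeeping of the analytic category: one must ensure that \((\alpha,\beta_s)\) stays in the chart domain \(\Omega\) for small \(s\) (so that \(p+\alpha,\ p+\beta_s\in\mathcal{N}_0\), which is stable since \(\Omega\) is open and positivity of the \(u\)-derivatives persists under small perturbations), and invoke correctly that an analytic Lie group has analytic multiplication in charts, that affine curves and point evaluations are real-analytic on \(C^\infty_c(\R\times\SSS^2)\), and that real-analyticity is preserved under composition. These are precisely the properties underlying the BMS argument of \cite{Prinz_Schmeding_1}, transported here to the composition-type product of \(\mathcal{N}\).
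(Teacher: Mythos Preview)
Your argument is correct and self-contained. The paper takes a more condensed route: it simply cites the known fact that $\Diff^+(\R)$ is not convenient real analytic \cite[43.3 Remarks]{KaM97}, asserts that this ``carries over'' to the parametrised version $\mathcal{N}$, and then notes that failure of convenient analyticity implies failure of Bastiani analyticity; for $\NU_0$ it invokes the semidirect product structure (and mentions, as an alternative, precisely the BMS-style argument you model your proof on). The underlying mechanism is the same in both---composition with a fixed smooth, non-analytic function cannot be analytic---but your version makes this explicit inside $\mathcal{N}_0$ rather than importing it from $\Diff^+(\R)$. Your approach buys independence from the external reference and makes the obstruction visible in the local chart; the paper's buys brevity at the cost of leaving the ``carries over'' step to the reader.
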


\begin{proof}
 It is well known that the group $\Diff^+(\R)$ is not convenient real analytic by \cite[43.3 Remarks]{KaM97}. As $\mathcal{N}$ is essentially a parametrized version of $\Diff^+(\R)$ these results carry over to $\mathcal{N}$. So we only need to notice that if a Lie group is not convenient real analytic, it can not be real analytic in the Bastiani sense.  
 
 For the group $\NUovs$ the corresponding results follow now at once from the semidirect product structure. 
 Note that as an alternative, we could also have deduced the lack of an analytic structure from the failure of $\tau$ to be analytic (this works exactly as the corresponding proof for the BMS group, see \cite[Proposition 3.9]{Prinz_Schmeding_1}). 
\end{proof}
Note that as a consequence of \Cref{prop:negativeLie} for both Lie groups the Baker--Campbell--Hausdorff series does not provide a local model for the multiplication on the Lie algebra. 
It is yet unclear as to whether the Lie group $\mathcal{N}$ and $\NUovs$ are locally exponential, i.e.~that their Lie group exponential induces a local diffeomorphism between a $0$-neighborhood in the Lie algebra and a unit neighborhood in the Lie group. Let us note that if $\mathcal{N}$ is not locally exponential this property carries over to $\NUovs$ due to the semidirect product structure (however, if $\mathcal{N}$ was locally exponential it is not easy to deduce local exponentiality of $\NUovs$). Since $\Diff^+(\R)$ is not locally exponential (see e.g.~\cite{KaM97}) we strongly suspect that also $\mathcal{N}$ is not locally exponential. A proof of this statement would require a detailed analysis and adaption of the arguments for the non local exponentiality of $\Diff (\R)$ in \cite{Gra88}. This is beyond the scope of the current paper, but we pose the following

\begin{conj} \label{conj:NU_not_locally_exponential}
The Lie groups $\mathcal{N}$ and $\NUovs$ are not locally exponential.
\end{conj}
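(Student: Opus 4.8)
The plan is to prove that the Lie group exponential $\exp_{\mathcal{N}}$ is not locally surjective: its image contains no neighbourhood of the unit $p$. This already precludes $\exp_{\mathcal{N}}$ from being a local diffeomorphism onto an identity neighbourhood, and hence settles $\mathcal{N}$. First I would reduce the statement for $\NU_0$ to the one for $\mathcal{N}_0$. Since $\Lor$ is finite dimensional and the projection $\pi\colon\NU_0\to\Lor$ is a Lie group morphism, one has $\pi\circ\exp_{\NU_0}=\exp_{\Lor}\circ\,d\pi$, so the $\Lor$-component of $\exp_{\NU_0}(X+Y)$ (with $X\in\Lf(\mathcal{N}_0)$, $Y\in\Lf(\Lor)$) equals $\exp_{\Lor}(Y)$. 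For $Y$ in a small $0$-neighbourhood, where $\exp_{\Lor}$ is injective, $\exp_{\NU_0}(X+Y)\in\mathcal{N}_0\times\{e\}$ forces $Y=0$, and then $\exp_{\NU_0}(X)=\exp_{\mathcal{N}_0}(X)$ because the inclusion $\mathcal{N}_0\hookrightarrow\NU_0$ is a morphism. Thus, near the unit, $\im(\exp_{\NU_0})\cap(\mathcal{N}_0\times\{e\})=\im(\exp_{\mathcal{N}_0})\times\{e\}$, so any failure of local surjectivity of $\exp_{\mathcal{N}_0}$ transfers to $\NU_0$. It therefore suffices to treat $\mathcal{N}$ (equivalently $\mathcal{N}_0$, since $\Lf(\mathcal{N}_0)=\Lf(\mathcal{N})$).

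The second step is to make $\exp_{\mathcal{N}}$ explicit. As $\mathcal{N}$ is $C^0$-regular (\Cref{Nisregular}), we have $\exp_{\mathcal{N}}(X)=\Evol(\text{const }X)(1)$, and the flow equation \eqref{parameter:CRSODE} for the constant curve $X\in C^\infty_c(\R\times\SSS^2)$ reads $\partial_s\eta^\wedge(s,t,\z)=X(\eta^\wedge(s,t,\z),\z)$. Hence $\exp_{\mathcal{N}}(X)(\cdot,\z)=\Fl_1^{X(\cdot,\z)}$ is exactly the time-$1$ flow of the vector field $X(\cdot,\z)$ on $\R$, computed fibrewise in the parameter $\z$. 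The crucial observation is that if a $\z$-independent diffeomorphism $f\in\Diff^+(\R)$ equals $\exp_{\mathcal{N}}(X)(\cdot,\z)$ for every $\z$, then fixing a single $\z_0$ already exhibits $f=\Fl_1^{X(\cdot,\z_0)}$ as the time-$1$ flow of the (compactly supported) vector field $X(\cdot,\z_0)$ on $\R$. Consequently, any $f$ lying outside $\im(\exp_{\Diff^+(\R)})$ cannot occur as a $\z$-independent element of $\im(\exp_{\mathcal{N}})$.

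I would then import from \cite{Gra88} (cf.\ \cite{KaM97}) a sequence $f_n\in\Diff^+(\R)$ with $f_n\to\id$ and $\supp(f_n-\id)\subseteq K$ for one fixed compact $K\subseteq\R$, none of which is the time-$1$ flow $\Fl_1^\xi$ of any compactly supported vector field $\xi$ on $\R$ --- this is precisely the non-local-exponentiality of $\Diff^+(\R)$. Setting $F_n(t,\z):=f_n(t)$ yields elements of $\mathcal{N}$ with $\supp(F_n-p)\subseteq K\times\SSS^2$, a fixed compact set, so $F_n\in\mathcal{N}_0$. On the fixed-support subspace $C^\infty_{K\times\SSS^2}(\R\times\SSS^2)$ the fine very strong topology agrees with the compact open $C^\infty$-topology (cf.\ \cite[Remark 4.5]{HaS17}), so $f_n\to\id$ in $C^\infty$ forces $F_n\to p$ in $\mathcal{N}$. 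By the preceding paragraph $F_n\notin\im(\exp_{\mathcal{N}})=\im(\exp_{\mathcal{N}_0})$, while $F_n\to p$; hence $\im(\exp_{\mathcal{N}})$ contains no neighbourhood of $p$, so $\mathcal{N}$ is not locally exponential, and with the first step the same follows for $\NU_0$.

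The reductions above are routine; the \emph{entire} difficulty is concentrated in the input from $\Diff^+(\R)$, namely producing an explicit sequence of compactly supported diffeomorphisms converging to $\id$ in every $C^k$-seminorm, none of which embeds in a flow. This is exactly the ``detailed analysis and adaptation of \cite{Gra88}'' flagged above. The obstruction is global rather than germ-theoretic: a single degenerate fixed point is unobstructed, whereas two nearby parabolic fixed points may carry incompatible germ invariants, which prevents $f_n$ from being smoothly conjugate to a translation on the intervening interval and hence from being embeddable in a flow. Verifying that such incompatible configurations can be realised with common compact support and with $f_n\to\id$ is the main obstacle, and is the reason we state the result as a conjecture rather than a theorem.
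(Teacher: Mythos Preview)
The paper does not prove this statement: it is explicitly labelled a \emph{Conjecture}, and the surrounding text says only that ``a proof of this statement would require a detailed analysis and adaption of the arguments for the non local exponentiality of $\Diff(\R)$ in \cite{Gra88}. This is beyond the scope of the current paper.'' There is thus no proof in the paper to compare against.

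Your outline is consistent with what the paper suggests. The two reductions are correct: the semidirect-product argument transferring non-surjectivity from $\mathcal{N}_0$ to $\NU_0$ is standard, and the observation that a $\z$-independent element $F(t,\z)=f(t)$ in $\im(\exp_{\mathcal{N}})$ would exhibit $f$ as $\Fl_1^{X(\cdot,\z_0)}$ for any fixed $\z_0$ is exactly right, so the fibrewise description of $\exp_{\mathcal{N}}$ from \eqref{parameter:CRSODE} does reduce the question to $\Diff^+(\R)$. You also correctly note that the convergence $F_n\to p$ goes through because the supports are uniformly contained in one compact $K\times\SSS^2$, where the fine very strong and compact-open $C^\infty$ topologies agree.

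You yourself identify the only genuine gap: one needs a sequence $f_n\to\id$ in $\Diff^+(\R)$ with common compact support, none of which is the time-$1$ map of any compactly supported vector field. That input is precisely what the paper declines to supply and is the reason the statement is posed as a conjecture. So your proposal is not a proof but a plausible proof strategy whose missing ingredient coincides with the one the paper flags; this matches the status the paper assigns to the result.
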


\section{Conclusion}

We have studied the NU group from the viewpoint of infinite-dimensional Lie group theory. In particular, we have discussed several possible topologies, which turn the supertranslation part \(\mathcal{N}\) of the NU group either into a topological group (\propref{prop:NUtopgp}) or into a Lie group (\propref{prop:NisLie}). However, we have also shown that only the connected component \(\NUovs\) of the identity of the complete NU group becomes a Lie group, as the group action is not smooth on the full NU group (\propref{prop:discontNU} and \thmref{thm:NUunit_is_Lie}). Furthermore, we have shown that both the Newman--Unti supertranslation group \(\mathcal{N}\) as well as the connected component of the identity \(\NUovs\) in the manifold topology of the Newman--Unti group, are regular in the sense of Milnor (\propref{Nisregular} and \colref{cor:NUunit_is_regular}). Moreover, we have shown that both of these Lie groups are not analytic (\propref{prop:negativeLie}). Moreover, we have shown that while the BMS group can be embedded into the NU group as a topological group (\lemref{lem:BMS_into_NU_topological}), this inclusion is not possible on the level of Lie groups, contrary to their Lie algebras, which split as a direct sum (cf.\ Equation~\eqref{eqn:decomposition_nu-bms-conf}). Finally, we remark our previous article \cite{Prinz_Schmeding_1} in which we have studied the BMS group from a Lie theoretic perspective.

\begin{appendix}
\section{Auxiliary results for \Cref{sect:normalsubgrp}}\label{app:aux}
In this appendix we compile some auxiliary results used in the construction of  the Lie group structure for the Newman--Unti group. These results are easy extensions of well known results. However, we were not able to find them in citable form in the literature, whence they are compiled here for the readers convenience. We start with a technical lemma:

\begin{lem}\label{lem:step2}
 If $F \in C^\infty_{fS} (\R \times K)$ for $K$ a compact manifold such that $\frac{\partial}{\partial u} F(u,\z) >0,\ \forall (u,\z) \in \R \times \SSS^2$, then there is an open $F$-neighborhood $O \subseteq C^\infty_{fS}(\R \times K)$ of mappings with this property.
\end{lem}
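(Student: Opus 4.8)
The plan is to produce an explicit basic neighborhood of $F$ in the \emph{very strong} topology all of whose members have everywhere-positive $u$-derivative; since the fine very strong topology is finer than the very strong topology (its subbase contains the basic neighborhoods of the very strong topology, cf.\ the construction of the elementary neighborhoods $\mathcal{N}^r$ in \Cref{setup: topo:open}), such a set is then automatically open in $C^\infty_{fS}(\R \times K)$ and contains $F$. The essential point is that a very strong basic neighborhood is built from a \emph{locally finite} family of elementary neighborhoods, so the error bound may be chosen to vary from one compact piece to the next. This is exactly the freedom we need: on the non-compact domain $\R \times K$ the positive function $\partial_u F$ need not be bounded away from $0$ at infinity, so no single uniform bound is available, but a locally adapted bound is.

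First I would fix charts adapted to the product structure. Since $K$ is a compact manifold, cover it by finitely many charts $(W_1,\chi_1),\dots,(W_m,\chi_m)$ and pick compact sets $C_j \subseteq W_j$ with $\bigcup_{j=1}^m C_j = K$. On the target $\R$ use the identity chart $\psi = \id_\R$. Indexing by $I = \Z \times \{1,\dots,m\}$, set $V_{n,j} \coloneq \,]n-1,n+1[\,\times W_j$ with chart $\varphi_{n,j} \coloneq \id_{]n-1,n+1[} \times \chi_j$, and choose the compact sets $A_{n,j} \coloneq [n-\tfrac12,n+\tfrac12]\times C_j \subseteq V_{n,j}$. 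The family $(V_{n,j})_{(n,j)\in I}$ is locally finite and the $A_{n,j}$ cover $\R \times K$. Note that each $\varphi_{n,j}$ leaves the $u$-coordinate unchanged, so in these charts the multiindex derivative $\partial^{(1,0,\dots,0)}(\psi \circ G \circ \varphi_{n,j}^{-1})$ equals $(\partial_u G)\circ\varphi_{n,j}^{-1}$; this is what lets us read off positivity of $\partial_u G$ directly from the chart expression.

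Next, on each compact set $A_{n,j}$ continuity of $\partial_u F$ together with compactness gives $\delta_{n,j} \coloneq \min_{A_{n,j}} \partial_u F > 0$. Taking $r = 2$ (so that the first-order multiindex $(1,0,\dots,0)$, with $|\alpha| = 1 < 2$, is among those controlled) and $\varepsilon_{n,j} \coloneq \delta_{n,j}$, I would form the basic neighborhood
$$O \coloneq \bigcap_{(n,j)\in I} \mathcal{N}^{2}\!\left(F; A_{n,j}, \varphi_{n,j}, \psi, \varepsilon_{n,j}\right).$$
For any $G \in O$ and any $(u,\z) \in A_{n,j}$ the defining estimate yields $|\partial_u G(u,\z) - \partial_u F(u,\z)| < \delta_{n,j} \leq \partial_u F(u,\z)$, hence $\partial_u G(u,\z) > 0$. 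As the $A_{n,j}$ cover $\R \times K$, this gives $\partial_u G > 0$ everywhere, and clearly $F \in O$. Thus $O$ is the sought-after neighborhood.

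The only genuine obstacle is the one already flagged: because $\R \times K$ is non-compact, $\partial_u F$ may decay to $0$, so the argument really requires the Whitney-type structure with position-dependent $\varepsilon_{n,j}$ and would fail for the coarser compact-open $C^\infty$-topology, where only a single uniform bound is available. The remaining verifications—local finiteness of $(V_{n,j})$, that the $A_{n,j}$ exhaust the domain, and that the chosen charts preserve the $u$-direction so that chart derivatives coincide with $\partial_u$—are routine and are the only points needing careful bookkeeping.
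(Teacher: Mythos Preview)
Your proof is correct and follows essentially the same strategy as the paper's: cover $\R\times K$ by a locally finite family of compact product sets (using a finite atlas on $K$ and a countable compact cover of $\R$), on each piece use compactness to extract a positive lower bound for $\partial_u F$, and then intersect the corresponding elementary neighborhoods to obtain a basic (very strong) neighborhood whose members retain positive $u$-derivative. Your write-up is in fact more explicit than the paper's---you spell out the chart compatibility with the $u$-coordinate, the choice $r=2$, and the passage from the very strong to the fine very strong topology---but the underlying argument is identical.
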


\begin{proof}
  Pick a finite atlas $(\psi_j,U_j)_{j=1,\ldots k}$ of $K$ together with compact sets $A_j \subseteq U_j$ such that the $A_j$ cover $K$. Choose a locally finite family $(K_i)_{i \in \N}$ of compact subsets which cover $\R$. 
  For every $i \in \N,j=1,\ldots ,k$ there is $\varepsilon_{i,j} >0$ such that for $G \in C^{\infty} (\R \times K)$ the condition on the directional derivative $d(F-G)$
\begin{equation}\label{eq:loc:estimate}
 \sup_{(u,\z) \in K_i \times A_j}\left|d\left((F-G)\circ (\id_R \times \psi_j^{-1})\right) (u,\z;1,0)\right| < \varepsilon_{i,j} 
\end{equation}
implies that $G(\cdot,\z)$ satisfies $\frac{\partial}{\partial u} G(u,\z) >0$ for all $(u,\z) \in K_i \times A_j$. By construction, the family of compact sets $(K_i \times A_j)_{i,j}$ is locally finite and we see that there is a basic open neighborhood $O_F$ of $F$ in $C^\infty_{\text{fS}} (\R \times K)$ consisting only of mappings which satisfy \eqref{eq:loc:estimate} (compare \cite[Definition 1.6]{HaS17}).
\end{proof}

The next result is a parametrized version of \cite[Lemma 2.1.3]{Hir76} (the proof is completely analogous apart from the presence of another parameter).

\begin{lem}\label{para:emb}
 Let $K$ be a compact manifold and $F \in C^\infty (\R \times K)$ be a mapping such that for every $k \in K$ the partial map $F(\cdot,k)$ is an embedding. Furthermore, we fix $W \subseteq \R$ open and relatively compact. Then there exists an open neighborhood $N_F (W)$ of $F$ in the fine very strong topology such that for every $G \in N_F(W)$ the partial maps $G(\cdot, k) \colon \R \rightarrow \R, k\in K$ satisfy
 \begin{enumerate}
  \item $G(\cdot, k)$ is an immersion,
  \item $G(\cdot,k)|_W$ is an embedding.
 \end{enumerate}
\end{lem}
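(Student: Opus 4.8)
The plan is to imitate the classical proofs that immersions form an open set and that embeddings of a compact domain form an open set (e.g.\ \cite[Lemma 2.1.3]{Hir76}), carrying the parameter $k\in K$ along and using compactness of $K$ and of $\overline{W}$ throughout. Both conclusions will be obtained by controlling only the first $u$-derivative on suitable compact sets, so the required neighborhood can be taken to be a basic very strong (hence fine very strong) neighborhood of $F$.

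For part (1) I would argue as follows. Each partial map $F(\cdot,k)$ is an embedding of $\R$, hence an immersion, so $\partial_u F(u,k)\neq 0$ for all $(u,k)$. By connectedness of $\R$ this derivative has constant sign along each fibre, and by continuity in $k$ the sign is locally constant on $K$; passing to the components of $K$ we may assume $\partial_u F>0$ everywhere. Then \Cref{lem:step2} provides an open neighborhood $O\subseteq C^\infty_{\mathrm{fS}}(\R\times K)$ of $F$ all of whose elements $G$ satisfy $\partial_u G>0$, i.e.\ every $G(\cdot,k)$ is an immersion. (For disconnected $K$ one applies the symmetric version of \Cref{lem:step2} on each component and intersects the finitely many neighborhoods.)

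For part (2) I would use a Hadamard-quotient together with a compactness argument. Fix a compact interval $[a,b]\supseteq\overline{W}$ and set, for $(u,u',k)\in[a,b]\times[a,b]\times K$,
\[
 \rho_F(u,u',k)\coloneq\int_0^1\partial_u F\big(u'+t(u-u'),k\big)\dd t,
\]
so that $F(u,k)-F(u',k)=(u-u')\,\rho_F(u,u',k)$. Since $\partial_u F>0$ everywhere we have $\rho_F>0$, and $\rho_F$ is continuous on the compact set $[a,b]\times[a,b]\times K$, so $\rho_F\ge c$ for some $c>0$. For a perturbation $G$ I would estimate
\[
 \big|\rho_F(u,u',k)-\rho_G(u,u',k)\big|\le\sup_{(v,k)\in[a,b]\times K}\big|\partial_u F(v,k)-\partial_u G(v,k)\big|,
\]
and the right-hand side is $<c/2$ on a basic very strong neighborhood $O'$ of $F$ obtained by covering the compact set $[a,b]\times K$ by finitely many charts and intersecting the corresponding elementary neighborhoods controlling the first derivative. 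On $O'$ we then have $\rho_G\ge c/2>0$, so each $G(\cdot,k)$ is injective on $[a,b]$, in particular on $W$. Combined with part (1), $G(\cdot,k)|_W$ is an injective immersion of the open set $W$; being a local diffeomorphism it is an open map, and an injective, open, continuous map is a homeomorphism onto its image, so $G(\cdot,k)|_W$ is an embedding.

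Finally I would put $N_F(W)\coloneq O\cap O'$, which is an open neighborhood of $F$ in the fine very strong topology on which both conclusions hold. The only genuine subtlety is the estimate in part (2): one must check that the single quantity $\sup_{[a,b]\times K}|\partial_u(F-G)|$ simultaneously bounds $\rho_F-\rho_G$ and defines an open condition in the very strong topology, and it is precisely the compactness of $[a,b]\times K$ (reducing the control to finitely many elementary neighborhoods) that makes this work.
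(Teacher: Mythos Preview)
Your argument is correct, and for part~(2) it takes a genuinely different route from the paper's. The paper follows Hirsch's classical proof of \cite[Lemma 2.1.3]{Hir76} almost verbatim with the extra parameter $k\in K$ carried along: it argues by contradiction, assuming a sequence $G_n\to F$ (in $C^1$ on $\overline{W}\times K$) with non-injective partial maps, extracts limit points $a_k=b_k$ from pairs $a_{n,k}\neq b_{n,k}$ with $G_n(a_{n,k},k)=G_n(b_{n,k},k)$, and uses a first-order Taylor expansion together with the unit vectors $(a_{n,k}-b_{n,k})/|a_{n,k}-b_{n,k}|$ to force $\partial_u F(b_k,k)=0$, contradicting the immersion hypothesis.

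Your Hadamard-quotient approach replaces this sequential contradiction by a direct quantitative estimate: the identity $F(u,k)-F(u',k)=(u-u')\rho_F(u,u',k)$ together with a uniform lower bound $\rho_F\ge c>0$ on the compact cube $[a,b]^2\times K$ immediately yields an explicit $C^1$-neighborhood on which injectivity of each $G(\cdot,k)|_{[a,b]}$ persists. This is more elementary and constructive, and it exploits the one-dimensionality of the fibre cleanly (in higher fibre dimension the Hirsch argument generalizes more readily, which is presumably why the paper follows it). Both proofs ultimately hinge on controlling $\partial_u(F-G)$ on the single compact set $\overline{W}\times K$, so the resulting neighborhoods are of the same nature; your explicit mention that this reduces to finitely many elementary neighborhoods (hence is open already in the very strong, and a fortiori the fine very strong, topology) is exactly the point the paper also needs and records in the remark following the lemma.
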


\begin{proof}
\Cref{lem:step2} provides an open neighborhood $O$ of $F$ in the fine very strong topology consisting only of mappings $G$ such that the partial maps $G(\cdot, k)$ are immersions (note that in the proof we assumed the stronger condition that the derivative of the partial maps $F(\cdot,k)$ is everywhere positive. It is easy to see that the same argument holds for mappings whose partial mappings have everywhere negative derivative). 
 We can shrink $O$ such that every $G \in O$ restricts to an embedding $G(\cdot, k)$ on $W$ for all $k \in K$. Assume to the contrary that it were not possible to construct such a neighborhood. Then there must be a sequence $G_N \rightarrow F_N$ of mappings $G_N \in C^\infty (\R \times K)$, where convergence of the sequence is in the sense of uniform convergence of the function and its first derivative on the compact set $\overline{W} \times K$ (actually convergence here is even stronger, but we only need the convergence on the compact set). In other words
\begin{align}
 \lim_{n \rightarrow \infty}\sup_{i=0,1}\sup_{(x,k) \in \overline{W} \times K}\left| \frac{\partial^i}{\partial x} G_n (x,k)- \frac{\partial^i}{\partial x^i} F (x,k)\right| = 0. \label{deriv:eq}
\end{align}
By assumption there exist distinct $a_{n,k} , b_{n,k} \in W$ such that $G_n (a_{n,k},k) = G_n (b_{n,k},k)$ for every $n \in \N, k \in K$. Now $\overline{W}$ is compact, whence $a_{n.k} \rightarrow a_k$ and $b_{n,k} \rightarrow b_k$ for some $a_k,b_k \in \overline{W}$. Evaluating \eqref{deriv:eq} for $i=0$ this entails $F(a_k,k) = F(b_k,k)$ for every $k \in K$. Hence $a_k=b_k$.
Choosing subsequences if necessary, we may assume that the sequence of unit vectors 
$$v_{n,k} \coloneq \frac{a_{n,k} - b_{n,k}}{|a_{n,k}-b_{n,k}|}$$ 
converges (either to $1$ or $-1$). Now we take a Taylor expansion in the $x$-variable while we treat the $k$-variable as a parameter. Note that the resulting function in $x$, as well as its derivative in $x$ are both uniformly continuous in the parameter $k$ (using compactness of $K$).
Now by uniformity of the Taylor expansion (with remainder in integral form) we see that 
$$\lim_{n \rightarrow \infty} \sup_{(x,k) \in \overline{W} \times K} \frac{\left| G_n(a_{n,k},k)-G_n (b_{n,k},k) - \frac{\partial}{\partial x} F(b_{n,k},k)(a_n-b_n)\right|}{|a_n-b_n|} = 0.$$
Thus $\frac{\partial}{\partial x} F(b_{n,k},k)v_{n,k} \rightarrow 0$ for every $k \in K$. Note that the absolute values of this sequence converge to $|\frac{\partial}{\partial x} F(b_k,k)|$ and this contradicts that $F(\cdot,k)$ is an immersion for every $k \in K$. 
\end{proof}
\begin{rem}\label{rem:intersect}
 From the proof of \Cref{para:emb} it is clear that the condition that $G(\cdot,k)$ restricts to an embedding on $W$ for every $k \in K$ requires us only to control the function and its derivative on the compact set $\overline{W} \times K$. Hence if we have a locally finite family $(K_i)_{i \in \N}$ of compact sets covering $\R$, we can intersect countably many of the neighborhoods constructed in \Cref{para:emb} and still retain an open set in the fine very strong topology. This is due to the fact that the fine very strong topology admits to control a function and its derivatives on any locally finite family of compacta simultaneously, cf.~\cite{HaS17} and see also \cite[Proof of Theorem 2.1.4]{Hir76}.
\end{rem}

This enables us to prove \Cref{prop:tech_NUgroup}, i.e.~$\mathcal{N}$ is an open subset of $C^\infty_{\mathrm{fS}} (\R \times \SSS^2)$. While the proof is similar to the classical proof, it carries an additional parameter.

\begin{proof}[Proof of \Cref{prop:tech_NUgroup}]
 Adapting \cite[Theorem 1.7]{Hir76} let us recall that a mapping $f \in C^\infty (\R)$ is an element of $\Diff^+ (\R)$ if and only if $f$ satisfies the following conditions
 \begin{enumerate}
  \item $\frac{\mathrm{d}}{\mathrm{d}u} f(u) >0,\ \forall u \in \R$ (i.e.~ $f$ is an orientation preserving local diffeomorphism)
  \item $f$ is proper,
  \item $f$ is injective.
 \end{enumerate}
To see that this is true, we need only notice that these conditions entail that $f$ is a diffeomorphism: Local diffeomorphisms are open maps, i.e.~the image of $f$ is open. Proper maps have closed image, whence by connectedness of $\R$, $f$ must be surjective. Hence $f$ is a bijective local diffeomorphism and thus a diffeomorphism.

Let us now show that for each $F \in \mathcal{N}$ there is an open neighborhood which contains only mappings $G \colon \R \times \SSS^2 \rightarrow \R$ whose partial mappings $G(\cdot, \z)$ satisfy 1.-3.
Indeed, we already know from \Cref{lem:step2} that $F$ admits an open neighborhood $O$ such that every partial map $G(\cdot,\z)$ for $G\in O$ satisfies 1.\\[.5em]

\textbf{Step 1:} \emph{A neighborhood of proper maps}. The set of proper maps $\mathcal{P} \subseteq C^\infty_{\mathrm{fS}} (\R \times \SSS^2)$ is open by \cite[Theorem 1.5]{Hir76}. For every $\z \in \SSS^2$ the inclusion $i_\z \colon \R \rightarrow \R \times \SSS^2, u \mapsto  (u,\z)$ is a smooth and proper map. So for every $G \in \mathcal{P}$ the partial map $G(\cdot , \z) = G \circ i_\z$ is proper. Since $\mathcal{N} \subseteq \mathcal{P}$ by \Cref{lem:properness}, we intersect $O$ and $\mathcal{P}$ to obtain an open neighborhood of mappings satisfying 1.-2.

\textbf{Step 2:} \emph{An $F$-neighborhood $O \subseteq C^\infty (\R \times \SSS^2)$ such that $G(\cdot, \z)$ is injective for $G \in O$}. 
The argument is a variant of \cite[Theorem 2.1.4]{Hir76}: Fix the locally finite family $K_i = [i-1,i+1], i\in \mathbb{Z}$ of compact sets covering $\R$. Then the open sets $U_i = ]i-2,i+2[, i \in \mathbb{Z}$ contain the compact sets $K_i \subseteq U_i$ and form a locally finite family. As $F(\cdot, \z)$ is a proper embedding (so in particular a closed injective map), we have for each $\z \in \SSS^2$ two closed disjoint sets $F(K_i, \z)$ and $F(\R \setminus U_i , \z)$. By construction, the first set is compact and connected, while the second set is closed and consists of two connected components. 
The minimal distance $3r_{i,\z}$ (for some $r_{i,\z}>0$ between the closed sets is realized by one of the pairs $(F(i-2,\z),F(i-1,\z))$ or $(F(i+1,\z),F(i+2,\z))$.

We define now two disjoint open sets 
$$A_{i,\z} \coloneq  \{x \in \R \mid \sup_{y \in F(K_i,\z)} |x-y| < r_{i,\z}\} \text{ and } B_{i,\z} \coloneq \{x \in \R \mid \sup_{y \in F(K_i,\z)}|x-y| > 2r_{i,\z}\}.$$
By construction, $F(K_i,\z) \subseteq A_{i,\z}$ and $F (\R\setminus U_i , \z) \subseteq B_{i,\z}$. Using continuity of $F$ and \cite[3.2.10 The Wallace Theorem]{Eng89}, there is a compact neighborhood $L_{i,\z}$ of $\z$ such that $F(K_i \times L_{i,\z}) \subseteq A_{i,\z}$. We will now shrink $L_{i,\z}$ such that also $F((\R \setminus U_i) \times L_{i,\z}) \subseteq B_{i,\z}$ holds. By construction, $\frac{\partial}{\partial u} F(u,s) >0$  for all $s \in \SSS^2$, whence 
\begin{align}\label{eq:boundaryreq}
 F(i-2,s), F(i+2,s) \in B_{i,\z} \text{ implies } F(\R \setminus U_i,s) \subseteq B_{i,\z}. 
\end{align}
Applying again continuity of $F$ and Wallace theorem, we can shrink $L_{i,\z}$ such that every $s \in L_{i,\z}$ satisfies \eqref{eq:boundaryreq}. Define now the open $F$-neighborhood 
$$N_{F, i,\z} \coloneq \{G \in C^\infty (\R \times \SSS^2) \mid \forall s \in L_{i,\z}, G(K_i , s) \subseteq A_{i,\z}, G(i-1,s), G(i+2,s) \in B_{i,\z}\} \cap O,$$
where $O$ is the open $F$-neighborhood from \Cref{lem:step2}. By construction, \eqref{eq:boundaryreq} implies that every $G \in N_{F,i,\z}$ satisfies $G(\R \setminus U_i,s)\subseteq B_{i,\z}$.

Now we exploit compactness of $\SSS^2$ to find finitely many $\z_{i,1}, \ldots, \z_{i,j_i}$ such that $\SSS^2 = \bigcup_{k=1}^{j_i} L_{i,\z_k}$. Intersecting the open sets $N_{F,i,\z_k}$, we find an open $F$-neighborhood of functions which satisfy the condition \eqref{eq:boundaryreq} for all $s \in \SSS^2$, but with different sets $A_{i,\z},B_{i,\z}$. Note in addition that this neighborhood only controls functions on the compact set $K_{i} \cup \{ i-2,i+2\}\times \SSS^2$.

Repeat the construction for all $i$ to obtain an open $F$-neighborhood of functions which satisfy $G(K_i,s) \subseteq A_{i,\z}$ and $G(\R\setminus U_i,s) \subseteq B_{i,\z}$ for all $i\in \mathbb{Z}$, some $\z \in \SSS^2$ and disjoint open subsets $A_{i,\z}, B_{i,\z}$. Applying \Cref{para:emb} and \Cref{rem:intersect} we can shrink the open neighborhood to obtain an open neighborhood $\Omega$ of $F$ such that every $G(\cdot,s)|_{U_i},i\in \mathbb{Z}, s\in \SSS^2$ is an embedding.
Let us show now that the partial maps for every element in $\Omega$ are injective. If $x \in K_i,y \in \R$ are two distinct points and $s \in \SSS^2$, we distinguish two cases: If $y \in U_i$, then $G(x,s)\neq G(y,s)$ as $G(\cdot,s)$ is an embedding on $U_i$. If $y \in \R \setminus U_i$, we then have $G(x,s) \in A_{i,\z}$ and $G(y,s) \in B_{i,\z}$ for some $\z$. Since these sets are disjoint, we see that also $G(x,s)\neq G(y,s)$. This concludes the proof.  
\end{proof}

It is possible to extend the proof of \Cref{prop:tech_NUgroup} to groups of smooth maps defined on $\R^n \times K$ (replacing $\R$ with a multidimensional space and $\SSS^2$ with a compact manifold). While it is apparent from the proof that we have only exploited the compactness of $\SSS^2$, the argument for $\R^n$ is slightly more involved. While groups with a compact manifold $K$ replacing $\SSS^2$ have been considered in the literature (corresponding to different spacetime dimensions), the extension to $\R^n$ does not seem to be physically relevant. We thus omit the discussion of any details to this extension.

\subsection*{Auxiliary results for the inversion in $\mathcal{N}$}
In this subsection, we provide the necessary details to establish smoothness of the inversion mapping for the $\mathcal{N}$-component of the Newman--Unti group. By construction, this is just a parameterized version of the inversion map in the Lie group $\Diff^+ (\R)$.
However, since manifolds of mappings on non-compact manifolds are somewhat delicate, there seems to be no quick way of leveraging the fact that we already know that inversion in $\Diff^+ (\R)$ is smooth. 
Instead, we have to mimic the proof for the smoothness of inversion in the diffeomorphism group as outlined in \cite{Glo05} but with an additional parameter inserted. We start with some preparation:

\begin{setup}\label{setup:subspaces}
 Let $K$ be a compact manifold and $p \colon \R \times K \rightarrow \R , (x,k) \mapsto x$. Consider the subset 
 $$G = \left\{F \in C^\infty (\R \times K) \middle| \substack{F(\cdot,k) \in \Diff^+ (\R) \forall k\in K\text{ and } \\ \exists L \subseteq \R \text{ compact with }(F-p)|_{(\R \setminus L) \times K} = 0}\right\}.$$
 Then $G$ is an open subset of $C^\infty_{fS} (\R \times K)$.
 To see this, note that \Cref{lem:step2} entails that the set $\{F \in C^{\infty}(\R\times K)\mid F(\cdot ,k) \in \Diff^+ (\R), \forall k \in K\}$ is open in the fine very strong topology. In addition, the set 
 $$\Omega_p \coloneq \{F \in C^{\infty}(\R\times K)\mid \exists L \subseteq \R \text{ compact with }(F-p)|_{(\R \setminus L) \times K} = 0\}$$ is open in this topology. Thus $G$ is open as the intersection of two open sets.
 
 Let us fix some subspaces of $C^\infty (\R \times K)$: First of all we consider the open subset
 $$C^\infty_c (\R \times K) \coloneq \{F \in C^\infty (\R\times K)\mid \exists L \subseteq \R \text{ with } F|_{(\R\setminus L) \times K} \equiv 0\}$$
 and note that the fine very strong topology turns it into a locally convex vector space.
 Furthermore, we define for every $L \subseteq \R$ compact the subset 
$$C^\infty_L (\R \times K) \coloneq \{F \in C^\infty_c (\R \times K) \mid F|_{(\R \setminus L) \times K} \equiv 0\}.$$
Then $C^\infty_L (\R \times K)$ is a closed locally convex subspace of $C^\infty_c (\R \times K)$ and one can prove that $C^\infty_c (\R \times K)$ is the locally convex inductive limit of the spaces $C^\infty_L (\R \times K)$ (partially ordered by obvious inclusion as $L$ runs through all compact subset of $\R$). Moreover, the subspace topology on $C^\infty_L (\R \times K)$ is a \Frechet\ topology, i.e.~a the space is a complete metrizable space. See \cite{Glo02} for more information.
\end{setup}

\begin{setup}\label{localsetup}
Arguing as in \Cref{lem:step2} we define another open subset of $C^\infty_{fS} (\R \times K)$: 
$$U_0 \coloneq \left\{F \in C^\infty_c (\R \times K) \middle| \sup_{(x,k) \in \R \times K}\left|\frac{\partial}{\partial x} F(x,k)\right| < 1\right\}.$$ 
Observe that the map 
 $$\Phi\colon  \Omega_p \rightarrow C^\infty_c (\R \times K) ,\quad F \mapsto F-p$$
 is a homeomorphism, mapping $G$ onto an open subset of the locally convex space. Indeed the restriction of $\Phi$ to $G$ is a chart for the manifold $G$.
 We define the open subset $G_0 = \Phi^{-1} (U_0) \cap G$ of $G$ and observe that it is a neighborhood of $p$.
 
 We now build a local model for the pointwise inversion map $F^{-1} (x,k) \coloneq (F(\cdot, k)^{-1})(x)$, by defining
 $$U_0 \cap \Phi(G) \rightarrow C_c^\infty (\R \times K), \gamma \mapsto \gamma^\ast \coloneq \Phi((\Phi^{-1}(\gamma))^{-1})$$
 Obviously inversion in the group $G$ will be smooth in a neighborhood of $p$ if and only if the local inversion map is smooth on $U_0 \cap \Phi(G) = \Phi(G_0)$.
\end{setup}

\begin{setup}
 A quick computation yields the validity of the following formulae for $\gamma, \eta \in \Phi(G_0)$:
 \begin{align}
  \Phi^{-1}(\Phi (\gamma) \circ (\Phi(\eta) \times \id_K)) = \eta + \gamma\circ ((p+\eta) \times \id_K) \label{eq:loc_mult}\\
  \gamma^\ast + \gamma\circ ((p+\gamma^\ast) \times \id_K) = 0 \text{ and } \gamma + \gamma^\ast \circ ((p+\gamma))\times \id_K) = 0, \label{eq:loc_inv}
 \end{align}
where \eqref{eq:loc_inv} is a direct consequence of \eqref{eq:loc_mult}.
\end{setup}

\begin{lem}\label{lem:res:smooth}
 For every $L \subseteq \R$ compact, the mapping 
 $$\iota_L \colon \Phi(G_0) \cap C^\infty_L (\R \times K) \rightarrow C^\infty_L (\R \times K), \quad \gamma \mapsto \gamma^\ast$$
 makes sense and is smooth.
\end{lem}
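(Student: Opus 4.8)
The strategy is to reduce everything to a scalar implicit-function argument, using that inversion in $\mathcal{N}$ is carried out separately in the $\R$-direction for each fixed point of $K$. First I would check that $\iota_L$ is well defined. Let $\gamma \in \Phi(G_0) \cap C^\infty_L(\R \times K)$ and set $F \coloneq p + \gamma$, so that each $F(\cdot,k) \in \Diff^+(\R)$ equals the identity for arguments outside the compact set $L$. For $x \notin L$ and every $k$ we then have $F(x,k) = x$, and since $F(\cdot,k)$ is injective its inverse fixes $x$ as well; hence $\gamma^\ast(x,k) = F^{-1}(x,k) - x = 0$ for all $x \notin L$. As $\gamma^\ast = F^{-1} - p$ is smooth (the inverse is smooth in all variables by the implicit function theorem, as noted in \Cref{defn:mathcalN}), it indeed lies in $C^\infty_L(\R \times K)$.

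For smoothness I would invoke the exponential law. By \Cref{setup:subspaces} the subspace topology on the closed subspace $C^\infty_L(\R \times K)$ coincides with the compact open $C^\infty$-topology, so \cite[Theorem A]{AS15} applies and $\iota_L$ is smooth if and only if its adjoint
$$\iota_L^\wedge \colon \left(\Phi(G_0) \cap C^\infty_L(\R \times K)\right) \times (\R \times K) \rightarrow \R, \quad (\gamma,(x,k)) \mapsto \gamma^\ast(x,k)$$
is smooth. This is the decisive step: it trades the infinite-dimensional unknown $\gamma^\ast$ for its scalar point-values, so that the implicit equation to be solved has a finite-dimensional unknown.

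Indeed, \eqref{eq:loc_inv} reads pointwise as $w + \gamma(x+w,k) = 0$ with $w = \gamma^\ast(x,k)$. I would therefore consider the map
$$\Psi \colon \left(\Phi(G_0) \cap C^\infty_L(\R \times K)\right) \times (\R \times K) \times \R \rightarrow \R, \quad (\gamma,(x,k),w) \mapsto w + \gamma(x+w,k),$$
which is smooth, being built from the smooth evaluation on $C^\infty_L(\R \times K)$ (cf.\ \cite[Corollary 11.6]{Michor80}) and smooth finite-dimensional maps. Its derivative in the scalar unknown is $\partial_w \Psi(\gamma,(x,k),w) = 1 + (\partial_1\gamma)(x+w,k)$, where $\partial_1\gamma$ denotes the derivative of $\gamma$ in its $\R$-slot. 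Since membership in $\Phi(G_0)$ forces $\sup |\partial_1\gamma| < 1$, this quantity lies in $]0,2[$, so it is an isomorphism of $\R$; the same sign condition makes $w \mapsto \Psi(\gamma,(x,k),w)$ strictly increasing, whence the solution of $\Psi = 0$ is unique and equal to $\gamma^\ast(x,k)$.

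Because the unknown $w$ is real while the parameter $(\gamma,(x,k))$ varies over the product of an open subset of the locally convex space $C^\infty_L(\R \times K)$ with the finite-dimensional manifold $\R \times K$, I can finish with the implicit function theorem with parameters in a locally convex space \cite[Proposition 2.1]{hg06} (applied in charts of $K$). It produces, locally around each point, a smooth solution $(\gamma,(x,k)) \mapsto w$, which by the uniqueness just noted coincides with $\gamma^\ast(x,k)$; as smoothness is a local condition this shows that $\iota_L^\wedge$, and hence $\iota_L$, is smooth. The main obstacle is precisely this last point: no general implicit function theorem is available on the \Frechet\ space $C^\infty_L(\R \times K)$, and the argument succeeds only because inversion acts pointwise in $\R$, collapsing the implicit equation to a scalar one whose unknown is finite-dimensional, so that the troublesome infinite-dimensionality survives solely in the harmless parameter.
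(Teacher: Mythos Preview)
Your argument is correct and follows essentially the same route as the paper: pass to the adjoint map on $C^\infty_L(\R\times K)$ (whose topology is the compact-open $C^\infty$-topology), rewrite inversion via the scalar implicit equation $w+\gamma(x+w,k)=0$, observe that $\partial_w$ of the defining map is invertible because $\sup|\partial_1\gamma|<1$ on $\Phi(G_0)$, and conclude with the implicit function theorem with locally convex parameters. The only quibble is the citation: you want the smooth (not merely $C^{0,\infty}$) version of the implicit function theorem with parameters, which in the paper's reference \cite{hg06} is Theorem~2.3 rather than Proposition~2.1.
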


\begin{proof}
 Let us show that $\iota_L(\gamma)|_{(\R \setminus L) \times K} \equiv 0$. If we fix $k \in K$ then we need to prove that the map $\gamma^\ast (\cdot,k)$ vanishes outside of $L$ if $\gamma$ vanishes outside of $L$. Due to our definition of $U_0$ this pointwise property is a consequence of \cite[Step 3 in the proof of Lemma 5.1.]{Glo05}. Thus $\iota_L$ makes sense and takes its image in $C^\infty_L (\R \times K)$.
 To establish smoothness of $\iota_L$, we consider the associated map 
 $$\iota_L^\vee \colon \Phi(G_0) \cap C^\infty_L (\R \times K) \times (\R \times K) \rightarrow \R ,\quad (\gamma , x,k) \mapsto \gamma^{\ast} (x,k).$$
 By construction every $\gamma \in C^\infty_L (\R \times K)$ vanishes outside of the compact set $L \times K \subseteq \R \times K$. Thus \cite[Lemma C.3]{AaS19} implies that $\iota_L$ will be smooth if $\iota_L^\vee$ is smooth.
 However, for $\iota_L^\vee (\gamma) = \gamma^\ast$ we obtain from \eqref{eq:loc_inv} the implicit equation
 \begin{align}\label{eq:implicit_inverse}
  \gamma^\ast (x,k) + \gamma(x+\gamma^\ast (x,k),k) = 0,
 \end{align}
 which depends on the parameter $\gamma \in \Phi(G_0) \subseteq C^\infty_L (\R \times K)$. Now the topology of $C^\infty_L (\R \times K)$ coincides with the subspace topology induced by the compact open $C^\infty$-topology on $C^\infty (\R \times K)$ (cf.~\cite[Remark 4.5]{HaS17}). In particular, \cite{AS15} entails that the evaluation map $\ev \colon C^\infty_L (\R \times K) \times \R \times K \rightarrow \R , (\gamma,x,k) \mapsto \gamma(x,k)$ is smooth. 
 Hence the left-hand side is given by the smooth function 
 $$H \colon \Phi(G_0) \times \R \times K \times \R \rightarrow \R, \quad H(\eta,x,k,Z) = Z +\ev(\gamma, (x+Z,k)).$$
 Taking the derivative with respect to the $Z$ variable we obtain the continuous linear map
 $$d_4H (\gamma,x,k,Z;\bullet) = \bullet \cdot \left(1+ \frac{\partial}{\partial x}\gamma(x+Z,k)\right).$$
 Since $\gamma \in \Phi (G_0) \subseteq U_0$, we see that $\lVert \id_\R - d_4H (\eta,x,k,Z;\bullet)\rVert_{\text{op}} = |\frac{\partial}{\partial x}\gamma(x+Z,k)| < 1$. 
 In other words, $d_4H (\gamma,x,k,Z)$ is invertible. We can thus again apply the implicit function theorem with parameters in locally convex spaces \cite[Theorem 2.3]{hg06} to deduce that $\iota_L^\vee$ is smooth. 
\end{proof}

To extend the smoothness assertion now from the closed subspace $C^\infty_L (\R \times K)$ to all of $C^\infty_c (\R \times K)$ we employ almost local mappings, \Cref{setup:almostlocal} (here without parameter).

\begin{lem}\label{lem:almostloc}
 Let $R>0$ and $O_R$ be as in Step 2 of the proof of \Cref{prop:NisLie}. The restriction of the inversion map 
 $$\Phi (G_0) \cap O_R  \rightarrow C^\infty_c (\R \times K), \gamma \mapsto \gamma^\ast$$
 is an almost local map.
\end{lem}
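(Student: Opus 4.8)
The plan is to verify directly the defining condition of an almost local map, reading off everything from the implicit equation \eqref{eq:loc_inv} that characterizes $\gamma^\ast$, namely
$$\gamma^\ast(x,k) + \gamma\bigl(x + \gamma^\ast(x,k),k\bigr) = 0, \qquad (x,k) \in \R \times K.$$
The crucial point is that on $O_R$ the value $\gamma^\ast(x,k)$ is determined by the restriction of $\gamma$ to the single slice $]x-R,x+R[\times\{k\}$. First I would record the uniform bound: since $\gamma \in O_R$ satisfies $\sup|\gamma| < R$, the implicit equation gives $|\gamma^\ast(x,k)| = |\gamma(x+\gamma^\ast(x,k),k)| < R$, so also $\sup|\gamma^\ast| < R$. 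This is exactly the estimate that makes the locality argument work, and it is the reason one restricts to $O_R$ rather than to all of $\Phi(G_0)$.

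Next I would construct the two locally finite families required by \Cref{setup:almostlocal} (in its parameter-free form, i.e.\ taking the parameter manifold to be a point). Indexing by $A = \Z$, set $V_n \coloneq\ ]n,n+2[\times K$ and $U_n \coloneq\ ]n-R,n+2+R[\times K$. Since $K$ is compact these are relatively compact open subsets of $\R \times K$; both families are locally finite (for fixed $x$ only finitely many of the bounded intervals contain $x$) and cover $\R \times K$ because $\bigcup_{n\in\Z}]n,n+2[ = \R$. This choice is entirely parallel to the cover used in Step 2 of the proof of \Cref{thm:NUunit_is_Lie}.

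The main step is then a locality-plus-uniqueness argument of the same flavour as the flow argument in Step 5 of \Cref{Nisregular}. Fix $n \in \Z$ and suppose $\gamma|_{U_n} = \eta|_{U_n}$ for $\gamma,\eta \in \Phi(G_0)\cap O_R$. Let $(x,k)\in V_n$ and put $z \coloneq \gamma^\ast(x,k)$. By the bound above $|z|<R$, hence $x+z \in\ ]x-R,x+R[\ \subseteq\ ]n-R,n+2+R[$, so $(x+z,k) \in U_n$ and therefore $\eta(x+z,k) = \gamma(x+z,k)$. Substituting into the implicit equation shows that $z$ also solves $z + \eta(x+z,k) = 0$; since $\gamma,\eta\in\Phi(G_0)$ means $p+\gamma(\cdot,k)$ and $p+\eta(\cdot,k)$ lie in $\Diff^+(\R)$, this solution is unique, whence $\eta^\ast(x,k) = z = \gamma^\ast(x,k)$. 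Thus $\gamma^\ast|_{V_n} = \eta^\ast|_{V_n}$, which is precisely the almost locality condition on $\Phi(G_0)\cap O_R$.

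The only genuine subtlety, and the step one must get exactly right, is ensuring that the argument $x + \gamma^\ast(x,k)$ of $\gamma$ never leaves the region $U_n$ on which $\gamma$ and $\eta$ are known to agree; this is guaranteed by the uniform bound $\sup|\gamma^\ast| < R$ valid on $O_R$, after which the uniqueness of the inverse does the rest. Everything else is routine verification that the two families meet the relative compactness and local finiteness requirements of \Cref{setup:almostlocal}.
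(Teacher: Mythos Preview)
Your proof is correct and follows essentially the same approach as the paper. The paper phrases the key estimate as a ball containment $B_r(x)\subseteq(p+\gamma)(B_{r+R}(x)\times K)$ and then computes $\gamma^\ast(y,k)=-\gamma(y_k,k)$ directly, whereas you read off the equivalent bound $\sup|\gamma^\ast|<R$ from the implicit equation and invoke uniqueness via the diffeomorphism property; the covering families differ only cosmetically.
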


\begin{proof}
Denote for $x \in \R$ by $B_r (x)$ the $r$-ball around $x$ in $\R$

 \textbf{Step 1:} \emph{For all $r>0$, $x\in \R$ and $\gamma \in O_R$ we have $B_r(x) \subseteq (p+\gamma) (B_{r+R} (x) \times K)$.}
 Let $y \in B_r (x)$ and recall that $p+\gamma = \Phi^{-1}(\gamma) \in G_0$, whence for every fixed $k \in K$ the map $\Phi^{-1} (\gamma) (\cdot,k)$ is a diffeomorphism of $\R$ there is a unique element $y_k$ such that 
 $$y = p(y_k,k)+\gamma(y_k,k) = y_k + \gamma(y_k,k), \text{ and }\quad |y_k-x| = |y-\gamma(y_k,k)-x| \leq r+R,$$  
 where we exploited that $\gamma \in O_R$. In particular, $y_k \in B_{r+R} (x)$ for all $k \in K$.
 
 \textbf{Step 2:} \emph{If $R,r >0$ and $x \in \R$, then for all $\gamma , \eta \in O_R$ such that $\gamma|_{B_{r+R}(x) \times K} = \eta|_{B_{r+R}(x) \times K}$ we have $\gamma^\ast|_{B_{r}(x) \times K} = \eta^\ast|_{B_{r+R}(x) \times K}$.}
 Let $y \in B_r(x)$. By Step 1 we find for every $k \in K$ a unique element $y_k \in B_{r+R}(x)$ such that $y_k + \gamma (y_k,k) = y$. Then from \eqref{eq:loc_inv} we deduce that
 $$\gamma^\ast (y,k) = \gamma^\ast (y_k + \gamma (y_k,k),k)= - \gamma(y_k,k) = -\eta(y_k,k)=\eta^\ast(y_k + \eta (y_k,k),k)=\eta^\ast (y,k)$$
 
 \textbf{Step 3:} \emph{Inversion is almost local}. Define for $z \in \mathbb{Z}$ the open sets $V_z \coloneq B_2 (z) \times K$, $W_z \coloneq B_3 (z) \times K$, $U_z \coloneq B_{2+R}(z) \times K$ and $X_z \coloneq B_{3+R}(z) \times K$.
 We deduce from Step 2 that the restriction of the inversion map to $O_R \cap \Phi(G_0)$ together with the families $\{U_z\}_{z\in\mathbb{Z}}$, $\{X_z\}_{z\in\mathbb{Z}}$, $\{V_z\}_{z\in\mathbb{Z}}$ and $\{W_z\}_{z \in \mathbb{Z}}$ satisfy the requirements in the definition of an almost local mapping.
\end{proof}
\end{appendix}

\addcontentsline{toc}{section}{References}
\bibliography{NU_project_v3}

\newcommand{\etalchar}[1]{$^{#1}$}
\def\polhk#1{\setbox0=\hbox{#1}{\ooalign{\hidewidth
  \lower1.5ex\hbox{`}\hidewidth\crcr\unhbox0}}}
\begin{thebibliography}{{McC}92}
\providecommand{\url}[1]{\texttt{#1}}
\providecommand{\urlprefix}{URL }
\expandafter\ifx\csname urlstyle\endcsname\relax
  \providecommand{\doi}[1]{doi:\discretionary{}{}{}#1}\else
  \providecommand{\doi}{doi:\discretionary{}{}{}\begingroup
  \urlstyle{rm}\Url}\fi
\providecommand{\eprint}[2][]{\url{#2}}

\bibitem[A{\etalchar{+}}16]{Abbott_et-al}
Abbott, B.~P. et~al.
\newblock \emph{{Observation of Gravitational Waves from a Binary Black Hole
  Merger}}.
\newblock Phys. Rev. Lett. \textbf{116} (2016)(6):061102.
\newblock \doi{10.1103/PhysRevLett.116.061102}.
\newblock ArXiv:1602.03837 [gr-qc]

\bibitem[AE18]{Alessio:2017lps}
Alessio, F. and Esposito, G.
\newblock \emph{{{O}n the structure and applications of the
  {B}ondi--{M}etzner--{S}achs group}}.
\newblock Int. J. Geom. Meth. Mod. Phys. \textbf{15} (2018)(02):1830002.
\newblock \doi{10.1142/S0219887818300027}.
\newblock ArXiv:1709.05134 [gr-qc]

\bibitem[AGS20]{AaGaS20}
{Amiri}, H., {Gl\"ockner}, H. and {Schmeding}, A.
\newblock \emph{{Lie groupoids of mappings taking values in a Lie groupoid}}.
\newblock {Arch. Math., Brno} \textbf{56} (2020)(5):307--356.
\newblock \doi{10.5817/AM2020-5-307}.
\newblock ArXiv:1811.02888 [math.DG]

\bibitem[AS81]{Ashtekar_Streubel}
Ashtekar, A. and Streubel, M.
\newblock \emph{{Symplectic Geometry of Radiative Modes and Conserved
  Quantities at Null Infinity}}.
\newblock Proc. Roy. Soc. Lond. A \textbf{376} (1981):585--607.
\newblock \doi{10.1098/rspa.1981.0109}

\bibitem[AS15]{AS15}
Alzaareer, H. and Schmeding, A.
\newblock \emph{Differentiable mappings on products with different degrees of
  differentiability in the two factors}.
\newblock Expo. Math. \textbf{33} (2015)(2):184--222.
\newblock \doi{10.1016/j.exmath.2014.07.002}.
\newblock ArXiv:1208.6510 [math.FA]

\bibitem[AS19]{AaS19}
{Amiri}, H. and {Schmeding}, A.
\newblock \emph{{A differentiable monoid of smooth maps on Lie groupoids}}.
\newblock {J. Lie Theory} \textbf{29} (2019)(4):1167--1192.
\newblock ArXiv:1706.04816 [math.GR]

\bibitem[Ash14]{Ashtekar}
Ashtekar, A.
\newblock \emph{{Geometry and Physics of Null Infinity}} 2014.
\newblock ArXiv:1409.1800 [gr-qc]

\bibitem[BBM62]{Bondi_VdBurg_Metzner}
Bondi, H., van~der Burg, M. G.~J. and Metzner, A. W.~K.
\newblock \emph{{Gravitational waves in general relativity. 7. Waves from
  axisymmetric isolated systems}}.
\newblock Proc. Roy. Soc. Lond. A \textbf{269} (1962):21--52.
\newblock \doi{10.1098/rspa.1962.0161}

\bibitem[BL12]{Barnich:2011ty}
Barnich, G. and Lambert, P.-H.
\newblock \emph{{A Note on the {N}ewman--{U}nti group and the {BMS} charge
  algebra in terms of {N}ewman--{P}enrose coefficients}}.
\newblock Adv. Math. Phys. \textbf{2012} (2012):197385.
\newblock \doi{10.1155/2012/197385}.
\newblock ArXiv:1102.0589 [gr-qc]

\bibitem[BT10a]{Barnich:2010eb}
Barnich, G. and Troessaert, C.
\newblock \emph{{Aspects of the {BMS}/{CFT} correspondence}}.
\newblock JHEP \textbf{05} (2010):062.
\newblock \doi{10.1007/JHEP05(2010)062}.
\newblock ArXiv:1001.1541 [hep-th]

\bibitem[BT10b]{Barnich:2009se}
Barnich, G. and Troessaert, C.
\newblock \emph{{Symmetries of asymptotically flat 4 dimensional spacetimes at
  null infinity revisited}}.
\newblock Phys. Rev. Lett. \textbf{105} (2010):111103.
\newblock \doi{10.1103/PhysRevLett.105.111103}.
\newblock ArXiv:0909.2617 [gr-qc]

\bibitem[{Eng}89]{Eng89}
{Engelking}, R.
\newblock \emph{{General topology}} (Heldermann, 1989)

\bibitem[Fri18]{Friedrich:2017cjg}
Friedrich, H.
\newblock \emph{{Peeling or not peeling --- is that the question?}}
\newblock Class. Quantum Grav. \textbf{35} (2018)(8):083001.
\newblock \doi{10.1088/1361-6382/aaafdb}.
\newblock ArXiv:1709.07709 [gr-qc]

\bibitem[Gl{\"o}02]{Glo02}
Gl{\"o}ckner, H.
\newblock \emph{{Lie group structures on quotient groups and universal
  complexifications for infinite-dimensional Lie groups}}.
\newblock {J. Funct. Anal.} \textbf{194} (2002)(2):347--409

\bibitem[Gl{\"{o}}05]{Glo05}
Gl{\"{o}}ckner, H.
\newblock \emph{Diff({$\mathbb{R}^n$}) as a {M}ilnor-{L}ie group}.
\newblock {Math. Nachr.} \textbf{278} (2005)(9):1025--1032.
\newblock \doi{10.1002/mana.200310288}

\bibitem[Gl{\"{o}}06]{hg06}
Gl{\"{o}}ckner, H.
\newblock \emph{{Implicit functions from topological vector spaces to Banach
  spaces}}.
\newblock {Isr. J. Math.} \textbf{155} (2006):205--252.
\newblock ArXiv:math/0303320 [math.GM]

\bibitem[Gl{\"{o}}15a]{glo15}
Gl{\"{o}}ckner, H.
\newblock \emph{Measurable regularity properties of infinite-dimensional {L}ie
  groups} 2015.
\newblock ArXiv:1601.02568 [math.FA]

\bibitem[Gl{\"{o}}15b]{hg2015}
Gl{\"{o}}ckner, H.
\newblock \emph{{R}egularity properties of infinite-dimensional {L}ie groups,
  and semiregularity} 2015.
\newblock ArXiv:1208.0715 [math.FA]

\bibitem[Gl{\"{o}}22]{hgsemidirect}
Gl{\"{o}}ckner, H.
\newblock \emph{Semidirect products involving {L}ie groups of compactly
  supported diffeomorphisms} 2022.
\newblock In preparation.

\bibitem[{Gra}88]{Gra88}
{Grabowski}, J.
\newblock \emph{{Free subgroups of diffeomorphism groups}}.
\newblock {Fundam. Math.} \textbf{131} (1988)(2):103--121.
\newblock \doi{10.4064/fm-131-2-103-121}

\bibitem[{Han}20]{Han20}
{Hanusch}, M.
\newblock \emph{{The strong Trotter property for locally $\mu$-convex Lie
  groups}}.
\newblock {J. Lie Theory} \textbf{30} (2020)(1):25--32.
\newblock ArXiv:1802.08923 [math.FA]

\bibitem[HE73]{Hawking_Ellis}
Hawking, S.~W. and Ellis, G. F.~R.
\newblock \emph{{The Large Scale Structure of Space-Time}}.
\newblock Cambridge Monographs on Mathematical Physics (Cambridge University
  Press, 1973).
\newblock \doi{10.1017/CBO9780511524646}

\bibitem[{Hir}76]{Hir76}
{Hirsch}, M.~W.
\newblock \emph{{Differential topology}}, vol.~33 (Springer, New York, NY,
  1976)

\bibitem[HLMS15]{He_et-al}
He, T., Lysov, V., Mitra, P. and Strominger, A.
\newblock \emph{{BMS supertranslations and Weinberg\textquoteright{}s soft
  graviton theorem}}.
\newblock JHEP \textbf{05} (2015):151.
\newblock \doi{10.1007/JHEP05(2015)151}.
\newblock ArXiv:1401.7026 [hep-th]

\bibitem[HN12]{HaN12}
Hilgert, J. and Neeb, K.-H.
\newblock \emph{Structure and Geometry of Lie Groups}.
\newblock Springer Monographs in Mathematics (Springer, 2012).
\newblock \doi{10.1007/978-0-387-84794-8}

\bibitem[HS17]{HaS17}
Hjelle, E.~O. and Schmeding, A.
\newblock \emph{Strong topologies for spaces of smooth maps with
  infinite-dimensional target}.
\newblock Expositiones Mathematicae \textbf{35} (2017)(1):13--53.
\newblock \doi{10.1016/j.exmath.2016.07.004}.
\newblock ArXiv:1603.09127 [math.GN]

\bibitem[KM97]{KaM97}
{Kriegl}, A. and {Michor}, P.~W.
\newblock \emph{{The convenient setting of global analysis}}, vol.~53
  (Providence, RI: American Mathematical Society, 1997)

\bibitem[{Lee}13]{Lee13}
{Lee}, J.~M.
\newblock \emph{{Introduction to smooth manifolds}}, vol. 218 (New York, NY:
  Springer, 2013).
\newblock \doi{10.1007/978-1-4419-9982-5}

\bibitem[{McC}72]{McCar72a}
{McCarthy}, P.~J.
\newblock \emph{{Structure of the {B}ondi--{M}etzner--{S}achs group}}.
\newblock {J. Math. Phys.} \textbf{13} (1972):1837--1842

\bibitem[{McC}92]{McCar92}
{McCarthy}, P.~J.
\newblock \emph{{Real and complex asymptotic symmetries in quantum gravity,
  irreducible representations, polygons, polyhedra, and the \(A, D, E\)
  series}}.
\newblock {Philos. Trans. R. Soc. Lond., Ser. A} \textbf{338}
  (1992)(1650):271--299

\bibitem[Mic80]{Michor80}
Michor, P.~W.
\newblock \emph{Manifolds of differentiable mappings}, \emph{Shiva Mathematics
  Series}, vol.~3 (Shiva Publishing Ltd., Nantwich, 1980)

\bibitem[{Mil}84]{Milnor}
{Milnor, J.}
\newblock \emph{{R}emarks on infinite dimensional {L}ie groups}.
\newblock Relativity, Groups, and Topology II, Les Houches, 1983, B.S. DeWitt,
  R. Stora, Eds., Elsevier, Amsterdam  (1984)

\bibitem[{Muj}83]{Muj83}
{Mujica}, J.
\newblock \emph{{Spaces of continuous functions with values in an inductive
  limit}}.
\newblock {Functional analysis, holomorphy, and approximation theory, Proc.
  Semin., Rio de Janeiro 1979, Lect. Notes Pure Appl. Math. 83, 359--367}, 1983

\bibitem[New89]{Newman}
Newman, R.
\newblock \emph{{The global structure of simple space-times}}.
\newblock Commun. Math. Phys. \textbf{123} (1989):17--52.
\newblock \doi{https://doi.org/10.1007/BF01244016}

\bibitem[NU62]{Newman_Unti}
Newman, E.~T. and Unti, T. W.~J.
\newblock \emph{{Behavior of Asymptotically Flat Empty Spaces}}.
\newblock J. Math. Phys. \textbf{3} (1962)(5):891.
\newblock \doi{10.1063/1.1724303}

\bibitem[Pen63]{Penrose_1}
Penrose, R.
\newblock \emph{{A}symptotic properties of fields and space-times}.
\newblock Phys. Rev. Lett. \textbf{10} (1963):66--68

\bibitem[Pen64]{Penrose_2}
Penrose, R.
\newblock \emph{{C}onformal treatment of infinity}.
\newblock Relativity, groups and topology: the 1963 Les Houches lectures
  (1964).
\newblock Republished in Gen. Relativ. Gravit. 43:901--922, 2011.

\bibitem[Pen65]{Penrose_3}
Penrose, R.
\newblock \emph{{Z}ero rest mass fields including gravitation: {A}symptotic
  behavior}.
\newblock Proc. Roy. Soc. Lond. A \textbf{284} (1965):159

\bibitem[Pen68]{Penrose_4}
Penrose, R.
\newblock \emph{{S}tructure of space-time}.
\newblock In \emph{Battelle Rencontres}, pp. 121--235 (1968)

\bibitem[Pen74]{Penrose1974}
Penrose, R.
\newblock \emph{Relativistic Symmetry Groups}, pp. 1--58 (Springer Netherlands,
  Dordrecht, 1974).
\newblock \doi{10.1007/978-94-010-2144-9_1}

\bibitem[Pri21]{Prinz}
Prinz, D.
\newblock \emph{{Gravity-Matter Feynman Rules for any Valence}}.
\newblock Class. Quantum Grav. \textbf{38} (2021)(21):215003.
\newblock \doi{10.1088/1361-6382/ac1cc9}.
\newblock ArXiv:2004.09543 [hep-th]

\bibitem[PS22]{Prinz_Schmeding_1}
Prinz, D. and Schmeding, A.
\newblock \emph{{Lie Theory for Asymptotic Symmetries in General Relativity:
  The BMS Group}}.
\newblock Class. Quantum Grav. \textbf{39} (2022)(6):065004.
\newblock \doi{10.1088/1361-6382/ac4ae2}.
\newblock ArXiv:2106.12513 [gr-qc]

\bibitem[Ruz20]{Ruzz20}
Ruzziconi, R.
\newblock \emph{{On the Various Extensions of the {BMS} Group}}.
\newblock Ph.D. thesis, Brussels U. and Intl. Solvay Inst., Brussels 2020.
\newblock ArXiv:2009.01926 [hep-th]

\bibitem[Sac62]{Sachs}
Sachs, R.~K.
\newblock \emph{{Gravitational waves in general relativity. 8. Waves in
  asymptotically flat space-times}}.
\newblock Proc. Roy. Soc. Lond. A \textbf{270} (1962):103--126.
\newblock \doi{10.1098/rspa.1962.0206}

\bibitem[Sch15]{Sch15}
Schmeding, A.
\newblock \emph{The diffeomorphism group of a non-compact orbifold}.
\newblock Dissertationes Math. \textbf{507} (2015):179.
\newblock \doi{10.4064/dm507-0-1}.
\newblock ArXiv:1301.5551 [math.GR]

\bibitem[Sch22]{S2021}
Schmeding, A.
\newblock \emph{An introduction to infinite-dimensional differential geometry}
  2022.
\newblock To appear in Cambridge University Press; arXiv:2112.08114 [math.DG]

\bibitem[SSW75]{SaSaW75}
Schmidt, B., Sommers, P. and Walker, M.
\newblock \emph{A characterization of the {B}ondi--{M}etzner--{S}achs group}.
\newblock General Relativity and Gravitation \textbf{6} (1975):489--497.
\newblock \doi{10.1007/BF00762453}

\bibitem[Str14]{Strominger}
Strominger, A.
\newblock \emph{{On BMS Invariance of Gravitational Scattering}}.
\newblock JHEP \textbf{07} (2014):152.
\newblock \doi{10.1007/JHEP07(2014)152}.
\newblock ArXiv:1312.2229 [hep-th]

\bibitem[Wei65]{Weinberg}
Weinberg, S.
\newblock \emph{{Infrared photons and gravitons}}.
\newblock Phys. Rev. \textbf{140} (1965):B516--B524.
\newblock \doi{10.1103/PhysRev.140.B516}

\end{thebibliography}
\end{document}